\documentclass[11pt]{article}

\usepackage[a4paper,total={7.0in,10.0in}]{geometry}

\usepackage[utf8]{inputenc}
\usepackage[T1]{fontenc}
\usepackage[american]{babel}

\usepackage{setspace}
\usepackage[bottom]{footmisc}

\usepackage{mathpazo}
\usepackage{microtype}

\usepackage{amsmath,amsthm,amssymb}
\usepackage{bm}
\usepackage{bbm}

\allowdisplaybreaks

\newcommand*\diff{\mathop{}\!\mathrm{d}}

\DeclareMathOperator*{\argmin}{arg\,min}

\usepackage{graphicx}
\usepackage{float}
\usepackage{booktabs}
\usepackage{tabularx}
\usepackage{array}
\usepackage{multirow}
\usepackage{subcaption}
\usepackage{caption}

\usepackage{enumitem}
\usepackage[toc,page]{appendix}
\usepackage[gen]{eurosym}
\usepackage{titling}
\usepackage{authblk}

\setlist[itemize]{topsep=0.25em,itemsep=0.1em,parsep=0em}
\setlist[enumerate]{topsep=0.25em,itemsep=0.1em,parsep=0em}

\usepackage[natbibapa]{apacite}
\usepackage{xcolor}

\definecolor{linkblue}{RGB}{0,45,110}

\usepackage[
colorlinks=true,
linkcolor=linkblue,
citecolor=linkblue,
urlcolor=linkblue
]{hyperref}

\newcommand*{\secref}[1]{\hyperref[{#1}]{Section~\ref*{#1}}}
\newcommand*{\tabref}[1]{\hyperref[{#1}]{Table~\ref*{#1}}}
\newcommand*{\figref}[1]{\hyperref[{#1}]{Figure~\ref*{#1}}}
\newcommand*{\appref}[1]{\hyperref[{#1}]{Appendix~\ref*{#1}}}
\newcommand*{\eqrefnew}[1]{\hyperref[{#1}]{Equation~(\ref*{#1})}}

\usepackage{soul}
\usepackage[normalem]{ulem}
\setstcolor{red}
\usepackage{comment}

\newcommand{\calA}{\mathcal{A}}
\newcommand{\calB}{\mathcal{B}}
\newcommand{\calC}{\mathcal{C}}
\newcommand{\calD}{\mathcal{D}}
\newcommand{\calF}{\mathcal{F}}

\newcommand{\calN}{\mathcal{N}}

\newcommand{\calV}{\mathcal{V}}

\newcommand{\vdelta}{\bm{\delta}}

\newtheoremstyle{myplain}
{5pt}
{5pt}
{\itshape}
{}
{\bfseries}
{.}
{.5em}
{}

\newtheoremstyle{mydefinition}
{5pt}
{5pt}
{\normalfont}
{}
{\bfseries}
{.}
{.5em}
{}

\theoremstyle{myplain}

\newtheorem{theorem}{Theorem}
\newtheorem{lemma}[theorem]{Lemma}

\newtheorem{corollary}[theorem]{Corollary}

\theoremstyle{mydefinition}

\newtheorem{assumption}{Assumption}

\newtheorem{example}{Example}

\newtheorem*{remark}{Remark}

	\newcommand{\appendixnumbering}{%
		\setcounter{equation}{0}%
		\setcounter{theorem}{0}%
		\setcounter{assumption}{0}%
		\setcounter{condition}{0}%
		\setcounter{definition}{0}%
		\setcounter{example}{0}%
		\setcounter{table}{0}%
		\setcounter{figure}{0}%
		\numberwithin{equation}{section}%
		\numberwithin{theorem}{section}%
		\numberwithin{assumption}{section}%
		\numberwithin{condition}{section}%
		\numberwithin{definition}{section}%
		\numberwithin{example}{section}%
		\numberwithin{table}{section}%
		\numberwithin{figure}{section}%
	}
	
	\makeatletter
	\renewcommand\paragraph{\@startsection{paragraph}{4}{\z@}%
		{3.25ex \@plus1ex \@minus.2ex}%
		{-1em}%
		{\normalfont\itshape}}
	\makeatother 
	
\begin{document}

			
			\title{From dense grids to valid inference: Accounting for regularization bias in nonparametric random coefficient models}
			
			\renewcommand\Affilfont{\normalfont\footnotesize}
			\author{
				Lingwei Kong\textsuperscript{a},
				Maximilian Osterhaus\textsuperscript{a},
				and Michael Pen\textsuperscript{a}
			}

			\date{July 2026}
			
			\maketitle

			{\setstretch{.8}

				
				\vspace{1cm}
				
\begin{abstract}
\noindent
This paper develops an inference procedure for average functionals of random-coefficient distributions, such as mean willingness-to-pay and average elasticities, when the distribution is estimated nonparametrically using the penalized fixed-grid estimator of \citet{heiss2022}. We establish asymptotic normality of the corresponding penalized plug-in estimator centered at the functional evaluated at the penalized pseudo-true value and propose a confidence interval that accounts for the regularization bias. Our method applies to a broad class of linear and nonlinear functionals and allows researchers to use dense grids to reduce approximation bias while maintaining valid inference. Monte Carlo simulations show that the proposed intervals achieve coverage close to the nominal level while remaining informative in finite samples. An empirical application to travel mode demand illustrates that flexible nonparametric specifications can yield economically meaningful differences relative to standard parametric models.

\vspace{1cm}
\noindent
\textit{\textbf{Keywords: } random coefficients;  nonparametric estimation; inference.} \\ 
\noindent
\textit{\textbf{JEL Classification: } C12, C14, C25} \\ 

\end{abstract}

}

\vfill
\hrule
\vspace{5mm}
\noindent {\footnotesize $^a$ University of Groningen, Nettelbosje 2, 9747 AE Groningen, The Netherlands.}\\
\noindent {\footnotesize Correspondence to: \href{mailto:m.d.pen@rug.nl}{m.d.pen@rug.nl}. (M.D. Pen).}	

\onehalfspacing


\newpage

\section{Introduction}

Random-coefficients (RC) models are widely used in economics to accommodate heterogeneous preferences across economic agents. In these models, individual-level parameters vary according to an unknown distribution that researchers aim to recover from the data. The objects of central interest in applications are typically functionals of the RC distribution, such as willingness-to-pay measures, demand elasticities, or predicted outcomes under counterfactual policies. This paper develops bias-aware inference for such functionals when the RC distribution is estimated nonparametrically using the penalized fixed-grid estimator of \citet{heiss2022}. We characterize the sampling distribution of the penalized plug-in estimator and construct confidence intervals that incorporate a feasible upper bound for the bias induced by regularization. These inferential needs arise in applications across environmental economics \citep[e.g.,][]{faure2021}, public economics \citep[e.g.,][]{simonov2022}, industrial organization \citep[e.g.,][]{dubois2020}, and health economics \citep[e.g.,][]{heiss2021}.

Empirical implementations of RC models commonly impose a parametric distribution on unobserved preference heterogeneity, most often a normal or log-normal distribution. Such assumptions are convenient and supported by standard software, but they restrict the shape and tails of the heterogeneity distribution. When these restrictions are misspecified, they can substantially affect the economic quantities computed from the model. For example, \citet{miravete2022} show how parametric restrictions on heterogeneity constrain the curvature of demand functions and the range of estimated elasticities, while \citet{compiani2022} emphasizes that counterfactual predictions can be sensitive to the assumed distribution of heterogeneous preferences. These concerns motivate the use of nonparametric estimators, which estimate the RC distribution under weaker ex ante restrictions and accommodate a wider range of feasible distributional shapes.

A computationally tractable nonparametric approach is the fixed-grid estimator introduced by \citet{bajari2007} and further developed by \citet{fox2011}. It approximates the unknown RC distribution by a discrete distribution over prespecified support points and estimates the probability mass assigned to each point. \citet{fox2016} provide convergence results for the estimated distribution, and the approach has been used in empirical applications such as \citet{nevo2016} and \citet{blundell2020}. Other flexible approaches include the likelihood-based estimators of \citet{train2008} and \citet{train2016}, based on discrete mixtures and approximations using polynomials, splines, or step functions, and the Bayesian hierarchical models of \citet{rossi2005} and \citet{burda2008}, based on countably infinite mixtures of normals. Bayesian methods provide posterior inference for features of the RC distribution, although their implementation typically relies on Markov chain Monte Carlo and can be computationally demanding in large discrete-choice applications. Our analysis focuses on the distinct inferential issues created by regularization in the fixed-grid approach.

An important trade-off researchers face when applying the fixed-grid estimator is the density of the grid. Within an appropriately chosen support domain, a finer grid can reduce the approximation error generated by representing a flexible distribution with finitely many support points. On a dense grid, however, choice probabilities evaluated at nearby support points become difficult to distinguish, producing severe multicollinearity and unstable estimates of the probability weights. \citet{heiss2022} analyze support recovery in this setting and propose adding an $\ell_2$-penalty to stabilize estimation. The penalty can make substantially denser grids feasible, but it also introduces regularization bias into plug-in estimates of the functionals of interest. Relatedly, \citet{escanciano2023} shows that some features of unobserved heterogeneity, including quantiles in relevant settings, can be irregularly identified. Thus, a coarse grid can leave non-negligible approximation bias, whereas a dense grid may call for regularization whose inferential consequences cannot be ignored.

This trade-off has a direct implication for inference. Because the penalty changes the criterion, the penalized plug-in estimator is centered around the functional evaluated at a penalized pseudo-true parameter rather than the corresponding unpenalized fixed-grid target. An interval that accounts only for sampling uncertainty, therefore, need not cover the latter target. Some prominent fixed-grid applications use bootstrap-based inference for functionals obtained from counterfactual simulations \citep[e.g.,][]{nevo2016, blundell2020}. In the penalized setting studied here, however, resampling the estimator without an explicit bias allowance need not account for the shift between the penalized and unpenalized targets. More generally, \citet{fang2018} show that standard bootstrap procedures can fail for directionally differentiable transformations. A related literature develops inference for estimators subject to inequality or boundary constraints \citep[e.g.,][]{andrews1999estimation, fang2023inference, fan2023wald}. Those methods address important features of the unpenalized fixed-grid problem and can apply to linear functionals in some settings, but they do not directly address the additional displacement induced by the $\ell_2$-penalty.

We address this problem by adapting the sieve-inference framework of \citet{chen2014sieve} and \citet{chen2015sieve} to the penalized fixed-grid estimator of \citet{heiss2022}. The support-recovery result in \citet[Theorem~1]{heiss2022} motivates conducting the local analysis on the active set of nonzero probability weights. We complement that result with active-set stability conditions under which the penalized and unpenalized fixed-grid targets share the same active face of the simplex and local perturbations remain on that face with probability approaching one. Under these conditions, binding nonnegativity constraints do not affect the first-order local expansion through varying zero-weight directions. Within this framework, we derive the asymptotic distribution of the plug-in estimator around the functional evaluated at the penalized pseudo-true parameter, use a sieve Riesz representer to estimate sampling uncertainty, and construct a feasible upper bound on the difference between the penalized and unpenalized fixed-grid functionals.

Combining these components yields confidence intervals that are asymptotically conservative for the fixed-grid best approximation target, under the stated conditions, even when regularization bias is not negligible relative to sampling uncertainty. The general construction applies to a class of smooth scalar functionals that includes mean random coefficients, willingness-to-pay measures, elasticities, welfare changes, and counterfactual choice probabilities. It does not require the unpenalized design matrix to be nonsingular and can accommodate irregular functionals whose corresponding Riesz representer's norm may diverge. We also provide a simpler direct bootstrap procedure for regular functionals. In contrast to approaches that remove first-order regularization effects through debiasing or orthogonalization \citep[e.g.,][]{chernozhukov2018, chernozhukov2022}, we retain the penalized estimator and incorporate a bound on its bias into the interval. The contribution is therefore a bias-aware confidence-interval construction for an existing estimator: it makes explicit both the stabilizing role of regularization and its cost for inference.

Monte Carlo simulations illustrate the distinct roles of grid approximation and regularization. In the designs examined, a coarse grid leaves consequential approximation bias over the sample sizes considered and produces substantial undercoverage, while our proposed method performs better along a denser grid. 

In reported designs, the bias-aware intervals are only moderately wider or even narrower on average than the corresponding intervals based on the unpenalized fixed-grid estimator. This finite-design comparison illustrates that a reduction in sampling variance can partly offset the additional allowance for regularization bias. In an application to the travel-mode data of \citet{meijer2006}, the nonparametric specifications place more probability mass at large absolute values in the Value-of-Time (VoT) distribution and imply larger mean VoT magnitudes than the parametric log-normal benchmark. The bias-aware intervals remain informative and, in this application, are only moderately wider than the corresponding unpenalized fixed-grid intervals. These empirical differences illustrate the sensitivity of the reported functionals to distributional assumptions.

The remainder of the paper is organized as follows. Section~\ref{sec:model_est} introduces the RC logit model, the functionals considered in the analysis, and the penalized and unpenalized fixed-grid estimators. Section~\ref{sec:theory} derives the asymptotic theory and constructs the confidence-conservative intervals. Section~\ref{sec:monte_carlo} reports the Monte Carlo study, and Section~\ref{sec:application} presents the empirical application. Section~\ref{sec:conclusion} concludes.


\section{Model and estimator}\label{sec:model_est}
	
This section reviews the RC model and common functionals frequently considered in empirical applications, which are the primary objects of inference in this paper. It then revisits the nonparametric fixed-grid estimator of \citet{fox2011} and the regularized extension of \citet{heiss2022}, which are used for plug-in estimators for these functionals. Finally, Subsection~\ref{sec:bias} discusses the trade-off between approximation and regularization bias, thereby motivating the confidence-conservative inference procedure developed in the following sections.

\subsection{Model}\label{sec:model}

Suppose that each decision maker $i$ faces a finite set of mutually exclusive alternatives $j=0, \ldots, J$, where $j=0$ denotes a possible outside option. For each decision maker $i$ and inside alternative $j=1, \ldots, J$, the researcher observes $y_{i,j}$, where $y_{i,j}=1$ if decision maker $i$ chooses alternative $j$ and $0$ otherwise, together with a $K$-dimensional vector of covariates $x_{i,j}$.\footnote{We focus on cross-sectional data, while extensions of the nonparametric fixed-grid estimator to panel data are discussed in \citet{fox2011}.} 

In line with the theoretical results derived in Section \ref{sec:theory}, we consider a model that allows for a combination of fixed and random coefficients.\footnote{Although assuming some coefficients to be homogeneous across individuals reduces the model's flexibility, we consider specifications with both fixed and random coefficients because the fixed-grid estimator presented below suffers from the curse of dimensionality, making it computationally feasible only for models with relatively few random coefficients in practice.} Let $\delta_0$ denote the $d_\delta$-dimensional vector of true fixed coefficients and $\beta$ the $d_\beta$-dimensional vector of random coefficients with true distribution $F_0$. Furthermore, let 
$x_i=(x_{i,1}',\ldots,x_{i,J}')'$ denote the matrix of stacked covariates across inside alternatives.  Integrating over the distribution of the RCs yields the unconditional choice probability

\begin{equation}\label{eq:rc_logit}
P(y_{i,j}=1\mid x_i)=\int g_j(x_i,\delta_0,\beta)\,\diff F_0(\beta),
\end{equation} 
where $g_j(\cdot)$ is a known function determined by the assumed discrete choice model. An example widely applied in the literature is the RC logit model, in which $g_j(\cdot)$ is given by the logit kernel,
\[
g_j(x_i, \delta_0, \beta)
=
\frac{\exp(x_{i,j,1}'\delta_0 + x_{i,j,2}'\beta)}
{1+\sum_{l=1}^{J}\exp(x_{i,l,1}'\delta_0 + x_{i,l,2}'\beta)},
\]
where $x_{i,j, 1}$ and $x_{i,j,2}$ denote the subvectors of product characteristics associated with the fixed and random coefficients, respectively.\footnote{For a detailed treatment of RC logit models, see \citet[pp.~134--150]{train2009}.}

The researcher's objective is to estimate the unknown parameters $\delta_0$ and $F_0$ from the observed choice data. In empirical applications, however, researchers are typically not directly interested in these estimates, but rather in functionals of the unknown model parameters, such as willingness-to-pay measures, own- and cross-price elasticities, or counterfactual predictions. A broad class of economically meaningful quantities of interest can be written as average functionals of the form
\begin{equation}\label{eq:avg_functionals} 
\tau(\delta_0, F_0) = \int \left\{ \int s(x, \delta_0, \beta)\,\text dF_0(\beta) \right\}\text dG(x),
\end{equation}
where $s(\cdot)$ is a known measurable function and $G$ denotes the population distribution of $x$. In several examples, the outer integral with respect to $G$ drops out. Otherwise, we treat $G$ as known and ignore the additional estimation error arising from estimating $G$.

\begin{example}[Willingness-to-Pay]\label{ex:wtp}
A central quantity of interest in empirical applications of RC logit models is consumers' willingness-to-pay (WTP) for alternatives' attributes. For example, in environmental economics, WTP measures are commonly used to quantify consumers' valuations of environmentally friendly product attributes \citep[e.g.,][]{faure2021}. Let $m\in\{1,\ldots,d_{\beta}\}$ index the monetary attribute (e.g., price/cost) and let $k\neq m$ index a non-monetary attribute, with corresponding coefficients $\beta_m$ and $\beta_k$. Under linear utility, marginal WTP for attribute $k$ is $-\beta_k/\beta_m$ \citep[e.g., ][]{meijer2006}. The corresponding population mean marginal WTP is therefore
\begin{equation}\notag
\tau_{\text{wtp},0}=\int(-\frac{\beta_k}{\beta_m})\,\text dF_0(\beta).
\end{equation} 
\end{example}

\begin{example}[Mean Elasticities]\label{ex:elasticity}
Mean own- and cross-price elasticities are frequently reported in industrial organization and marketing to measure consumers' responses to price changes in RC logit models \citep[e.g.,][]{dubois2018}. 
For alternatives $j,l\in\{1,\ldots,J\}$ and attribute $k\in\{1,\ldots,K\}$ (typically price), the corresponding population mean elasticity is given by
\begin{equation}\notag
\tau_{\text{elasticity},0} = \int \left\{ \frac{x_{l,k}}{P_j(x)} \int \frac{\partial g_j(x,\delta_0,\beta)} {\partial x_{l,k}} \,\mathrm dF_0(\beta) \right\} \,\mathrm dG(x).
\end{equation}
For $j=l$, this corresponds to an own-price elasticity, whereas for $j\neq l$ it corresponds to a cross-price elasticity \citep[see][pp.~160--161]{train2009}. In practice, these elasticities are often evaluated at a representative value $x^*$, where $(x^*)'=((x_1^*)',\ldots,(x_J^*)')$ is a point in the support of the stacked covariates with distribution $G$. In that case, the outer integral is replaced by point evaluation.
\end{example}

\begin{example}[Welfare Changes]\label{ex:welfare}
Welfare changes are frequently used in empirical applications of RC logit models to quantify the effects of policy interventions, such as taxes, subsidies, or product innovations, on consumer welfare \citep[e.g.,][]{grigolon2018}. Let $J_{\text{pre}}$ and $J_{\text{post}}$ denote the numbers of inside alternatives available before and after the policy intervention, respectively, and let $x_{\text{pre}}^*$ and $x_{\text{post}}^*$ denote the corresponding product characteristics. Furthermore, let $m$ denote the monetary attribute (typically price). Under the RC logit model with i.i.d. Type I Extreme Value errors, the corresponding population mean welfare change is given by
\begin{equation}\notag
\begin{aligned}
\tau_{\text{welfare},0} &= \int \Bigg\{ \int \frac{1}{-\beta_m} \Bigg[
\log\!\Big( 1+\sum_{j=1}^{J_{\text{post}}} \exp\!\big((x_{\text{post},j,1}^*)'\delta_0
+(x_{\text{post},j,2}^*)'\beta\big) \Big) \\
&\qquad\qquad - \log\!\Big( 1+\sum_{j=1}^{J_{\text{pre}}} \exp\!\big((x_{\text{pre},j,1}^*)'\delta_0 +(x_{\text{pre},j,2}^*)'\beta\big) \Big) \Bigg] \,\mathrm dF_0(\beta) \Bigg\} \,\mathrm dG(x_{\text{pre}}^*,x_{\text{post}}^*).
\end{aligned}
\end{equation}
If the pre- and post-intervention product characteristics are fixed, the outer integral with respect to $G$ drops out. Positive values of $\tau_{\text{welfare},0}$ indicate welfare gains under the post-intervention scenario. See \citet[pp.~64--67]{train2009} and \citet{karlstrom2005} for further discussion.
\end{example}

\subsection{Fixed-grid estimator}\label{sec:estimator}
		
A common approach in applications of RC models is to restrict $F_0$ to a prespecified parametric family, such as the normal or log-normal family. Although computationally convenient, such a restriction may misrepresent the underlying preference heterogeneity and thereby distort estimates of the functionals of interest.
			
To avoid parametric shape restrictions, \citet{bajari2007} and \citet{fox2011} propose a computationally tractable nonparametric estimator of the RC distribution. Let $R_N$ denote the number of fixed grid points, and let $\calB_{R_N}\equiv\{\beta_1,\ldots,\beta_{R_N}\}\subset\mathbb{R}^{d_\beta}$ denote the grid specified by the researcher before estimation. The probability mass $\theta_r$ associated with grid point $\beta_r$, $r=1,\ldots,R_N$, is estimated from the data. The estimator approximates $F_0$ by the discrete distribution $F(\cdot)=\sum_{r=1}^{R_N}\theta_r\mathbf{1}\{\beta_r\leq\cdot\}$.\footnote{Unless otherwise stated, vector inequalities such as ``$\leq$'' are understood componentwise.} As $R_N$ increases with $N$, a suitably chosen grid can provide increasingly accurate approximations to $F_0$.\footnote{{\citet{fox2016} show that distributions supported on the grid can approximate any RC distribution arbitrarily closely in the weak topology as $R_N\rightarrow\infty$ and the grid becomes dense in the coefficient space.}}

To estimate the probability weights, consider the linear probability approximation that arises from replacing $F_0$ by a discrete distribution supported on the finite grid $\mathcal{B}_{R_N}$ \cite[e.g.,][Equation 11]{heiss2022}
\begin{equation}\label{eq:linear_prob}
	y_{i,j} = \sum_{r=1}^{R_N} g_j(x_i, \delta, \beta_r)\theta_r+\epsilon_{i,j},
\end{equation}
where $\epsilon_{i,j}=y_{i,j}-P_j(x_i)$ is the linear probability error.  
Building on \eqref{eq:linear_prob}, \citet{heiss2022} propose the penalized estimator (referred to as the ENet estimator in later discussions)
			\begin{align}\label{eq:enet}
				\begin{split}
					(\widehat{\delta}, \widehat{\theta}) = \argmin_{\delta, \theta} \ & \frac{1}{2NJ} \sum_{i=1}^N (y_i - \sum\limits_{r=1}^{R_N}g(x_i,\delta, \beta_r)\theta_r)'\Sigma(x_i)^{-1}(y_i - \sum\limits_{r=1}^{R_N}g(x_i,\delta, \beta_r)\theta_r) + \frac{\mu_N}{2} \sum_{r=1}^{R_N} \theta_r^2\\
					\text{s.t.} \quad & \theta_r \geq 0 \quad \forall r, \quad \sum_{r=1}^{R_N} \theta_r = 1,
				\end{split}
			\end{align} 
where $\theta=(\theta_1,\ldots,\theta_{R_N})'$, $y_i=(y_{i,1},\ldots,y_{i,J})'$, and $g(x_i, \delta,\beta_r)=(g_1(x_i,\delta, \beta_r),\ldots,g_J(x_i,\delta, \beta_r))'$ for all $r=1,\ldots,R_N$. The $J\times J$ matrix $\Sigma(x_i)^{-1}$ is a positive-definite weighting matrix (e.g., \citet{heiss2022} use the identity matrix).\footnote{For the construction of a variance-based weighting matrix, see Section~5 of \citet{fox2011}.} The tuning parameter $\mu_N\geq0$ controls the strength of the $\ell_2$-penalty. Setting $\mu_N=0$ recovers the estimator of \citet{fox2011} (referred to as the FKRB estimator in later discussions), whereas larger values imply stronger regularization. \citet{heiss2022} recommend selecting $\mu_N$ by cross-validation. For models with both fixed and random coefficients, \citet{heiss2022} implement \eqref{eq:enet} with an iterative algorithm that alternates between updating the probability weights through constrained least squares and updating the fixed coefficients through a weighted conditional logit likelihood. We use this algorithm in the empirical application; Appendix~B of \citet{heiss2022} provides details.

The estimated RC distribution is thus obtained from the estimated probability weights $\widehat{\theta}$ as
\begin{equation*}
\widehat{F}(\beta) = \sum\limits_{r = 1}^{R_N} \widehat{\theta}_r \ \mathbf{1}\left\{\beta_r\leq \beta\right\}.
\end{equation*}   		
Given the fixed-grid estimator $(\widehat{\delta}, \widehat{F})$ and the object of interest in the form of Equation (\ref{eq:avg_functionals}), Section \ref{sec:theory} derives a bias-aware inference approach based on the plug-in estimator $\tau(\widehat{\delta}, \widehat{F})$.

\subsection{Regularization in the fixed-grid estimator}\label{sec:bias}

Although the estimator described in the previous subsection is computationally attractive, its statistical properties are governed by a fundamental trade-off between approximation bias and regularization bias. On the one hand, the grid should be sufficiently dense so that the discrete approximation accurately represents the unknown RC distribution. On the other hand, increasing the number of grid points makes the estimation problem increasingly ill-conditioned, requiring regularization.

As the grid becomes denser, the resulting design matrix becomes increasingly collinear, because, e.g., neighboring grid points generate very similar choice probabilities and thus similar regressors in the linear probability model in Equation~(\ref{eq:linear_prob}), rendering the constrained least squares problem ill-conditioned. This issue has been documented for the fixed-grid estimator by \citet{fox2016}, \citet{nevo2016}, and \citet{heiss2022}, and more generally for structural models with nonparametric unobserved heterogeneity by \citet{escanciano2023}. 

A natural way to address this issue is through regularization. In the context of the fixed-grid estimator, regularization permits the use of substantially denser grids, thereby reducing sieve approximation error and improving estimation of the underlying RC distribution. Specifically, \citet{heiss2022} augment the estimator of \citet{fox2011} with the $\ell_2$ penalty in Equation~(\ref{eq:enet}), while \citet{escanciano2023} develop a related regularized estimator.

However, regularization comes at the cost of inducing regularization bias. Consequently, researchers face an inherent trade-off. A coarse grid reduces ill-conditioning but may induce substantial approximation bias because the support is too sparse to accurately represent the true distribution. Conversely, a dense grid reduces approximation bias but requires stronger regularization, thereby increasing regularization bias. The following section develops a valid inference for average functionals of the form in Equation (\ref{eq:avg_functionals}) that accounts for the regularization bias of the penalized fixed-grid estimator in Equation (\ref{eq:enet}).

		
\section{Theoretical results}\label{sec:theory} 

This section derives conservative confidence intervals for plug-in estimators of functionals, including the average functionals in \eqref{eq:avg_functionals}. To accommodate the near-singular design matrices that arise with dense grids \citep{nevo2016,fox2016}, we use the penalized fixed-grid estimator of \citet{heiss2022}. The main result establishes asymptotic normality around a penalized pseudo-true parameter. We then bound the regularization bias induced by the $\ell_2$-penalty in \eqref{eq:enet} and incorporate that bound into the confidence interval.
				
The analysis builds on the sieve-inference frameworks of \citet{chen2014sieve} and \citet{chen2015sieve}, adapted to the fixed-grid estimator of \citet{fox2011} and the penalized extension of \citet{heiss2022}. The central idea is to linearize the functional locally around the penalized pseudo-true parameter and characterize its first-order term with a sieve Riesz representer. This yields an asymptotically normal representation, an explicit asymptotic variance, and a tractable regularization-bias bound.

\paragraph{Notation.}
Let $\operatorname{clsp}(A)$ denote the closed linear span of $A$ under the norm in \eqref{eq:norm.inner.product}. For positive deterministic sequences $(a_N)$ and $(b_N)$, write $a_N\lesssim b_N$ if $a_N/b_N\leq C$ for some finite constant $C$ and all sufficiently large $N$; write $a_N\asymp b_N$ if both $a_N\lesssim b_N$ and $b_N\lesssim a_N$. For positive random sequences $(U_N)$ and positive deterministic sequences $(V_N)$, write $U_N=O_p(V_N)$ if $\lim_{M\to\infty}\limsup_{N\to\infty}P(U_N/V_N>M)=0$, and write $U_N=o_p(V_N)$ if $\lim_{N\to\infty}P(U_N/V_N>\varepsilon)=0$ for every $\varepsilon>0$.
We omit the subscript ``$p$'' for nonrandom sequences. Let $|A|$ denote the cardinality of a set $A$, and let $\mathbb{R}_+=[0,\infty)$. For a vector $a\in\mathbb{R}^K$, let $a'$ denote its transpose and $\|a\|_e=(a'a)^{1/2}$. For a matrix $A$, let $\|A\|_e=\{\operatorname{tr}(A'A)\}^{1/2}$ denote the Frobenius norm. For a map $\phi:\boldsymbol{B}\to\boldsymbol{H}$ between Banach spaces, define
\[
D\phi(\alpha)[v]\equiv\left.\frac{\partial}{\partial t}\phi(\alpha+tv)\right|_{t=0}.
\]
Finally, let $E(\cdot)$ denote expectation/expected value, and $\bar{E}_N\{h(z)\}\equiv N^{-1}\sum_{i=1}^N\{h(z_i)-E[h(z_i)]\}$ denote the centered empirical process indexed by $h$ for a sample $z=(z_1,\cdots, z_N)$.
			
\subsection{Model and estimator revisited}

Let $\calB\subset\mathbb{R}^{d_\beta}$ be the compact support of the RCs, and let $\calF$ be the class of distributions supported on $\calB$. The parameter space is $\calA\equiv\calC\times\calF$, with
\[
\alpha=(\delta,F),\qquad\delta\in\calC\subset\mathbb{R}^{d_\delta},\qquad F\in\calF.
\]
Let $\alpha_0=(\delta_0,F_0)$ denote the true parameter.
				
For each sample size $N$, let $\mathcal{B}_N=\{\beta_1,\ldots,\beta_{R_N}\}\subset\mathcal{B}$ denote the fixed grid of $R_N$ support points. Define the simplex
\[
\calD_N=\{\theta\in[0,1]^{R_N}:\iota_{R_N}'\theta=1\},
\]
where $\iota_k$ denotes the $k$-vector of ones. For $\theta\in\calD_N$, let
\[
F_\theta(\cdot) =\theta'\mathbf{1}_N(\cdot) =\sum_{r=1}^{R_N}\theta_r\mathbf{1}\{\beta_r\leq\cdot\},
\]
where $\mathbf{1}_N(\cdot)$ stacks $\{\mathbf{1}(\beta_r\leq\cdot):r=1,\ldots,R_N\}$.
The sieve parameter space is
\[
\calA_N=\{(\delta,F_\theta):\delta\in\calC,\ \theta\in\calD_N\}.
\]

For each $\alpha\in\mathcal A_N$, we write $\alpha=(\delta_\alpha,F_{\theta_\alpha})$ to distinguish from the corresponding components of the sieve Reisz representer introduced below. The notation $F_{\theta_\alpha}$ emphasizes that the distribution component of $\alpha$ is determined by the probability weights $\theta_\alpha$, which is subject to the simplex constraint, assigned to the fixed grid points.

For $\alpha=(\delta_\alpha,F_{\theta_\alpha})\in\calA_N$, define the strong norm
\[
\|\alpha\|_s\equiv\|\delta_\alpha\|_e+\|\theta_\alpha\|_e,
\]				
 
and define the unpenalized weighted least-squares criterion
\[
\ell_N^0(z_i,\alpha)\equiv\frac12\rho(z_i,\alpha)'\Sigma(x_i)^{-1}\rho(z_i,\alpha),
\]
with $\rho(z_i,\alpha)\equiv y_i-P(x_i,\alpha)$, where $z_i=(x_i,y_i)$ are i.i.d., $P(x_i,\alpha)$ is the model-implied vector of choice probabilities, and $\Sigma(x)$ is positive definite with eigenvalues uniformly bounded above and away from zero. Moreover, let 
\[
Q_N^0(\alpha)\equiv E[\ell_N^0(z,\alpha)],
\qquad
\ell_N^\mu(z,\alpha)\equiv\ell_N^0(z,\alpha)+\frac{\mu_N}{2}\|\theta_\alpha\|_e^2,
\qquad
Q_N^\mu(\alpha)\equiv E[\ell_N^\mu(z,\alpha)].
\]
define the unpenalized and penalized population objective function, respectively. In comparison to \eqref{eq:enet}, we omit the constant factor $1/J$, which has no effect once the penalty parameter is rescaled accordingly. 

When $\mu_N=0$, the population criterion may have multiple minimizers because a dense grid can produce a singular design matrix. When $\mu_N>0$, we assume that the penalized population criterion has a unique minimizer. 

Let $\alpha_N^0$ denote a best fixed-grid approximation to the population target,\footnote{When $F_0$ has a density, one common benchmark assigns to each grid point the probability of an associated grid cell. If $F_0$ is discrete and its mass points are included in the grid, the approximation error can vanish.} and let $\alpha_N^\mu$ denote the penalized pseudo-true parameter:
\[
\alpha_N^0\in\arg\min_{\alpha\in\calA_N}Q_N^0(\alpha),
\qquad
\alpha_N^\mu=\arg\min_{\alpha\in\calA_N}Q_N^\mu(\alpha).
\]

The sample analogue of $Q_N^\mu(\alpha)$ is
\[
\widehat Q_N^\mu(\alpha)=\frac1N\sum_{i=1}^N\ell_N^\mu(z_i,\alpha)
\]
and the corresponding estimator is defined by
\[
\widehat{\alpha}_N^\mu\in\arg\min_{\alpha\in\calA_N}\widehat Q_N^\mu(\alpha).
\qquad
\]
The unpenalized sample objective and estimator are obtained by setting $\mu_N=0$ and are denoted by $\widehat Q_N^0$ and $\widehat{\alpha}_N^0$, respectively.
		
Since some probability weights may be zero at the penalized optimum, we define the following restricted sieve spaces. For any $S\subseteq\{1,\ldots,R_N\}$, let 
\[
\calA_N(S)\equiv\{(\delta,F_{\theta})\in\calA_N:\theta_r=0\text{ for }r\notin S\}.
\]
The active set of the penalized pseudo-true parameter is
\[
S_N=\{r:\theta_{\alpha_N^\mu,r}>0\},
\]
where $\theta_{\alpha_N^\mu,r}$ is the $r$th entry of penalized probability weights,  $\theta_{\alpha_N^\mu}$ such that $\alpha_N^\mu=(\delta_{\alpha_N^\mu}, F_{\theta_{\alpha_N^\mu}})$. 

The subsequent analysis is local to this active set. For a positive sequence $\xi_N$, define the local neighborhood
\begin{equation}
\calN_N^\mu=\left\{\alpha\in\calA_N(S_N):Q_N^\mu(\alpha)-Q_N^\mu(\alpha_N^\mu)\le \xi_N\log N\right\}.
\label{eq:xi_N neighborhood}
\end{equation}
The positive sequence $\xi_N$ typically converges to zero. Lemma~\ref{lem: consistency} in Appendix \ref{appen;proof main context} provides one admissible rate.
To characterize local perturbations around the penalized pseudo-true parameter, define the local direction space
\[
\calV_N^\mu \equiv \operatorname{clsp}\bigl(\calN_N^\mu-\{\alpha_N^\mu\}\bigr).
\]
		
\begin{assumption}[Active set stability]\label{assum:ri}
(i) $\alpha_N^\mu$ and $\alpha_N^0$ lie in the relative interior of $\calN_N^\mu$ under the strong norm $\|\cdot\|_s$.\footnote{For a set $S$, $\operatorname{ri}(S)\equiv\{x\in S:\text{there exists }\varepsilon>0\text{ such that }B_\varepsilon(x)\cap\operatorname{aff}(S)\subseteq S\}$, where $B_\varepsilon(x)\equiv\{y:\|y-x\|_s\leq\varepsilon\}$ and $\operatorname{aff}(S)$ is the affine hull of $S$. See Section~6 of \citet{rockafellar1997convex}.}
(ii) For every $v\in\calV_N^\mu$, there exists a deterministic $\rho_N(v)>0$ such that, for every positive deterministic sequence $\varepsilon_N=o(\min\{N^{-1/2},\rho_N(v)\})$,
\[
\widehat{\alpha}_N^\mu\pm\varepsilon_Nv\in\calN_N^\mu
\quad\text{with probability approaching one.}
\]
\end{assumption}
Assumption~\ref{assum:ri} requires the best approximation and the penalized target to share an active face of the simplex and ensures that the estimator admits the local perturbations used below. Theorem~1 in \citet{heiss2022} establishes support recovery by the penalized sample estimator relative to the best fixed-grid weights under sufficient conditions. Lemma~\ref{lem:verification of Assumption ri} in Appendix \ref{assum:ri} gives related sufficient conditions for Assumption~\ref{assum:ri}. 

These conditions permit $\mu_N=0$ when the unpenalized population criterion has sufficient local curvature. When the design is nearly singular, a positive $\ell_2$-penalty provides the curvature required for local identification, albeit around a biased target. The simulation results below illustrate this trade-off.
		
Following the local quadratic expansion used in \citet[Section 3.1]{chen2014sieve}, consider directions $v_1,v_2\in\calV_N^\mu$ and define the inner product
\begin{equation}\label{eq:inner.product}
\langle v_1,v_2\rangle_\mu \equiv \left. \frac{\partial^2}{\partial t_1\partial t_2} E\!\left[\ell_N^\mu\!\left(z,\alpha_N^\mu+t_1v_1+t_2v_2\right) \right] \right|_{t_1=t_2=0}.
\end{equation}
The induced norm is
\begin{equation}\label{eq:norm.inner.product}
\|v\|_\mu^2 \equiv \left. \frac{\partial^2}{\partial t^2} E\!\left[\ell_N^\mu\!\left(z,\alpha_N^\mu+tv\right) \right] \right|_{t=0}.
\end{equation}
Under $\mu_N>0$ and the maintained regularity conditions, this norm remains well defined even when the design matrix is near singular. We also assume that differentiation and expectation can be interchanged through the second order, so that
\begin{equation}\label{eq:inner.product 2}
\langle v_1,v_2\rangle_\mu=E\!\left[D^2\ell_N^\mu(z,\alpha_N^\mu)[v_1,v_2]\right],
\end{equation}
where
\begin{align*}
D\ell_N^\mu(z,\alpha_N^\mu)[v_1]&\equiv\left.\frac{\partial}{\partial t}\ell_N^\mu\!\left(z,\alpha_N^\mu+tv_1\right)\right|_{t=0},\\
D^2\ell_N^\mu(z,\alpha_N^\mu)[v_1,v_2]&\equiv\left.\frac{\partial^2}{\partial t_1\partial t_2}\ell_N^\mu\!\left(z,\alpha_N^\mu+t_1v_1+t_2v_2\right)\right|_{t_1=t_2=0}.
\end{align*}
Equipped with the inner product in Equation (\ref{eq:inner.product}), $\calV_N^\mu$ is a finite-dimensional Hilbert space. Its elements respect the active-set restriction: directions outside $S_N$ have zero grid-weight components. Thus, any $v\in\calV_N^\mu$ can be represented as
\[
v=(\delta_v,\theta_v'\mathbf{1}_{S_N}(\cdot)),
\]
where $\delta_v \in \mathbb{R}^{d_\delta}, \theta_v\in\mathbb{R}^{|S_N|}$ and $\mathbf{1}_{S_N}(\cdot)$ stacks the indicator functions $\{\mathbf{1}(\beta_r\le\cdot)\}_{r\in S_N}$. Moreover, for any $\alpha\in\calN_N^\mu$ and $v\in\calV_N^\mu$, sufficiently small local perturbations $\alpha+tv$ remain in $\calN_N^\mu$ whenever the path stays in the relative interior. Thus, by restricting to the interior point within $\calN_N^\mu$, we avoid the boundary issue raised in constrained optimizations. 
		
Having characterized the local geometry of the penalized estimator, we now turn to the scalar functional $\tau(\alpha)$, which is the object of primary interest. 
For a scalar functional $\tau:\calA_N\to\mathbb{R}$, define the directional derivative at the penalized target in the direction $v\in\calV_N^\mu$ by
\[
D\tau(\alpha_N^\mu)[v]\equiv\left.\frac{\partial}{\partial t}\tau\!\left(\alpha_N^\mu+tv\right)\right|_{t=0}.
\]
If $D\tau(\alpha_N^\mu)[\cdot]$ is linear on $\calV_N^\mu$, then finite-dimensionality of $\calV_N^\mu$ implies the existence of a sieve Riesz representer $v_N^\tau\in\calV_N^\mu$ such that
\begin{equation}
D\tau(\alpha_N^\mu)[v]=\langle v_N^\tau,v\rangle_\mu
\quad\text{for all }v\in\calV_N^\mu,\qquad D\tau(\alpha_N^\mu)[v_N^\tau] =\|v_N^\tau\|_\mu^2 = \sup _{\substack{v \in \calV_N^\mu, v \neq 0}} \frac{|D\tau(\alpha_N^\mu)[v]|^2}{\|v\|_\mu^2}.
\label{eq:Riesz}
\end{equation}
			
Let $\gamma=(\delta',\theta')'\in\mathbb{R}^{d_{\vdelta}}\times\mathbb{R}^{|S_N|}$, and define
\begin{align*}
{D}_{N}^\mu&\equiv \left. \frac{\partial \tau({\alpha}_N^\mu+(\delta,\theta'\mathbf{1}_{S_N}(\cdot)))}{\partial \gamma}\right|_{\gamma=0},\\
{H}_{N}^\mu&\equiv E\left[\left.\frac{\partial^2 \ell_{N}^\mu(z,{\alpha}_N^\mu+(\delta,\theta'\mathbf{1}_{S_N}(\cdot)))}{\partial \gamma \partial \gamma'}\right|_{\gamma=0}\right].
\end{align*}
The Riesz representer solving Equation (\ref{eq:Riesz}) is 
\[
v_N^\tau=(\delta^\tau,\theta^{\tau'}\mathbf{1}_{S_N}(\cdot)),
\qquad(\delta^{\tau'},\theta^{\tau'})'=({H}_{N}^\mu)^{-1}{D}_{N}^\mu,
\]

whose derivation follows the same steps as the solution of Equation~(\ref{eq:empirical sieve riesz representer equation system}) and is therefore omitted. Example \ref{ex:wtp 3} gives the corresponding expression for the mean WTP functional introduced in Example \ref{ex:wtp}.
			
\begin{example}[Willingness-to-Pay (continued)]\label{ex:wtp 3}
Consider the WTP functional without fixed coefficients. For simplicity, suppose that $S_N=\{1,\ldots,R_N\}$. If $\beta_{r,m}\neq0$ for all relevant grid points, then $D\tau(\alpha_N^\mu)[h]=\sum_{r=1}^{R_N}h_r\left(-\frac{\beta_{r,k}}{\beta_{r,m}}\right),$ and thus $$D_N^\mu=\left(-\frac{\beta_{1,k}}{\beta_{1,m}},\ldots,-\frac{\beta_{R_N,k}}{\beta_{R_N,m}}\right)'.$$ Let  $${H}_{N}^\mu=E\left[\left.\frac{\partial^2 \ell_{N}^\mu(z,{\alpha}_N^\mu+\theta'\mathbf{1}_{S_N}(\cdot))}{\partial \theta \partial \theta'}
\right|_{\theta=0}\right],$$ the corresponding sieve Riesz representer is $v_N^\tau=\sum_{r=1}^{R_N}\gamma_{N,r}^\mu\mathbf{1}(\beta_r\le\cdot)$ with $\gamma_N^\mu=(H_N^\mu)^{-1}D_N^\mu$  and $\|v_N^\tau\|_\mu^2=(D_N^{\mu})' (H_N^\mu)^{-1}D_N^\mu$. 
\end{example}
~\\
			
For any $v=(\delta_v,\theta_v'\mathbf{1}_{S_N}(\cdot))\in\mathcal V_N^\mu$, define the regularized score norm
\[
\|v\|_{sd,\mu}^2\equiv\operatorname{Var}\!\left(D\ell_N^\mu(z,\alpha_N^\mu)[v]\right)+\mu_N\|\theta_v\|_e^2.
\]
For the Riesz representer, define the score variance
\[
\sigma_N^2\equiv\operatorname{Var}\!\left(D\ell_N^\mu(z,\alpha_N^\mu)[v_N^\tau]\right).
\]
and, without loss of generality, we only consider $\sigma_N>0$.
            
\begin{assumption}[Sieve variance]\label{assum:norm ratio}
For every $v\in\calV_N^\mu$, $\|v\|_\mu\asymp\|v\|_{sd,\mu}$.
\end{assumption}
Assumption~\ref{assum:norm ratio} requires the local curvature of the penalized population criterion and the sampling variation of its first-order derivative (plus the $\ell_2$-penalty term) to be uniformly comparable over the sieve direction space. 
It is a regularized analog of Assumption 3.2 in \citet{chen2014sieve}. The additional deterministic $\ell_2$ component in $\|v\|_{sd,\mu}$ is induced by the $\ell_2$-penalty term.

To establish asymptotic normality around the penalized pseudo-true value and derive asymptotically conservative confidence intervals for the best fixed-grid target, we impose local smoothness conditions on the criterion function and the functional. These conditions parallel those in \citet{chen2014sieve}, but here they are centered at $\alpha_N^\mu$ for each fixed grid and penalty. Their formal statements are collected in the appendix.

\begin{theorem}\label{theo:3.1 v3 0}
    Under Assumptions  \ref{assum:ri}, \ref{assum:norm ratio}, \ref{assum:average_functional}-\ref{assum:clt v2}, then \[
					\frac{\sqrt{N}\{\tau(\widehat{\alpha}_N^\mu)
					-\tau(\alpha_N^\mu)\}}{\sigma_N}
					\ \longrightarrow_d\ \mathcal{N}(0,1).
				\]
\end{theorem}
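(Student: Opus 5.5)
The plan is to follow the sieve t-statistic argument of \citet[Theorem 3.1]{chen2014sieve}, centered at the penalized pseudo-true value $\alpha_N^\mu$ instead of the true parameter. There are three steps: linearize the functional, obtain a Bahadur-type representation along the Riesz direction by a perturbation argument, and finish with a CLT for the score.

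\textbf{Step 1 (linearization of the functional).} Lemma~\ref{lem: consistency} gives a convergence rate, which places $\widehat\alpha_N^\mu$ in $\calN_N^\mu$ with probability approaching one. On that event, the smoothness condition on $\tau$ among the appendix assumptions (Assumption~\ref{assum:average_functional} and its companions) gives
\[
\frac{\tau(\widehat\alpha_N^\mu)-\tau(\alpha_N^\mu)}{\|v_N^\tau\|_{sd,\mu}}=\frac{D\tau(\alpha_N^\mu)[\widehat\alpha_N^\mu-\alpha_N^\mu]}{\|v_N^\tau\|_{sd,\mu}}+o_p(N^{-1/2}).
\]
Since $\widehat\alpha_N^\mu-\alpha_N^\mu\in\calV_N^\mu$, the Riesz identity \eqref{eq:Riesz} turns the leading term into $\langle v_N^\tau,\widehat\alpha_N^\mu-\alpha_N^\mu\rangle_\mu$.

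\textbf{Step 2 (Bahadur representation).} Let $u_N^\tau=v_N^\tau/\|v_N^\tau\|_{sd,\mu}$ and fix $\varepsilon_N=o(\min\{N^{-1/2},\rho_N(u_N^\tau)\})$. By Assumption~\ref{assum:ri}(ii), the points $\widehat\alpha_N^\mu\pm\varepsilon_Nu_N^\tau$ lie in $\calN_N^\mu\subseteq\calA_N$ with probability approaching one, so they are feasible competitors and
\[
0\le \widehat Q_N^\mu(\widehat\alpha_N^\mu\pm\varepsilon_N u_N^\tau)-\widehat Q_N^\mu(\widehat\alpha_N^\mu).
\]
Each difference is expanded into three pieces:
\begin{itemize}
\item the empirical-process term $\pm\varepsilon_N\bar E_N\{D\ell_N^\mu(z,\alpha_N^\mu)[u_N^\tau]\}$, plus a stochastic equicontinuity remainder of order $o_p(\varepsilon_N N^{-1/2})$;
\item the population difference $Q_N^\mu(\cdot)-Q_N^\mu(\widehat\alpha_N^\mu)$, which by the local quadratic-expansion assumption equals $\pm\varepsilon_N\langle u_N^\tau,\widehat\alpha_N^\mu-\alpha_N^\mu\rangle_\mu+\tfrac{\varepsilon_N^2}{2}\|u_N^\tau\|_\mu^2+o_p(\varepsilon_N N^{-1/2})$;
\item the penalty, which is deterministic and already contained in $Q_N^\mu$ and $\langle\cdot,\cdot\rangle_\mu$.
\end{itemize}
Assumption~\ref{assum:norm ratio} gives $\|u_N^\tau\|_\mu\asymp 1$, so the quadratic term is $O(\varepsilon_N^2)=o(\varepsilon_N N^{-1/2})$. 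The first-order condition $D Q_N^\mu(\alpha_N^\mu)[v]=0$ holds for every $v\in\calV_N^\mu$ because $\alpha_N^\mu$ lies in the relative interior (Assumption~\ref{assum:ri}(i)). Combining the two inequalities with opposite signs and dividing by $\varepsilon_N$ then yields
\[
\langle u_N^\tau,\widehat\alpha_N^\mu-\alpha_N^\mu\rangle_\mu=-\bar E_N\{D\ell_N^\mu(z,\alpha_N^\mu)[u_N^\tau]\}+o_p(N^{-1/2}).
\]
I expect this step to be the main obstacle, for two reasons. First, the remainders must be controlled uniformly over the shrinking neighborhood $\calN_N^\mu$ with a rate $\xi_N$ that is compatible with $\varepsilon_N$. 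Second, the perturbation stays feasible only because of Assumption~\ref{assum:ri}(ii); without it, a binding simplex face would produce only a one-sided inequality.

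\textbf{Step 3 (CLT and normalization).} Steps 1 and 2 together give
\[
\sqrt N\{\tau(\widehat\alpha_N^\mu)-\tau(\alpha_N^\mu)\}/\|v_N^\tau\|_{sd,\mu}=-\sqrt N\,\bar E_N\{D\ell_N^\mu[u_N^\tau]\}+o_p(1).
\]
The Lindeberg/Lyapunov condition in Assumption~\ref{assum:clt v2} shows that $\sqrt N\,\bar E_N\{D\ell_N^\mu[v_N^\tau]\}/\sigma_N\to_d\mathcal N(0,1)$. It remains to replace $\|v_N^\tau\|_{sd,\mu}$ by $\sigma_N$. Because $\|v_N^\tau\|_{sd,\mu}^2=\sigma_N^2+\mu_N\|\theta^\tau\|_e^2\ge\sigma_N^2$, the ratio $\sigma_N/\|v_N^\tau\|_{sd,\mu}$ is bounded. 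Multiplying by this ratio therefore preserves the $o_p(1)$ remainder, and Slutsky's theorem gives the stated limit.

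This last conversion needs the $o_p(N^{-1/2})$ remainders to hold after normalization by $\sigma_N$, not only by $\|v_N^\tau\|_{sd,\mu}$. I would check that the appendix assumptions are stated with $\sigma_N$, or assume that $\mu_N\|\theta^\tau\|_e^2$ is bounded by a multiple of $\sigma_N^2$.
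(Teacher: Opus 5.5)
Your three steps are the paper's proof. The appendix invokes the argument of \citet[Theorem~3.1]{chen2014sieve} to obtain the Bahadur representation along $u_N^\tau$ and the linearization of $\tau$, both normalized by $\|v_N^\tau\|_{sd,\mu}$, and then appeals to Assumption~\ref{assum:clt v2}. Your sign in Step~2 is the right one for a minimized criterion. The paper gets membership of the perturbed points in $\calN_N^\mu$ from Assumption~\ref{assum:ri}(ii), not from Lemma~\ref{lem: consistency}, whose conditions are not among the theorem's hypotheses.

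The final step does not work as you justify it. Write $c_N\equiv\sigma_N/\|v_N^\tau\|_{sd,\mu}\in(0,1]$ and $Z_N\equiv\sqrt N\,\bar E_N\{D\ell_N^\mu(z,\alpha_N^\mu)[v_N^\tau]\}/\sigma_N$. By Assumption~\ref{assum:clt v2} and the first-order condition, $Z_N$ is asymptotically standard normal. Steps 1--2 then give
\[
\frac{\sqrt N\{\tau(\widehat\alpha_N^\mu)-\tau(\alpha_N^\mu)\}}{\sigma_N}=-Z_N+\frac{o_p(1)}{c_N}.
\]
The bound $c_N\le 1$ that you invoke is on the wrong side. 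You need $c_N$ bounded away from zero, i.e.\ $\mu_N\|\theta^\tau\|_e^2=O(\sigma_N^2)$.

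Of your two fallbacks, the first fails: Assumptions~\ref{assum:average_functional} and~\ref{assum:Local Behavior of Criterion v2} are normalized by $\|v_N^\tau\|_\mu$ and by $u_N^\tau$, not by $\sigma_N$. The second is the right repair, and it is a genuinely additional condition. Assumption~\ref{assum:norm ratio} does not supply it, because $\|\cdot\|_\mu$ and $\|\cdot\|_{sd,\mu}$ contain the same term $\mu_N\|\theta_v\|_e^2$. The assumption is therefore silent on how $\|v_N^\tau\|_{sd,\mu}^2$ splits between $\sigma_N^2$ and $\mu_N\|\theta^\tau\|_e^2$. When $D_N^\mu$ loads on eigendirections of $H_N^\mu$ with eigenvalues near $\mu_N$ (near-null directions of a dense-grid design), the penalty part can dominate and $c_N\to0$.

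One route to the needed bound is the spectral requirement in Assumption~\ref{assume:regular functional II}. Together with the other conditions of Lemma~\ref{lem:bound1}, its proof uses it to get $2\sigma_N^2\ge\|v_N^\tau\|_\mu^2-O(N^{-1})$. The paper's own proof passes over the same point: it pairs a representation normalized by $\|v_N^\tau\|_{sd,\mu}$ with a CLT normalized by $\sigma_N$ without comment. So the argument, as written in either version, needs this extra condition, or else a conclusion normalized by $\|v_N^\tau\|_{sd,\mu}$.
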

\begin{proof}
See Appendix~\ref{Proof of Theorem ref{theo:3.1 v3 0}}.
\end{proof}
 
Theorem \ref{theo:3.1 v3 0} establishes asymptotic normality of the penalized plug-in estimator around the penalized pseudo-true value $\tau(\alpha_N^\mu)$. The following corollary combines this result with a bound on the regularization bias to obtain an asymptotically conservative confidence interval for the unpenalized fixed-grid target $\tau(\alpha_N^0)$.

\begin{corollary}\label{theo:3.1 v3}
Under Assumptions  \ref{assum:ri}, \ref{assum:average_functional},
\[
|\tau(\alpha_N^0)-\tau(\alpha_N^\mu)|\le\|v_N^\tau\|_\mu (\|\alpha_N^0-\alpha_N^\mu\|_\mu+o(N^{-1/2})).
\]
If, in addition, Assumptions \ref{assum:norm ratio}, \ref{assum:Local Behavior of Criterion v2}, and \ref{assum:clt v2} hold, let			
\[
c_N(q)\equiv N^{-1/2}\Phi^{-1}(1-q/2)\sigma_N +\|v_N^\tau\|_\mu \|\alpha_N^0-\alpha_N^\mu\|_\mu,
\]
then
\begin{equation}\label{eq:confidence interval}
\liminf_{N\to\infty}P\!\left(\tau(\alpha_N^0)\in[\tau(\widehat{\alpha}_N^\mu)-c_N(q),\tau(\widehat{\alpha}_N^\mu)+c_N(q)]\right)\geq1-q,
\end{equation}
where \(\Phi^{-1}(u)\) denotes the \(u\)-quantile of the standard normal distribution. 
\end{corollary}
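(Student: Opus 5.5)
The plan has two parts: a deterministic bias bound, and a coverage argument that combines that bound with Theorem~\ref{theo:3.1 v3 0}.

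\emph{Bias bound.} Set $v^\ast\equiv\alpha_N^0-\alpha_N^\mu$. By Assumption~\ref{assum:ri}(i), both $\alpha_N^0$ and $\alpha_N^\mu$ lie in $\calN_N^\mu\subseteq\calA_N(S_N)$. Hence $v^\ast\in\calN_N^\mu-\{\alpha_N^\mu\}\subseteq\calV_N^\mu$, so $v^\ast$ is an admissible direction for the Riesz representation~\eqref{eq:Riesz}. Moreover, $\alpha_N^\mu+tv^\ast$ stays in $\calN_N^\mu$ for all $t\in[0,1]$: the neighborhood is convex because $Q_N^\mu$ is convex in $\theta$ (quadratic plus ridge) along this segment.

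I then apply the local smoothness condition on the functional in Assumption~\ref{assum:average_functional}. I expect it to take the form
\[
\sup_{\alpha\in\calN_N^\mu}\frac{\bigl|\tau(\alpha)-\tau(\alpha_N^\mu)-D\tau(\alpha_N^\mu)[\alpha-\alpha_N^\mu]\bigr|}{\|v_N^\tau\|_\mu}=o(N^{-1/2}),
\]
which parallels Assumption~3.5 of \citet{chen2014sieve}. With $\alpha=\alpha_N^0$ this gives
\[
\tau(\alpha_N^0)-\tau(\alpha_N^\mu)=D\tau(\alpha_N^\mu)[v^\ast]+\|v_N^\tau\|_\mu\,o(N^{-1/2}).
\]
By~\eqref{eq:Riesz}, $D\tau(\alpha_N^\mu)[v^\ast]=\langle v_N^\tau,v^\ast\rangle_\mu$, and Cauchy--Schwarz in the Hilbert space $(\calV_N^\mu,\langle\cdot,\cdot\rangle_\mu)$ bounds this by $\|v_N^\tau\|_\mu\|v^\ast\|_\mu$. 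Taking absolute values yields the first display.

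The step I expect to be the main obstacle is whether the remainder is really controlled at $\alpha_N^0$. If Assumption~\ref{assum:average_functional} only controls the remainder on a shrinking ball around $\alpha_N^\mu$, I need $\alpha_N^0$ to lie in that ball. I would obtain this from the definition of $\calN_N^\mu$ together with $Q_N^\mu(\alpha_N^0)-Q_N^\mu(\alpha_N^\mu)\le \tfrac{\mu_N}{2}\bigl(\|\theta_{\alpha_N^0}\|_e^2-\|\theta_{\alpha_N^\mu}\|_e^2\bigr)$. This inequality holds because $Q_N^0(\alpha_N^0)\le Q_N^0(\alpha_N^\mu)$, and Assumption~\ref{assum:ri}(i) is designed to guarantee that the right-hand side is at most $\xi_N\log N$. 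For the linear average functionals in~\eqref{eq:avg_functionals} in $\theta$, the remainder in $\theta$ vanishes exactly, so only the $\delta$ component matters.

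\emph{Coverage.} Write $b_N\equiv\tau(\alpha_N^0)-\tau(\alpha_N^\mu)$ and $T_N\equiv\sqrt N\{\tau(\widehat\alpha_N^\mu)-\tau(\alpha_N^\mu)\}/\sigma_N$. The event that $\tau(\alpha_N^0)$ lies in the interval is the event $|T_N-\sqrt N b_N/\sigma_N|\le z+\sqrt N\|v_N^\tau\|_\mu\|v^\ast\|_\mu/\sigma_N$, where $z=\Phi^{-1}(1-q/2)$. By the first part,
\[
\sqrt N|b_N|/\sigma_N\le \sqrt N\|v_N^\tau\|_\mu\|v^\ast\|_\mu/\sigma_N+o(1)\,\|v_N^\tau\|_\mu/\sigma_N.
\]
Assumption~\ref{assum:norm ratio} gives $\sigma_N\le\|v_N^\tau\|_{sd,\mu}\asymp\|v_N^\tau\|_\mu$. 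I also need the reverse direction, $\|v_N^\tau\|_\mu/\sigma_N=O(1)$, i.e.\ that the deterministic penalty part $\mu_N\|\theta^\tau\|_e^2$ does not dominate the score variance. If Assumption~\ref{assum:norm ratio} only delivers the upper bound, I would take this comparability from the conditions in Assumption~\ref{assum:clt v2}. Given it, writing $a_N\equiv\sqrt N b_N/\sigma_N$ and $B_N\equiv\sqrt N\|v_N^\tau\|_\mu\|v^\ast\|_\mu/\sigma_N$, the coverage event contains $\{|T_N-a_N|\le z+|a_N|-o(1)\}\supseteq\{|T_N|\le z-o(1)\}$.

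Theorem~\ref{theo:3.1 v3 0} gives $T_N\to_d\mathcal N(0,1)$. Since $\Phi$ is continuous, $P(|T_N|\le z-o(1))\to1-q$, which gives the $\liminf$ bound~\eqref{eq:confidence interval}.
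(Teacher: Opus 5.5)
Your bias bound follows the paper's argument: Assumption~\ref{assum:ri}(i) places $\alpha_N^0$ in $\calN_N^\mu$, you evaluate Assumption~\ref{assum:average_functional}(ii) at $\alpha=\alpha_N^0$, and you bound $D\tau(\alpha_N^\mu)[\alpha_N^0-\alpha_N^\mu]=\langle v_N^\tau,\alpha_N^0-\alpha_N^\mu\rangle_\mu$ by Cauchy--Schwarz. That part is fine. You should drop the convexity remark. With fixed coefficients, $P(x,\alpha)=G(x,\delta)'\theta$ is nonlinear in $\delta$, so $Q_N^\mu$ need not be convex along the segment. You also never use the segment: you only need $\alpha_N^0\in\calN_N^\mu$ and $\alpha_N^0-\alpha_N^\mu\in\calV_N^\mu$.

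For coverage, the paper only says the claim ``follows directly'' from Theorem~\ref{theo:3.1 v3 0}. Your algebra fills this in correctly and shows what is actually needed: the bias remainder $\|v_N^\tau\|_\mu\,o(N^{-1/2})$ must be $o(N^{-1/2}\sigma_N)$. In other words, you need $\|v_N^\tau\|_\mu=O(\sigma_N)$ whenever the linearization remainder is not identically zero.

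The gap is where you propose to get that comparability. Assumption~\ref{assum:clt v2} is a CLT for the score normalized by its own standard deviation, so it says nothing about the size of $\sigma_N$ relative to $\|v_N^\tau\|_\mu$. Assumption~\ref{assum:norm ratio} gives only $\sigma_N\lesssim\|v_N^\tau\|_\mu$, because $\|v_N^\tau\|_{sd,\mu}^2=\sigma_N^2+\mu_N\|\theta^\tau\|_e^2$ can be dominated by the deterministic ridge term.

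Here is how it fails concretely. Suppose $\|v_N^\tau\|_\mu^2=\sum_i(h_{N,i}'D_N^\mu)^2/w_{N,i}$ is driven by eigen-directions of $H_N^\mu$ with $w_{N,i}$ close to $\mu_N$, i.e.\ near-null directions of the unpenalized design. Then $\|v_N^\tau\|_\mu^2\approx\|v_N^\tau\|_{sd,\mu}^2\approx\mu_N\|\theta^\tau\|_e^2$. Meanwhile the score variance is bounded by a multiple of the unpenalized curvature $w_{N,i}-\mu_N$ in those directions, which can be $o(\mu_N)$. So Assumptions~\ref{assum:norm ratio} and \ref{assum:clt v2} can both hold with $\|v_N^\tau\|_\mu/\sigma_N\to\infty$, and your $o(1)\|v_N^\tau\|_\mu/\sigma_N$ term need not vanish.

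The paper's one-line proof tacitly assumes the same comparability. So does the step in its proof of Theorem~\ref{theo:3.1 v3 0} that passes from the $\|v_N^\tau\|_{sd,\mu}$-normalization to $\sigma_N$. The only place the paper actually derives it is the proof of Lemma~\ref{lem:bound1}, from Assumption~\ref{assume:regular functional II} together with $\|v_N^\tau\|_{sd,\mu}^2=\|v_N^\tau\|_\mu^2+O(N^{-1})$.

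You have three ways to close the gap:
\begin{itemize}
\item add $\mu_N\|\theta^\tau\|_e^2\lesssim\sigma_N^2$ as an explicit hypothesis (under Assumption~\ref{assum:norm ratio} this is equivalent to $\|v_N^\tau\|_\mu\lesssim\sigma_N$);
\item import it from those spectral conditions; or
\item restrict to functionals linear in $\theta$ with no fixed coefficients, where the remainder is exactly zero and your coverage argument needs no comparability at all.
\end{itemize}
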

\begin{proof}
See Appendix~\ref{Proof of Theorem ref{theo:3.1 v3}}.
\end{proof}

Corollary~\ref{theo:3.1 v3} separates sampling uncertainty, measured by $\sigma_N$, from regularization bias, bounded by $\|v_N^\tau\|_\mu\|\alpha_N^0-\alpha_N^\mu\|_\mu$.
A larger $\mu_N$ may reduce sampling variance but generally increases regularization bias. For a fixed penalty, a denser grid may reduce approximation bias while increasing ill-conditioning and sampling uncertainty. The coverage statement concerns the best fixed-grid target $\tau(\alpha_N^0)$. It extends to $\tau(\alpha_0)$ under an additional approximation condition such as negligible approximation bias $|\tau(\alpha_N^0)-\tau(\alpha_0)|=o(N^{-1/2}\sigma_N)$. The theorem also allows irregular functionals, for which $\sigma_N$ and $\|v_N^\tau\|_\mu$ may diverge. The following subsections construct feasible estimators of the sampling and bias components.

\subsection{Sieve Riesz representer and bias bounds}\label{sec:riesz.representer}

We first estimate the sieve Riesz representer by its empirical counterpart and then use it to construct feasible sampling-variance and bias terms.
Let
\[
\widehat S_N=\{r:\theta_{\widehat{\alpha}_N^\mu,r}>0\},
\qquad
\widehat{\calV}_N^\mu=
\operatorname{clsp}\bigl(\calA_N(\widehat S_N)-\{\widehat\alpha_N^\mu\}\bigr).
\]
For $v_1,v_2\in\widehat{\calV}_N^\mu$, define the empirical inner product
\[
\langle v_1,v_2\rangle_N\equiv\frac1N\sum_{i=1}^N D^2\ell_N^\mu(z_i,\widehat\alpha_N^\mu)[v_1,v_2],
\]
with induced norm $\|\cdot\|_N$. The empirical directional derivative is
\[
D\tau(\widehat{\alpha}_N^\mu)[v]\equiv\left.\frac{\partial}{\partial t}\tau\!\left(\widehat{\alpha}_N^\mu+tv\right)\right|_{t=0}.
\] 
The empirical sieve Riesz representer $\widehat v_N^\tau$ solves
\begin{equation}\label{eq:empirical sieve riesz representer equation system}
	D\tau(\widehat{\alpha}_N^\mu)[\widehat v_N^\tau]=\sup_{v\in\widehat{\calV}_N^\mu,\ v\neq0}\frac{|D\tau(\widehat{\alpha}_N^\mu)[v]|^2}{\|v\|_N^2},\qquad D\tau(\widehat{\alpha}_N^\mu)[v]=\langle v,\widehat v_N^\tau\rangle_N.
\end{equation}
Lemma \ref{lem:empirical sieve Riesz representer closed solution} gives its closed form.

\begin{lemma}\label{lem:sieve variance estimator} 
	(i)
	Under Assumptions~\ref{assum:ri} and \ref{assum:riesz representer estimator I}(i)--(iii),
	\[
	\left|\frac{\|\widehat v_N^\tau\|_\mu}{\|v_N^\tau\|_\mu}-1\right|=O_p(\epsilon_N^*),
	\qquad
	\frac{\|\widehat v_N^\tau-v_N^\tau\|_\mu}{\|v_N^\tau\|_\mu}=O_p(\epsilon_N^*),
	\]
	where $\epsilon_N^*$ is specified in Assumption~\ref{assum:riesz representer estimator I}.
	
	(ii) If, in addition, Assumptions~\ref{assum:norm ratio} and \ref{assum:riesz representer estimator I}(iv) hold, then
	\[
	\left|
	\frac{\|\widehat v_N^\tau\|_{sd,\mu,N}}
	{\|v_N^\tau\|_{sd,\mu}}-1
	\right|=O_p(\epsilon_N^*),
	\]
	where, continuing with the notation from Lemma \ref{lem:empirical sieve Riesz representer closed solution},
	\[
	\|\widehat v_N^\tau\|_{sd,\mu,N}^2
	=
	\widehat\sigma_N^2+\mu_N\|\widehat\theta^\tau\|_{e}^2,
	\qquad
	\widehat\sigma_N^2=
	\frac1N\sum_{i=1}^N
	\left(D\ell_N^\mu(z_i,\widehat\alpha_N^\mu)[\widehat v_N^\tau]\right)^2.
	\]
\end{lemma}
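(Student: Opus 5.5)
This follows the standard argument for sieve Riesz representer estimation in \citet[Lemma 3.1]{chen2014sieve}, adapted to the active set. Assumption~\ref{assum:riesz representer estimator I} is not shown here. I take it to give: (i) $\widehat S_N=S_N$ with probability approaching one; (ii) uniform closeness of the empirical and population inner products, $\sup_{v\in\calV_N^\mu}|\|v\|_N^2/\|v\|_\mu^2-1|=O_p(\epsilon_N^*)$; (iii) closeness of the derivatives, $\sup_{v}|D\tau(\widehat\alpha_N^\mu)[v]-D\tau(\alpha_N^\mu)[v]|/\|v\|_\mu=O_p(\epsilon_N^*)\|v_N^\tau\|_\mu$; and (iv) uniform convergence of the empirical score second moment. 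The first step uses Assumption~\ref{assum:ri} and (i). On the event $\widehat S_N=S_N$, both affine hulls equal $\calA_N(S_N)$ translated to the origin, so $\widehat{\calV}_N^\mu=\calV_N^\mu$. Then $\widehat v_N^\tau$ and $v_N^\tau$ live in the same finite-dimensional space, with coordinates $(\widehat H_N^\mu)^{-1}\widehat D_N^\mu$ and $(H_N^\mu)^{-1}D_N^\mu$.

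For part (i), I compare the two Riesz equations. For every $v\in\calV_N^\mu$,
\[
\langle \widehat v_N^\tau - v_N^\tau, v\rangle_\mu
= \bigl(D\tau(\widehat\alpha_N^\mu)[v]-D\tau(\alpha_N^\mu)[v]\bigr) + \bigl(\langle \widehat v_N^\tau,v\rangle_\mu-\langle \widehat v_N^\tau,v\rangle_N\bigr).
\]
Take $v=\widehat v_N^\tau-v_N^\tau$ and divide by $\|v\|_\mu$. The first term is at most $O_p(\epsilon_N^*)\|v_N^\tau\|_\mu$ by (iii). For the second term, I polarize (ii), writing $\langle a,b\rangle=\tfrac14(\|a+b\|^2-\|a-b\|^2)$ and using $\|a\pm b\|^2\le 2\|a\|^2+2\|b\|^2$. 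This gives $|\langle a,b\rangle_N-\langle a,b\rangle_\mu|\le O_p(\epsilon_N^*)\|a\|_\mu\|b\|_\mu$, so the second term is at most $O_p(\epsilon_N^*)\|\widehat v_N^\tau\|_\mu$. Hence
\[
\|\widehat v_N^\tau-v_N^\tau\|_\mu\le O_p(\epsilon_N^*)\bigl(\|v_N^\tau\|_\mu+\|\widehat v_N^\tau\|_\mu\bigr).
\]
By the triangle inequality, $\|\widehat v_N^\tau\|_\mu\le\|v_N^\tau\|_\mu+\|\widehat v_N^\tau-v_N^\tau\|_\mu$. Absorbing the error term (valid since $\epsilon_N^*=o(1)$) gives $\|\widehat v_N^\tau\|_\mu=O_p(\|v_N^\tau\|_\mu)$. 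Both claims of (i) follow, the ratio claim by the reverse triangle inequality.

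For part (ii), I write $\|\widehat v_N^\tau\|_{sd,\mu,N}^2-\|v_N^\tau\|_{sd,\mu}^2$ as the sum of three pieces:
\[
\bigl[\|\widehat v_N^\tau\|_{sd,\mu,N}^2-\|\widehat v_N^\tau\|_{sd,\mu}^2\bigr]
+\bigl[\|\widehat v_N^\tau\|_{sd,\mu}^2-\|v_N^\tau\|_{sd,\mu}^2\bigr].
\]
The second bracket I split further into a variance difference and a penalty difference. The first bracket replaces the empirical second moment, evaluated at $\widehat\alpha_N^\mu$, with the population variance at $\alpha_N^\mu$. Because the population score has mean zero at the interior minimizer $\alpha_N^\mu$ (first-order condition on $\calV_N^\mu$), the variance equals the second moment. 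So (iv) bounds this bracket by $O_p(\epsilon_N^*)\|\widehat v_N^\tau\|_{sd,\mu}^2$. For the second bracket, $\|\cdot\|_{sd,\mu}$ is a seminorm induced by a semi-inner product, so
\[
\bigl|\|a\|_{sd,\mu}-\|b\|_{sd,\mu}\bigr|\le\|a-b\|_{sd,\mu}\lesssim\|a-b\|_\mu
\]
by Assumption~\ref{assum:norm ratio}. Combining this with (i) and $\|v_N^\tau\|_{sd,\mu}\asymp\|v_N^\tau\|_\mu$ gives a relative error of $O_p(\epsilon_N^*)$. Finally, I take square roots using $|\sqrt{x}-1|\le|x-1|$.

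The main obstacle is the first bracket in (ii). The score is evaluated at the estimated $\widehat\alpha_N^\mu$ rather than at $\alpha_N^\mu$, and $\widehat v_N^\tau$ is data dependent. This requires a stochastic equicontinuity argument that holds uniformly over the normalized unit sphere of $\calV_N^\mu$, and the dimension $|S_N|$ of that sphere grows with $N$. I would handle it by putting the direction into coordinates, writing $D\ell_N^\mu(z,\alpha)[v]=\gamma_v' s(z,\alpha)$. This reduces the problem to operator-norm convergence of the matrix $N^{-1}\sum s(z_i,\widehat\alpha)s(z_i,\widehat\alpha)'$, after normalization by $H_N^\mu$, and that is exactly what Assumption~\ref{assum:riesz representer estimator I}(iv) should supply.
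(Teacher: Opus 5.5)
Your proposal is correct and follows essentially the same route as the paper: part (i) is the Riesz-equation comparison of Lemma~5.1 in \citet{chen2014sieve}, with your explicit absorption step playing the role of the paper's appeal to their (B.13). Part (ii) uses the same split into an empirical-versus-population score term and a term $\|\widehat v_N^\tau-v_N^\tau\|_{sd,\mu}$. The first is controlled by Assumption~\ref{assum:riesz representer estimator I}(iv), plus your observation that the score at $\alpha_N^\mu$ has mean zero on $\calV_N^\mu$ (a step the paper leaves implicit); the second is controlled by part (i) and Assumption~\ref{assum:norm ratio}.

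Two minor corrections to your reading of the hypotheses. First, the reduction $\widehat{\calV}_N^\mu=\calV_N^\mu$ is taken from Assumption~\ref{assum:ri}(ii), and items (i)--(ii) of Assumption~\ref{assum:riesz representer estimator I} bound the Hessian bilinear forms directly, so no polarization is needed. Second, item (iii) there is an unscaled $O(\epsilon_N^*)$ bound, so the relative rate in (i) tacitly requires $\|v_N^\tau\|_\mu$ to be bounded away from zero. The paper's own proof uses this condition implicitly too, when it absorbs that term into $O_p(\epsilon_N^*\|\widehat v_N^\tau\|_\mu)$.
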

\begin{proof}
	See Appendix~\ref{proof of Lemma {lem:sieve variance estimator}}.
\end{proof}
The difference between $\alpha_N^0$ and $\alpha_N^\mu$ is driven by the $\ell_2$ penalty applied to the probability weights. Lemma \ref{lem:bias bound} bounds this difference in the pseudo-norm $\|\cdot\|_\mu$. The corresponding difference in the strong norm $\|\cdot\|_s$ need not be small.

\begin{lemma}\label{lem:bias bound}
	Under Assumptions~\ref{assum:ri} and \ref{assum:riesz representer estimator I}(v),
	\[
	\|\alpha_N^0-\alpha_N^\mu\|_\mu
	\le
	\sqrt{\mu_N}+O(\epsilon_N^*\xi_N),
	\]
	where $\xi_N$ is as in Equation (\ref{eq:xi_N neighborhood}) and $\epsilon_N^*$ is as in Assumption \ref{assum:riesz representer estimator I}.
\end{lemma}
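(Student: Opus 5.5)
Set $v\equiv\alpha_N^0-\alpha_N^\mu$. By Assumption~\ref{assum:ri}(i), both $\alpha_N^0$ and $\alpha_N^\mu$ lie in the relative interior of $\calN_N^\mu$, so $v\in\calV_N^\mu$ and the segment $\alpha_N^\mu+tv$, $t\in[0,1]$, stays in the active face $\calA_N(S_N)$. The idea is to compare the two optimality conditions along this segment and read off $\|v\|_\mu$ from a second-order expansion of the penalized criterion.

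\emph{Step 1: first-order conditions.} Since $\alpha_N^\mu$ minimizes $Q_N^\mu$ over $\calA_N$ and is a relative-interior point of the face, the directional derivative satisfies $DQ_N^\mu(\alpha_N^\mu)[v]=0$. Likewise, $\alpha_N^0$ minimizes $Q_N^0$ and is relative-interior, so $DQ_N^0(\alpha_N^0)[-v]=0$.

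\emph{Step 2: expand $Q_N^\mu$ along the segment.} Write
\[
\|v\|_\mu^2=D^2Q_N^\mu(\alpha_N^\mu)[v,v]
=DQ_N^\mu(\alpha_N^0)[v]-DQ_N^\mu(\alpha_N^\mu)[v]+r_N,
\]
where $r_N$ is the remainder from replacing the integrated second derivative along the segment by its value at $\alpha_N^\mu$. The middle term vanishes by Step 1. Since $Q_N^\mu=Q_N^0+\tfrac{\mu_N}{2}\|\theta\|_e^2$, the first term equals
\[
DQ_N^0(\alpha_N^0)[v]+\mu_N\,\theta_{\alpha_N^0}'\theta_v=\mu_N\,\theta_{\alpha_N^0}'(\theta_{\alpha_N^0}-\theta_{\alpha_N^\mu}).
\]

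\emph{Step 3: bound the penalty term.} This term is at most $\mu_N\|\theta_{\alpha_N^0}\|_e\,\|\theta_v\|_e$. Two facts control it:
\begin{itemize}
\item $\|\theta_{\alpha_N^0}\|_e\le 1$ on the simplex;
\item $\sqrt{\mu_N}\,\|\theta_v\|_e\le\|v\|_\mu$, because $\|v\|_\mu^2=D^2Q_N^0[v,v]+\mu_N\|\theta_v\|_e^2$ and $D^2Q_N^0\succeq0$ at least up to the remainder.
\end{itemize}
Hence $\|v\|_\mu^2\le\sqrt{\mu_N}\,\|v\|_\mu+|r_N|$, which gives $\|v\|_\mu\le\sqrt{\mu_N}+|r_N|/\|v\|_\mu$. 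Alternatively, the quadratic inequality gives $\|v\|_\mu\le\sqrt{\mu_N}+\sqrt{|r_N|}$, whichever matches the rate.

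\emph{Step 4: the remainder.} This is the main obstacle. I would use Assumption~\ref{assum:riesz representer estimator I}(v), which I expect to be a local Lipschitz or uniform-closeness condition on $D^2\ell_N^\mu$ over $\calN_N^\mu$ at rate $\epsilon_N^*$. Then $|r_N|\lesssim\epsilon_N^*\sup_{\alpha\in\calN_N^\mu}\bigl(Q_N^\mu(\alpha)-Q_N^\mu(\alpha_N^\mu)\bigr)$, or $\epsilon_N^*\|v\|_\mu\cdot(\text{radius})$. Because both endpoints lie in $\calN_N^\mu$, whose defining radius is of order $\xi_N$ (up to the logarithm absorbed in the assumption), this yields the term $O(\epsilon_N^*\xi_N)$ after dividing by $\|v\|_\mu$ (or after the square-root step).

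The delicate points are twofold. First, one must verify that the curvature of the unpenalized part is nonnegative, since the design may be singular; the positive semidefiniteness comes from the Gauss–Newton structure of weighted least squares plus a nonlinear-residual term, which is again absorbed into $r_N$. Second, the bookkeeping must make the remainder come out exactly as $\epsilon_N^*\xi_N$ rather than its square root. If needed, I would handle small $\|v\|_\mu$ separately: when $\|v\|_\mu\le\sqrt{\mu_N}$ the bound is trivial.
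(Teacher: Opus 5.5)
Your proposal is correct and follows the paper's proof. It uses the same two first-order conditions on the active face, the same Cauchy--Schwarz bound $\mu_N\theta_{\alpha_N^0}'\theta_v\le\sqrt{\mu_N}\|v\|_\mu$, and the same remainder control; the paper simply normalizes $\bar v=v/\|v\|_\mu$ at the start. Assumption~\ref{assum:riesz representer estimator I}(v) is exactly a bound on your $r_N$ for unit directions, with $\alpha=\alpha_N^0$ included in the supremum, so by linearity in the direction $|r_N|\le\|v\|_\mu\,O(\epsilon_N^*\xi_N)$ and your first option closes the argument with no square-root loss (the nonnegativity of the unpenalized curvature that you flag is also taken for granted in the paper).
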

\begin{proof}
	See Appendix~\ref{proof of {lem:bias bound}}.
\end{proof}
If $\epsilon_N^*\xi_N=o(N^{-1/2})$ (see, for example, Assumption~5.2 of \citet{chen2014sieve}), Corollary \ref{theo:3.1 v3} and Lemmas~\ref{lem:sieve variance estimator}--\ref{lem:bias bound} yield the feasible half-length
\[
\widehat c_N(q)
=\Phi^{-1}(1-q/2)N^{-1/2}\widehat\sigma_N
(\sqrt{\mu_N}+N^{-1/2})
(1+N^{-1/2}\xi_N^{-1})
\|\widehat v_N^\tau\|_\mu.
\]
Thus,
$[\tau(\widehat{\alpha}_N^\mu)-\widehat c_N(q),
\tau(\widehat{\alpha}_N^\mu)+\widehat c_N(q)]$
is a feasible conservative interval; Corollary~\ref{lem:append ci 1} gives the formal statement. It does not require the unpenalized design matrix to be nonsingular and can accommodate irregular functionals. Its implementation, however, requires the estimated sieve Riesz representer and $\widehat\sigma_N^2$. The next subsection gives a simpler bootstrap procedure for regular functionals.

\subsection{Bootstrap}

This subsection provides a direct bootstrap procedure for regular functionals. We call $\tau(\cdot)$ regular at $\alpha_N^\mu$ when the Riesz representer of $D\tau(\alpha_N^\mu)[\cdot]$ is uniformly bounded, as specified in Assumption~\ref{assum:regular functional}. Boundedness permits a nonstudentized bootstrap approximation without the need to compute any variance estimator (otherwise, the nonstudentized bootstrapped distribution may not serve as a good approximation, see also, e.g., Theorem 5.2.(2) in \cite{chen2015sieve}).
			
\begin{assumption}[Bounded Riesz representer]\label{assum:regular functional} 
For the given functional $\tau(\cdot)$ 
, there exists a finite constant $C_\tau>0$ for all chosen grids and penalty $\mu_N$, such that $\|v_N^\tau\|_\mu<C_\tau.$
\end{assumption}
Assumption~\ref{assum:regular functional} and Lemma~\ref{lem:bias bound} imply that any $b_N>0$ satisfying $\sqrt{\mu_N}=o(b_N)$ bounds $|\tau(\alpha_N^0)-\tau(\alpha_N^\mu)|$ up to an $o(N^{-1/2})$ remainder under the conditions stated below. A sharper option is available when the functional places little weight on directions whose curvature is generated primarily by the penalty, and we discuss one such case in Assumption \ref{assume:regular functional II}. 
			
Let $v_N^\tau =(\delta^\tau,\theta^{\tau'}\mathbf{1}_{S_N}(\cdot))$ be the corresponding Riesz representer of $D\tau(\alpha_N^\mu)[\cdot]$, $\gamma_\tau=(\delta^{\tau'},\theta^{\tau'})'=(H_N^\mu)^{-1}D_N^\mu,$ and $H_N^\mu=\sum_iw_{N,i}h_{N,i}h_{N,i}'$ be the spectral decomposition, with eigenvalues $w_{N,i}$ in descending order. 
The next assumption requires $D_N^\mu$ to place negligible weight on the lower spectral band $H_N^\mu$, that is, on directions whose curvature is close to the $\ell_2$-induced lower edge $\mu_N$.
			
\begin{assumption}[Negligible weight on the lower spectral band]\label{assume:regular functional II}
For every $i$, $w_{N,i} \ge\mu_N$, and there exist a fixed $\eta>0$ such that  
$$\frac{N}{\mu_N}\sum_{i: |w_{N,i} - \mu_N|\le (1+\eta)\mu_N }
{(h_{N,i}'D_N^\mu)^2} = O(1).$$
\end{assumption}
In the well-conditioned case, all eigenvalues of $H_N^\mu$ are uniformly bounded away from zero. The lower spectral band is then empty for all sufficiently large $N$ whenever $\mu_N=o(1)$, so Assumption~\ref{assume:regular functional II} holds automatically. More generally, the assumption ensures that the sampling variance $\sigma_N^2$ is sufficiently large relative to the $\ell_2$ component $\mu_N\|\theta^\tau\|_e^2$, leading to the following result.

\begin{lemma}\label{lem:bound1}
Suppose that $\epsilon_N^*\xi_N=o(N^{-1/2})$, Assumptions~\ref{assum:ri}, \ref{assum:regular functional}, \ref{assume:regular functional II}, \ref{assum:average_functional} and \ref{assum:riesz representer estimator I}.(v) hold, $\|v_N^\tau\|_{sd,\mu}^2=\|v_N^\tau\|_{\mu}^2+O(N^{-1})$ and $\mu_N=o(1)$, then $$\sqrt{2\mu_N}\sigma_N\ge|\tau(\alpha_N^0)-\tau(\alpha_N^\mu)|+o(N^{-1/2}).$$
\end{lemma}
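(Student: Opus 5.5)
The plan is to chain three ingredients: the bias bound from Corollary~\ref{theo:3.1 v3}, the pseudo-norm bound from Lemma~\ref{lem:bias bound}, and a comparison of $\|v_N^\tau\|_\mu^2$ with $\sigma_N^2$ that uses Assumption~\ref{assume:regular functional II}.

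\emph{Step 1: reduce to the Riesz-representer norm.} By the first part of Corollary~\ref{theo:3.1 v3} (which needs only Assumptions~\ref{assum:ri} and \ref{assum:average_functional}),
\[
|\tau(\alpha_N^0)-\tau(\alpha_N^\mu)|\le\|v_N^\tau\|_\mu\bigl(\|\alpha_N^0-\alpha_N^\mu\|_\mu+o(N^{-1/2})\bigr).
\]
Lemma~\ref{lem:bias bound}, under Assumption~\ref{assum:riesz representer estimator I}(v), gives $\|\alpha_N^0-\alpha_N^\mu\|_\mu\le\sqrt{\mu_N}+O(\epsilon_N^*\xi_N)$. The hypothesis $\epsilon_N^*\xi_N=o(N^{-1/2})$ then turns this into $\sqrt{\mu_N}+o(N^{-1/2})$. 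Assumption~\ref{assum:regular functional} bounds $\|v_N^\tau\|_\mu$ by $C_\tau$, so every remainder multiplied by it stays $o(N^{-1/2})$. Hence
\[
|\tau(\alpha_N^0)-\tau(\alpha_N^\mu)|\le\sqrt{\mu_N}\,\|v_N^\tau\|_\mu+o(N^{-1/2}),
\]
and it remains to show $\sqrt{\mu_N}\|v_N^\tau\|_\mu\le\sqrt{2\mu_N}\sigma_N+o(N^{-1/2})$.

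\emph{Step 2: split off the penalty part.} By hypothesis, $\|v_N^\tau\|_\mu^2=\|v_N^\tau\|_{sd,\mu}^2+O(N^{-1})=\sigma_N^2+\mu_N\|\theta^\tau\|_e^2+O(N^{-1})$. Multiplying by $\mu_N=o(1)$ makes the $O(N^{-1})$ term $o(N^{-1})$. It therefore suffices to show
\[
\mu_N\|\theta^\tau\|_e^2\le\sigma_N^2+o(N^{-1}/\mu_N),
\]
since then $\mu_N\|v_N^\tau\|_\mu^2\le2\mu_N\sigma_N^2+o(N^{-1})$. Taking square roots with $\sqrt{a+b}\le\sqrt a+\sqrt b$ gives the claim.

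\emph{Step 3: the spectral argument (the main obstacle).} Write $\gamma_\tau=(H_N^\mu)^{-1}D_N^\mu$, so that $\|v_N^\tau\|_\mu^2=\sum_i(h_{N,i}'D_N^\mu)^2/w_{N,i}$. Bound $\mu_N\|\theta^\tau\|_e^2\le\mu_N\|\gamma_\tau\|_e^2=\sum_i\mu_N(h_{N,i}'D_N^\mu)^2/w_{N,i}^2$, and split the eigen-indices into the lower band $|w_{N,i}-\mu_N|\le(1+\eta)\mu_N$ and its complement.
\begin{itemize}
\item \textbf{Lower band.} Here $w_{N,i}\ge\mu_N$, so each term is at most $(h_{N,i}'D_N^\mu)^2/\mu_N$. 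By Assumption~\ref{assume:regular functional II} the band sum is $O(N^{-1})$, which is the required $o(N^{-1}/\mu_N)$ after using $\mu_N=o(1)$.
\item \textbf{Complement.} Here $w_{N,i}>(2+\eta)\mu_N$, so $\mu_N/w_{N,i}^2<(2+\eta)^{-1}/w_{N,i}$. This part is therefore at most $(2+\eta)^{-1}\|v_N^\tau\|_\mu^2$.
\end{itemize}
Substituting $\|v_N^\tau\|_\mu^2=\sigma_N^2+\mu_N\|\theta^\tau\|_e^2+O(N^{-1})$ and rearranging gives
\[
\mu_N\|\theta^\tau\|_e^2\,(1-(2+\eta)^{-1})\le(2+\eta)^{-1}\sigma_N^2+O(N^{-1}),
\]
that is, $\mu_N\|\theta^\tau\|_e^2\le\sigma_N^2/(1+\eta)+O(N^{-1})$. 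This is stronger than needed. The delicate point is the lower-band remainder. If the $O(N^{-1})$ slack is too coarse, I would instead keep the lower-band contribution inside $\sqrt{\mu_N}\|v_N^\tau\|_\mu$ and bound it directly: it contributes at most $\sqrt{\mu_N\cdot O(\mu_N/N)}=o(N^{-1/2})$. Either route should close Step 3, and combining Steps 1–3 yields the stated inequality.
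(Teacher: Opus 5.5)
Your proposal is correct and follows essentially the same route as the paper. You reduce, via Corollary~\ref{theo:3.1 v3}, Lemma~\ref{lem:bias bound} and Assumption~\ref{assum:regular functional}, to bounding $\sqrt{\mu_N}\|v_N^\tau\|_\mu$; you then split $\mu_N\|\gamma_\tau\|_e^2$ into the lower spectral band, which is $O(N^{-1})$ by Assumption~\ref{assume:regular functional II}, and its complement, which is at most $(2+\eta)^{-1}\|v_N^\tau\|_\mu^2$, and rearrange to get $\|v_N^\tau\|_\mu^2\le 2\sigma_N^2+O(N^{-1})$. Your explicit bound $\mu_N\|\theta^\tau\|_e^2\le\sigma_N^2/(1+\eta)+O(N^{-1})$ is exactly the step the paper leaves implicit, so the fallback route you mention is not needed.
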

			
\begin{proof}
See Appendix~\ref{Proof of Lemma ref{lem:bound1}}.
\end{proof}

Lemma~\ref{lem:bound1} bounds the regularization bias in the functional up to an asymptotically negligible remainder by $\sqrt{2\mu_N}$ times its asymptotic standard deviation. Without calculating the Riesz representer case by case, we show below that $\sigma_N$ can be consistently estimated by bootstrap under conditions stated in Corollary \ref{coro}, and the result also yields a feasible bias adjustment without requiring direct estimation of the distance between the penalized and unpenalized pseudo-true parameters.

We use the multinomial bootstrap to estimate $\sigma_N$. In each bootstrap iteration, observations are reweighted by random weights $(\omega_{n,N})_{n=1}^N$ drawn from the multinomial distribution such that
\[
  (\omega_{n,N})_{n=1}^N
  \sim \operatorname{Multinomial}
  \left(N; N^{-1},\ldots,N^{-1}\right).
\]
 See, for example, Assumption Boot.2 in \citet{chen2015sieve}.  For a realization $\omega=(\omega_{1,N},\ldots,\omega_{N,N})'$, define
\[
\widehat Q_N^{\mu,B}(\alpha)= \frac1N\sum_{n=1}^N \omega_{n,N}\ell^0_N(z_n,\alpha) +\frac{\mu_N}{2}\|\theta_\alpha\|_e^2.
\]
Let $\widehat\alpha_N^{\mu,B}$ minimize $\widehat Q_N^{\mu,B}$ over $\calA_N$, and let $s^B$ denote the conditional bootstrap standard deviation of $\sqrt{N}\{\tau(\widehat{\alpha}_N^{\mu,B}) -\tau(\widehat{\alpha}_N^\mu)\}$.
			
			\begin{corollary}\label{coro}
				Suppose that Theorems \ref{theo:3.1 v3}, \ref{theo:bootstrap}, Lemmas \ref{lem:bias bound}, \ref{lem:bound1} hold, then
				\[
				\liminf_{N\to\infty}
				P\left(
				\tau(\alpha_N^0) \in
				\left[
				\tau(\widehat{\alpha}_N^\mu)
				\pm (\Phi^{-1}(1-q/2) N^{-1/2}s^B+\sqrt{2\mu_N} s^B)
				\right]
				\right)
				\ge 1-q.
				\]
			\end{corollary}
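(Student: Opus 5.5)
The plan is to decompose the coverage event into a sampling part and a bias part, and to replace the unknown $\sigma_N$ by the bootstrap quantity $s^B$. Write
\[
\tau(\widehat\alpha_N^\mu)-\tau(\alpha_N^0)=\{\tau(\widehat\alpha_N^\mu)-\tau(\alpha_N^\mu)\}+\{\tau(\alpha_N^\mu)-\tau(\alpha_N^0)\}.
\]
Theorem~\ref{theo:3.1 v3 0}, which is the first part of the argument underlying Corollary~\ref{theo:3.1 v3}, gives $\sqrt N\{\tau(\widehat\alpha_N^\mu)-\tau(\alpha_N^\mu)\}/\sigma_N\to_d\mathcal N(0,1)$.

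Lemma~\ref{lem:bound1} bounds the bias term by $\sqrt{2\mu_N}\,\sigma_N+o(N^{-1/2})$. Its hypotheses are covered by the assumptions invoked through Lemmas~\ref{lem:bias bound} and \ref{lem:bound1}.

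It follows that the event
\[
|\tau(\widehat\alpha_N^\mu)-\tau(\alpha_N^\mu)|\le \Phi^{-1}(1-q/2)N^{-1/2}s^B+\sqrt{2\mu_N}s^B-\sqrt{2\mu_N}\sigma_N-o(N^{-1/2})
\]
implies coverage of $\tau(\alpha_N^0)$. After multiplying through by $\sqrt N/\sigma_N$, it suffices to show that
\[
\frac{\sqrt N|\tau(\widehat\alpha_N^\mu)-\tau(\alpha_N^\mu)|}{\sigma_N}\le \Phi^{-1}(1-q/2)\frac{s^B}{\sigma_N}+\sqrt{2N\mu_N}\Bigl(\frac{s^B}{\sigma_N}-1\Bigr)-\frac{o(1)}{\sigma_N}
\]
holds with limiting probability at least $1-q$.

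Theorem~\ref{theo:bootstrap} should deliver bootstrap consistency for the standard deviation, $s^B/\sigma_N\to_p1$. Since $\sigma_N>0$, and under Assumption~\ref{assum:norm ratio} it is bounded away from zero (regular functionals have $\sigma_N\asymp\|v_N^\tau\|_\mu$, which I take to be bounded below since the functional is nondegenerate), the term $o(1)/\sigma_N$ vanishes. The first term on the right converges to $\Phi^{-1}(1-q/2)$. Slutsky's lemma and the continuity of the normal cdf then give the liminf bound once the middle term is handled.

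The delicate step is the cross term $\sqrt{2N\mu_N}(s^B/\sigma_N-1)$, because $\sqrt{N\mu_N}$ may diverge. I would first observe that when this term is nonnegative it only enlarges the interval and can be dropped. In general I need $s^B/\sigma_N-1=o_p((N\mu_N)^{-1/2})$. I would try to obtain this rate from the bootstrap theorem, since bootstrap standard deviations typically converge at rate $O_p(\epsilon_N^*)$, in the spirit of Lemma~\ref{lem:sieve variance estimator}. That rate suffices when $\sqrt{N\mu_N}\,\epsilon_N^*\to0$.

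If that fails, I would avoid the issue by splitting on whether $\sqrt{N\mu_N}$ stays bounded. In the bounded case the cross term is $o_p(1)$ directly. In the divergent case, the bias bound $\sqrt{2\mu_N}\sigma_N$ dominates the sampling scale $N^{-1/2}\sigma_N$. There I would bound the shortfall $\sqrt{2\mu_N}|s^B-\sigma_N|$ by a vanishing fraction of $N^{-1/2}\sigma_N$ using the ratio rate, or else absorb it because the slack in Lemma~\ref{lem:bound1}, which I would argue is a conservative bound, covers it.

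I expect this rate matching between the bootstrap error and $\sqrt{N\mu_N}$ to be the main obstacle. I would resolve it by extracting an explicit rate from Theorem~\ref{theo:bootstrap}.
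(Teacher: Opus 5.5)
Your skeleton is the paper's. Its proof says only that Theorems~\ref{theo:3.1 v3} and \ref{theo:bootstrap} make the bootstrap law close to $\sigma_N\mathcal N(0,1)$, and that Lemma~\ref{lem:bound1} handles the bias. In effect it substitutes $s^B$ for $\sigma_N$ in both the sampling and bias terms. It never treats the cross term $\sqrt{2N\mu_N}(s^B/\sigma_N-1)$ that you isolate. You are right that this term is the crux. When $N\mu_N\to\infty$, a downward error in $s^B$ of order larger than $(N\mu_N)^{-1/2}$ would break coverage if the bound in Lemma~\ref{lem:bound1} were tight.

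Your proposed resolution, however, does not go through. Theorem~\ref{theo:bootstrap} gives only $o_{p^*}(1)$ closeness in Kolmogorov distance, with no rate. There is therefore no rate for $s^B/\sigma_N-1$ to extract. Even $s^B/\sigma_N\to_p1$ needs a uniform-integrability step beyond convergence in law, which the paper also takes for granted. The divergent branch of your case split relies on the same missing rate.

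What works is your last fallback, but it must be made concrete: the slack is in the proof of Lemma~\ref{lem:bound1}, not in its statement. With $\eta>0$ fixed in Assumption~\ref{assume:regular functional II}, that proof gives $\mu_N\|\theta^\tau\|_e^2\le\|v_N^\tau\|_\mu^2/(2+\eta)+O(N^{-1})$. Combined with $\|v_N^\tau\|_{sd,\mu}^2=\|v_N^\tau\|_\mu^2+O(N^{-1})$, this yields $\sigma_N^2\ge\frac{1+\eta}{2+\eta}\|v_N^\tau\|_\mu^2-O(N^{-1})$. Using $\mu_N=o(1)$, it follows that
\[
|\tau(\alpha_N^0)-\tau(\alpha_N^\mu)|\le\sqrt{\mu_N}\,\|v_N^\tau\|_\mu+o(N^{-1/2})\le\kappa_\eta\sqrt{2\mu_N}\,\sigma_N+o(N^{-1/2}),\qquad \kappa_\eta\equiv\sqrt{\frac{2+\eta}{2(1+\eta)}}<1.
\]
Consider the event $\{s^B\ge\kappa_\eta\sigma_N\}$, whose probability tends to one as soon as $s^B/\sigma_N\to_p1$, with no rate needed. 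On it, $\sqrt{2\mu_N}\,s^B$ dominates the bias term. Coverage then reduces to
$\sqrt N|\tau(\widehat\alpha_N^\mu)-\tau(\alpha_N^\mu)|/\sigma_N\le\Phi^{-1}(1-q/2)\,s^B/\sigma_N+o(1)$, using your nondegeneracy of $\sigma_N$ for the remainder. Theorem~\ref{theo:3.1 v3 0} and Slutsky handle this however fast $N\mu_N$ grows. The paper throws away exactly this factor when it concludes $2\sigma_N^2\ge\|v_N^\tau\|_\mu^2-O(N^{-1})$.
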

			\begin{proof}
				See Appendix \ref{proof of Corollary {coro}}.
			\end{proof}
			Next, we demonstrate the performance of the proposed intervals in the above corollary in the following two sections.

\section{Simulation study}\label{sec:monte_carlo}
		
We conduct a Monte Carlo study to evaluate the finite-sample performance of the proposed confidence-conservative intervals for different average functionals. The data are generated from an RC logit model in which each individual $i$ chooses among $J=3$ mutually exclusive alternatives and an outside option. We use a linear utility function $u_{i,j}=x_{i,j,1}\beta_{i,1}+x_{i,j,2}\beta_{i,2}+\varepsilon_{i,j}$ to model the utility individual $i$ derives from alternative $j$, where $x_{i,j,k}$ are observed covariates, $\beta_i=(\beta_{i,1},\beta_{i,2})'$ are individual-specific RCs whose distribution we aim to estimate from the data, and $\varepsilon_{i,j}$ are unobserved error terms. Under the assumption that $\varepsilon_{i,j}$ are i.i.d.\ type-I extreme value, the model implies multinomial logit choice probabilities conditional on the individual-specific coefficients. We draw the covariates $x_{i,j,k}$ independently from $\mathcal{U}[0,3]$ across individuals, alternatives, and attributes $k\in\{1,2\}$.

To assess whether the proposed intervals attain nominal coverage in settings where standard parametric approaches fail, the RCs are drawn from a two-component mixture of bivariate normal distributions,
\begin{equation*}
0.5\,\mathcal{N}\!\big((-6,6)',\Omega\big)+0.5\,\mathcal{N}\!\big((-2,2)',\Omega\big),\qquad \Omega=
\begin{pmatrix}
0.7 & 0.15\\
0.15 & 0.7
\end{pmatrix}.
\end{equation*}	
We estimate this RC distribution using both the unpenalized fixed-grid estimator (FKRB) and the penalized fixed-grid estimator (ENet). We consider sample sizes $N\in\{1000, \, 10000\}$ and repeat each design $M=1000$ times. Because the true RC distribution is continuous and bimodal, it requires a relatively dense grid to accurately approximate the underlying distribution.
We therefore estimate the model using a uniform grid with $R_N\in\{25,49,81,289\}$ grid points over the domain $[-8,-0.1]\times[0.1,8]$ which covers the support of the true distribution with coverage probability close to one.\footnote{As a robustness check, we also consider a substantially larger grid domain, $[-10,0]\times[0,10]$, placing many grid points in regions with near-zero probability. This is close to the special scenario, placing grids outside the support, used to verify Assumption \ref{assum:ri} in Appendix \ref{appen:verification of assum 1}. The corresponding results are reported in Appendix \ref{app:tables_simulation}.
This additional robustness check leads to similar conclusions as the baseline specification: the ENet estimator continues to outperform FKRB, and denser grids generally improve finite-sample performance.}
 As the grid becomes denser, however, neighboring support points generate increasingly similar choice probabilities, leading to severe multicollinearity in the fixed-grid regression. This design, therefore, provides a natural setting to evaluate whether regularization improves both estimation and inference.
Figure \ref{fig:small_large_grid} in Appendix \ref{app:tables_simulation} illustrates the grid for $R_N=25$ and $R_N=81$ support points together with contour lines of the \st{true} mixture distribution.

For the fixed-grid estimators, we use the weighting matrix as suggested by Section 5 of \cite{fox2011}. 
For the ENet estimator, the regularization parameter is selected separately in each estimation step by five-fold cross-validation over 101 candidate values generated with the R package \texttt{glmnet}, including $\mu_N=0$ to allow for the unpenalized FKRB estimator.

For each Monte Carlo replication, we compute plug-in estimates of two functionals: the mean of the first RC,
\[
\tau_{\beta_1,0}
=
\int \beta_1\,dF_0(\beta),
\]
and the own-price elasticity evaluated at the mean covariate vector $\bar{x}=(1.5,1.5)$,
\[
\tau_{\eta,0}
=
\int
\left\{
\frac{\bar{x}_{j,k}}{P_j(\bar{x})}
\frac{\partial g_j(\bar{x},\beta)}{\partial x_{j,k}}
\right\}
dF_0(\beta).
\]
Confidence intervals for FKRB and ENet estimators are calculated using the results in Corollary \ref{coro} with $\mu_N=0$ for the FKRB estimator. We use a multinomial block bootstrap with $B=500$ bootstrap replications to estimate $s^B$. 

As a benchmark, we additionally estimate a parametric RC logit model that assumes a correlated multivariate normal distribution for the RCs. Confidence intervals for the parametric estimator are constructed using a percentile bootstrap with the same resampling design. The parametric model is estimated using the R package \texttt{logitr} \citep{logitr} with $S=250$ simulation draws. All confidence intervals are constructed at the 95\% confidence level. 

We evaluate the finite-sample performance in terms of the average absolute relative bias of the plug-in estimators and the empirical coverage of the confidence intervals. Let $\tau_{\beta_1,0}$ and $\tau_{\eta,0}$ denote the true population values of the corresponding functionals implied by the continuous mixture distribution, and let $\hat{\tau}^{(m)}$ denote the corresponding plug-in estimator in Monte Carlo replication $m=1,\ldots,M$. The average absolute relative bias is computed as
\[
\widehat{\mathrm{Bias}}
=
\frac{1}{M}
\sum_{m=1}^{M}
\left|
\frac{\hat{\tau}^{(m)}-\tau_0}{\tau_0}
\right|,
\]
whereas the empirical coverage is computed as
\[
\widehat{\mathrm{Cov}}
=
\frac{1}{M}
\sum_{m=1}^{M}
\mathbbm{1}
\left\{
\tau_0 \in CI^{(m)}
\right\},
\]
where $CI^{(m)}$ denotes the confidence interval obtained in replication $m$. Since the objective in practice is to recover the true population functionals of the underlying continuous RC distribution rather than the pseudo-true functionals implied by a finite-grid approximation, both bias and coverage are evaluated relative to the average functionals evaluated at the true RC distribution.\\

\begin{figure}[h]
\centering
\caption{Estimated distributions of the standardized plug-in estimators of $\tau_{\eta,0}$.}
\label{fig:kde_ela}
\includegraphics[width=0.9\linewidth]{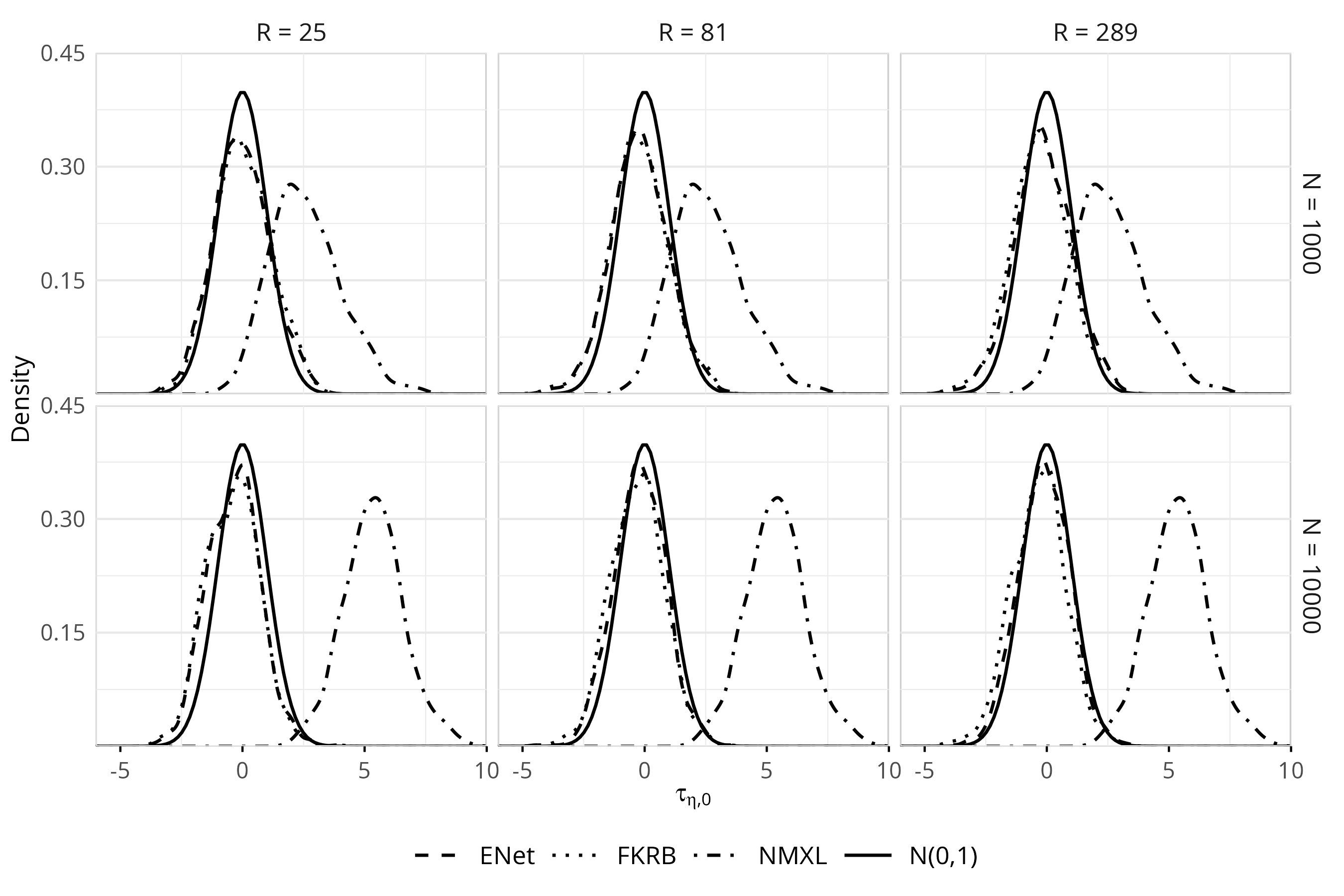}
\vspace{-0.1em}
\captionsetup{justification=justified, singlelinecheck=false, width=0.9\textwidth, font={footnotesize, stretch=0.9}}
\caption*{\textit{Notes:} The figure displays kernel density estimates of the standardized plug-in estimates of $\tau_{\eta,0}$ using the $1000$ Monte Carlo replications. The solid line shows the standard normal density. Panels correspond to different sample sizes ($N$) and numbers of grid points ($R$), using the grid domain $[-8,-0.1]\times[0.1,8]$.}
\end{figure}

Figure \ref{fig:kde_ela} plots the kernel density estimates of the standardized plug-in estimators for the own-elasticity at the mean against the standard normal distribution. The corresponding results for the mean RC are reported in Figure \ref{fig:kde_beta1} in Appendix \ref{app:tables}. For the fixed-grid estimators, and in particular for the ENet estimator, the standardized statistics are centered close to zero and resemble the shape of a standard normal distribution, especially for $N=10000$. In contrast, the standardized NMXL estimator is clearly shifted away from zero. This indicates that imposing an incorrect parametric distribution for the RCs leads to substantial finite-sample bias in the estimated functionals.

Table \ref{tab:mc_results_1} reports the average relative bias, the empirical coverage of nominal 95\% confidence intervals, and the average standard errors across 1000 Monte Carlo replications for the mean RC, \(\tau_{\beta_1,0}\), and the own-elasticity at the mean, \(\tau_{\eta,0}\). 
Consistent with the severe bias of the parametric estimator displayed in Figure~\ref{fig:kde_ela}, the corresponding confidence intervals substantially undercover the true value. The problem becomes even more pronounced as the sample size increases because the standard errors shrink while the specification bias remains unchanged. These findings illustrate that imposing an incorrect parametric distribution for the RCs can lead to severely misleading inferential conclusions, highlighting the importance of nonparametric estimators.

\begin{table}[h]
\centering
\caption{Average bias and coverage of 95\% confidence interval.}
\label{tab:mc_results_1}
\centering
\small
\setlength{\tabcolsep}{10pt}
\renewcommand{\arraystretch}{1.3}
\begin{tabular}[t]{rrccccccccc}
\toprule\toprule
\multicolumn{2}{c}{ } & \multicolumn{9}{c}{Mean RC ($\tau_{\beta_1,0}$)} \\
\cmidrule(l{3pt}r{3pt}){3-11}
& & \multicolumn{3}{c}{$\widehat{\mathrm{Bias}}$} & \multicolumn{3}{c}{$\widehat{\mathrm{Cov}}$} & \multicolumn{3}{c}{$s.e.(\hat{\tau})$}  \\
\cmidrule(l{3pt}r{3pt}){3-5} \cmidrule(l{3pt}r{3pt}){6-8} \cmidrule(l{3pt}r{3pt}){9-11}
N & R & NMXL & FKRB & ENet & NMXL & FKRB & ENet & NMXL & FKRB & ENet\\
\midrule
1000 & 25 & 0.139 & 0.108 & 0.099 & 0.662 & 0.878 & 0.891 & 0.393 & 0.402 & 0.346\\
1000 & 49 & -- & 0.122 & 0.101 & -- & 0.847 & 0.886 & -- & 0.437 & 0.343\\
1000 & 81 & -- & 0.121 & 0.096 & -- & 0.863 & 0.912 & -- & 0.434 & 0.337\\
1000 & 289 & -- & 0.128 & 0.091 & -- & 0.840 & 0.936 & -- & 0.441 & 0.322\\
\addlinespace
10000 & 25 & 0.136 & 0.067 & 0.062 & 0.012 & 0.685 & 0.730 & 0.116 & 0.171 & 0.161\\
10000 & 49 & -- & 0.054 & 0.048 & -- & 0.864 & 0.880 & -- & 0.194 & 0.168\\
10000 & 81 & -- & 0.058 & 0.044 & -- & 0.837 & 0.893 & -- & 0.196 & 0.160\\
10000 & 289 & -- & 0.061 & 0.040 & -- & 0.829 & 0.929 & -- & 0.201 & 0.153\\
\addlinespace\midrule
& & \multicolumn{9}{c}{Mean Own-Elasticity ($\tau_{\eta,0}$)} \\
\cmidrule(l{3pt}r{3pt}){3-11}
& & \multicolumn{3}{c}{$\widehat{\mathrm{Bias}}$} & \multicolumn{3}{c}{$\widehat{\mathrm{Cov}}$} & \multicolumn{3}{c}{$s.e.(\hat{\tau})$}   \\
\cmidrule(l{3pt}r{3pt}){3-5} \cmidrule(l{3pt}r{3pt}){6-8} \cmidrule(l{3pt}r{3pt}){9-11}
N & R & NMXL & FKRB & ENet & NMXL & FKRB & ENet & NMXL & FKRB & ENet \\
\midrule
1000 & 25 & 0.188 & 0.103 & 0.088 & 0.367 & 0.909 & 0.920 & 0.335 & 0.455 & 0.385\\
1000 & 49 & -- & 0.113 & 0.089 & -- & 0.893 & 0.921 & -- & 0.482 & 0.374\\
1000 & 81 & -- & 0.112 & 0.087 & -- & 0.899 & 0.931 & -- & 0.484 & 0.367\\
1000 & 289 & -- & 0.114 & 0.083 & -- & 0.893 & 0.949 & -- & 0.490 & 0.350\\
\addlinespace
10000 & 25 & 0.180 & 0.046 & 0.042 & 0.001 & 0.919 & 0.929 & 0.142 & 0.193 & 0.182\\
10000 & 49 & -- & 0.047 & 0.041 & -- & 0.923 & 0.933 & -- & 0.213 & 0.181\\
10000 & 81 & -- & 0.050 & 0.039 & -- & 0.917 & 0.943 & -- & 0.215 & 0.174\\
10000 & 289 & -- & 0.051 & 0.037 & -- & 0.912 & 0.959 & -- & 0.221 & 0.164\\
\bottomrule\bottomrule
\end{tabular}
\vspace{-0.8em}
\captionsetup{position=bottom, justification=justified, singlelinecheck=false, width=0.98\textwidth,font={footnotesize, stretch=0.9}}
\caption*{\textit{Notes:} Grid points for the nonparametric fixed-grid estimators of \citet{fox2011} (FKRB) and \citet{heiss2022} (ENet) are generated from a uniform grid on $[-8,-0.1]\times[0.1,8]$. The ENet tuning parameter is selected by 5-fold cross-validation, minimizing the mean squared error. The parametric estimator with normally distributed random coefficients (NMXL) is estimated by simulated maximum likelihood using 250 Halton draws. Confidence intervals for the mean random coefficient are based on a multinomial block bootstrap with 500 replications, while those for the mean own-elasticity use a nonparametric block bootstrap with 250 replications. 
All are calculated based on 1000 Monte Carlo replicates.}
\end{table}
 
Compared with the misspecified parametric estimator, both nonparametric estimators lead to substantially lower bias and more reliable inference. However, their performance depends critically on the density of the grid. With a coarse grid ($R=25$), approximation bias remains substantial, resulting in confidence intervals that still undercover, particularly for the mean RC. As the grid becomes denser, the ENet estimator becomes increasingly accurate, and the corresponding confidence intervals achieve coverage close to the desired nominal 95\% level for both functionals for $R=289$ -- while the confidence intervals for the mean RC still exhibit slight undercoverage (0.936 for $N=1000$ and 0.929 for $N=10000$), the corresponding intervals for the own-elasticity attain coverage essentially equal to the nominal level. Hence, these results demonstrate that sufficiently dense grids are essential for an accurate approximation and , therefore, reliable inference about the true parameter when applying the fixed-grid estimator.

The comparison with the FKRB estimator highlights the importance of regularization. While FKRB also reduces the bias relative to the parametric estimator, its bias does not decrease systematically as the grid becomes denser. Consequently, the corresponding confidence intervals continue to undercover for both functionals across all sample sizes. This behavior is consistent with the ill-conditioned problem discussed in Section \ref{sec:bias}. As the grid becomes denser, neighboring grid points become increasingly collinear, causing the FKRB estimator to suffer from selection inconsistency and inaccurate estimation of the probability weights: the estimator assigns positive probability mass to grid points outside the true distribution's support while setting weights within the true support to zero, preventing it from fully exploiting the improved approximation afforded by a denser grid \citep{heiss2022}.
In contrast, the additional \(\ell_2\)-penalty stabilizes the ENet estimator, 
leading to a coverage close to the 95\% level when the regularization bias is properly handled.

Across all simulation designs, the ENet estimator consistently exhibits smaller standard errors than the FKRB estimator. Moreover, the precision of the ENet estimator improves as the grid becomes denser, whereas the opposite pattern is observed for FKRB. This demonstrates that the additional \(\ell_2\)-regularization effectively mitigates the multicollinearity induced by dense grids

\begin{figure}[h]
\centering
\caption{Average 95\% confidence interval.}
\label{fig:ci_ela_beta}
\begin{subfigure}[t]{0.495\textwidth}
\includegraphics[width=\linewidth]{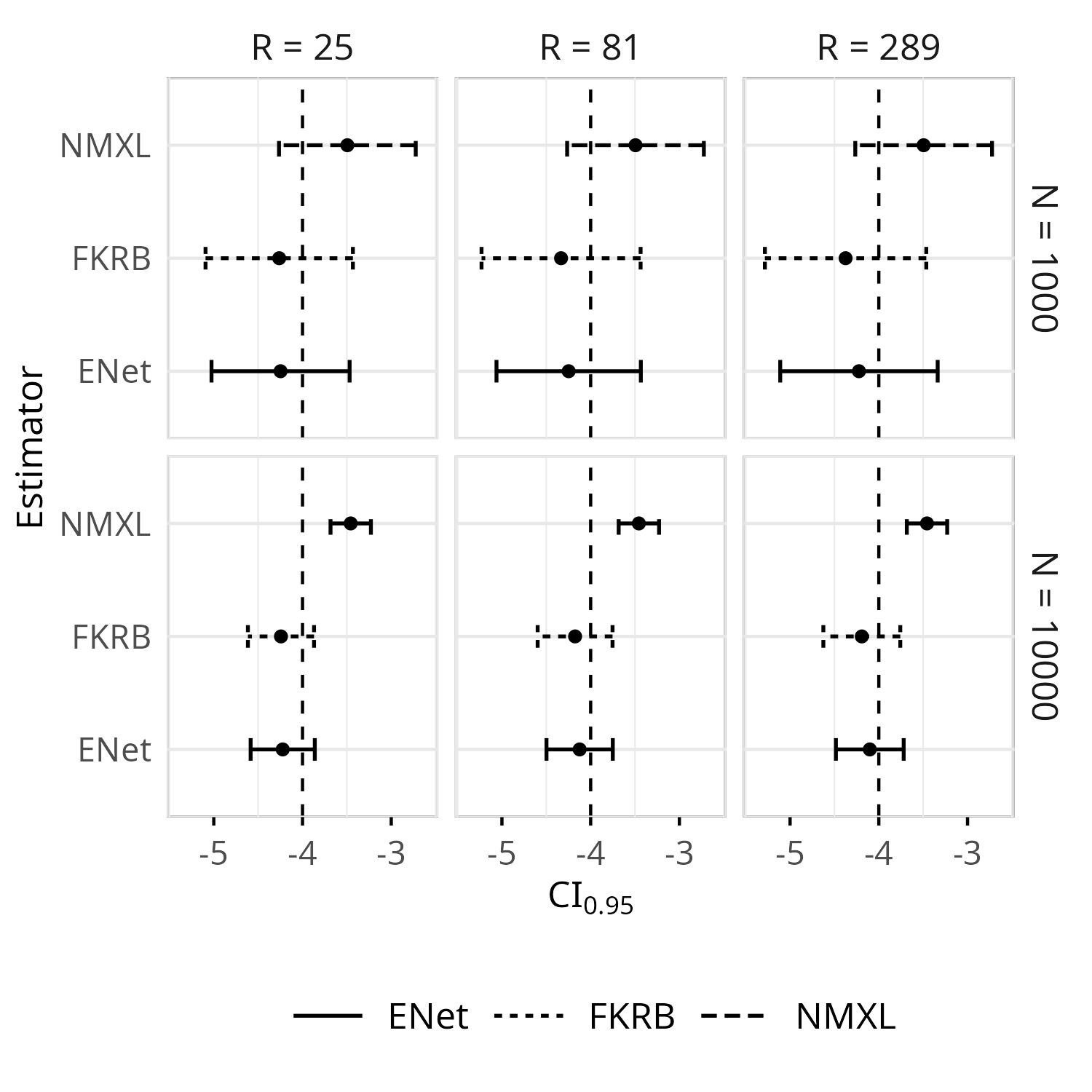}
\caption{Mean RC ($\tau_{\beta_1,0}$)}
\end{subfigure}\hfill
\begin{subfigure}[t]{0.495\textwidth}
\includegraphics[width=\linewidth]{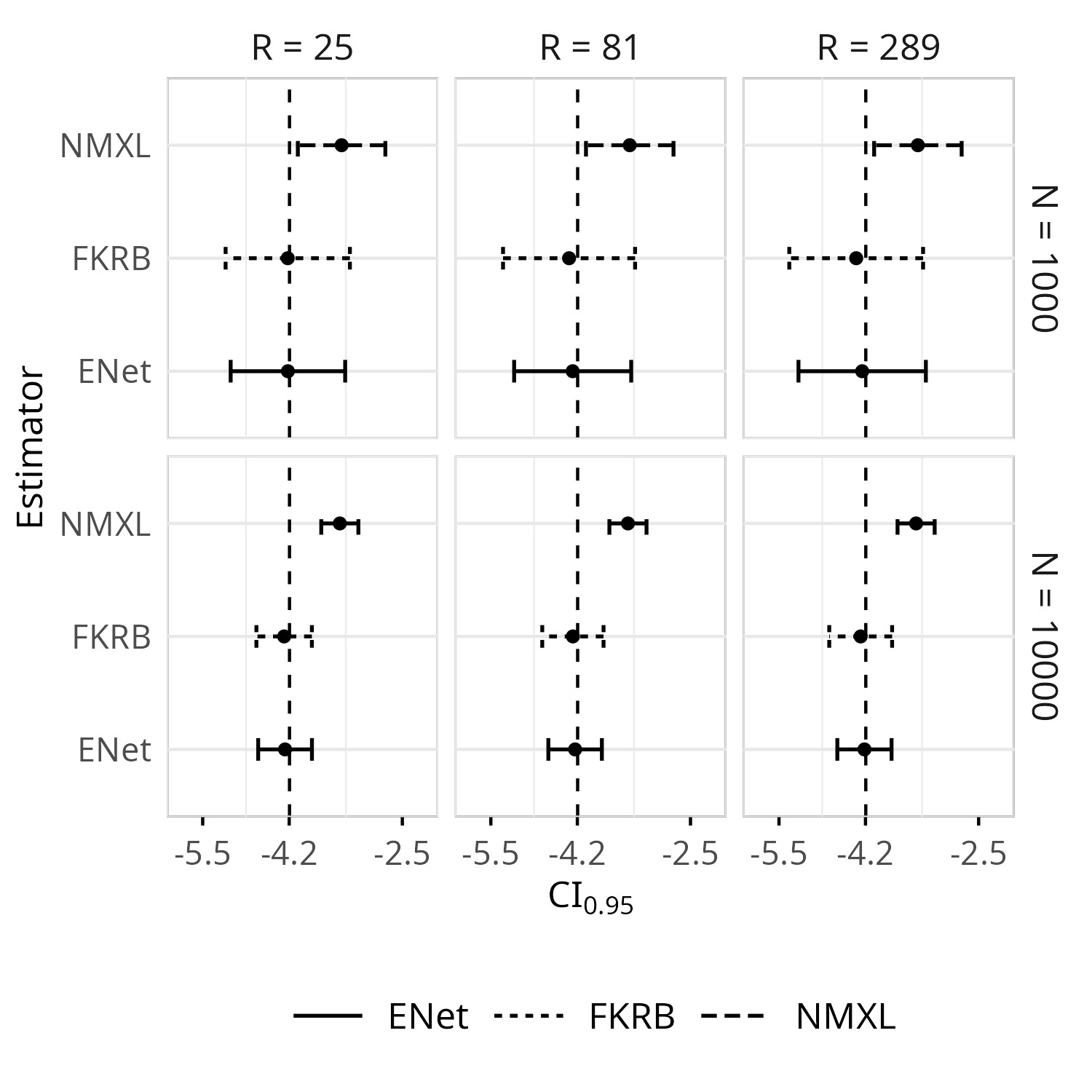}
\caption{Mean Own-Elasticity ($\tau_{\eta,0}$)}
\end{subfigure}
\vspace{-0.1em}
\captionsetup{justification=justified, singlelinecheck=false, width=0.99\textwidth, font={footnotesize, stretch=0.9}}
\caption*{\textit{Notes:} The figure displays the average 95\% confidence intervals over $1000$ Monte Carlo replications. The confidence-conservative intervals are constructed using a multinomial block bootstrap with equal weights and $500$ iterations. The conservative intervals for the parametric approach are calculated using a nonparametric block bootstrap with $500$ iterations. The vertical dotted lines correspond to the values. }
\end{figure}

Finally, Figure \ref{fig:ci_ela_beta} reports the average 95\% confidence intervals across Monte Carlo replicates for the mean RC (left panel) and the own-elasticity (right panel). As the comparison with the FKRB and parametric estimators reveals, the bias correction does not lead to overly conservative confidence intervals. Across all simulation designs, the ENet intervals are consistently narrower than the corresponding FKRB intervals and only moderately wider than the parametric NMXL intervals, despite the substantially greater flexibility of the nonparametric specification. Moreover, the ENet confidence intervals become substantially narrower as the sample size increases, reflecting that the estimator becomes more efficient with increasing sample size. In addition, the intervals also decrease slightly in width as the grid becomes denser, indicating that the additional $\ell_2$-regularization successfully controls the multicollinearity induced by dense grids while allowing the estimator to benefit from the improved approximation of the RC distribution.

		
\section{Application}\label{sec:application}
		
To illustrate the proposed inference procedure, we revisit the stated-preference travel-choice data analyzed by \citet{meijer2006}.\footnote{The data are available via the \emph{Journal of Applied Econometrics} \href{https://journaldata.zbw.eu/dataset/measuring-welfare-effects-in-models-with-random-coefficients}{Data Archive}.} 
Using survey data on travel-mode choices, they study travelers’ value of time (VoT) -- their willingness to pay for reduced travel time -- under normal, log-normal, and gamma specifications of the RC distribution in RC logit models. The VoT corresponds to the willingness-to-pay functional in Example~\ref{ex:wtp}, with the RC for travel time in the numerator and that for ticket fare in the denominator.
They find that the implied VoT distribution is highly sensitive to the assumed parametric specification, making this dataset well suited for illustrating the benefits of our nonparametric inference procedure.

The data originate from a stated-preference survey conducted for the Dutch railway company NS in 1987, in which respondents repeatedly chose between two hypothetical train connections that differed in fare (Euro), travel time (minutes), number of interchanges (0--4), and comfort level ($\{0,1,2\}$, where $2$ denotes the lowest comfort). The sample comprises 235 respondents who completed, on average, 12.5 choice tasks, yielding 2929 observed choices. To keep the empirical illustration simple, we treat these choice tasks as independent cross-sectional observations and abstract from the panel structure.\footnote{Extensions of the fixed-grid estimator to panel data are straightforward using Section 5 of \citet{fox2011}. Moreover, we verified that accounting for the panel structure has little effect on the corresponding parametric estimates.}

We specify respondent $i$'s utility from train connection $j\in\{1,2\}$ as
\begin{equation}\notag
u_{i,j}=\text{fare}_{i,j}\beta_{i,\text{fare}}+\text{time}_{i,j}\beta_{i,\text{time}}+\text{interchanges}_{i,j}\delta_{\text{int}}+\text{comfort}_{i,j}\delta_{\text{comf}}+\varepsilon_{i,j},
\end{equation}
where $\beta_i=(\beta_{i,\text{fare}},\beta_{i,\text{time}})'$ contains the RCs on fare and travel time, $\delta=(\delta_{\text{int}},\delta_{\text{comf}})'$ contains fixed coefficients on interchanges and comfort, and $\varepsilon_{i,j}$ are i.i.d.\ Type~I extreme value errors. In contrast to \citet{meijer2006}, we treat only the fare and travel-time coefficients as random. Allowing these two coefficients to vary jointly enables us to flexibly estimate the VoT distribution -- our primary object of interest -- while keeping the RC vector two-dimensional enables to specify a sufficiently dense grid.\footnote{The total number of grid points grows exponentially with the number of RCs. Allowing additional coefficients to be random would make a sufficiently dense grid infeasible given the available sample size.} To assess the implications of restricting heterogeneity to the fare and travel-time coefficients, we compare the parametric log-normal benchmark used in our analysis with the corresponding four-dimensional log-normal specification considered by \citet{meijer2006}. The resulting estimates are very similar, supporting the use of two RCs in both our parametric benchmark and nonparametric specifications.

We estimate the joint RC distribution using the penalized and unpenalized fixed-grid estimators with $R\in\{25,81,289\}$ uniformly spaced support points. The grid domain is determined from preliminary estimates of a parametric bivariate normal model. For each $k\in\{\text{fare},\text{time}\}$, the lower bound is $\underline{\beta}_k=\widehat{\mu}_k-4\widehat{\sigma}_k$, where $\widehat{\mu}_k$ and $\widehat{\sigma}_k$ denote the estimated mean and standard deviation under the parametric normal model. The upper bounds are $\overline{\beta}_{\text{fare}}=-0.1$ and $\overline{\beta}_{\text{time}}=0$, imposing negative marginal utility of fare and travel time. 

The model is estimated using the iterative algorithm of \citet{heiss2022}, which alternates between updating the probability weights via a linear probability model estimated with constrained least squares and updating the fixed coefficients via a weighted conditional logit model until convergence. 
In line with Section \ref{sec:monte_carlo}, we follow \citet{fox2011} for the construction of the weighting matrix. The ENet tuning parameter is selected by five-fold cross-validation over 101 candidate values generated by \texttt{glmnet}, including zero. We construct 95\% confidence intervals for the mean RCs and the implied mean VoT using the proposed confidence-conservative procedure with a multinomial block bootstrap based on 500 replications.

As a benchmark, we estimate a parametric model in which the fare and travel-time coefficients follow a correlated bivariate log-normal distribution. We focus on the log-normal specification because it provided the best fit among the parametric models considered by \citet{meijer2006} while ensuring a finite mean VoT. The model is estimated by simulated maximum likelihood using 250 Halton draws per respondent. Since the ratio of two jointly log-normal variables is itself log-normal, we calculate the mean VoT from its closed-form mean using the estimated mean vector and covariance matrix of the underlying bivariate log-normal distribution.\footnote{Since the coefficients are jointly log-normal, the logarithm of their ratio is normally distributed. Let $\mu_{\text{time}}$ and $\mu_{\text{fare}}$ denote the means of the log-coefficients and $\sigma_{\text{time}}^2$, $\sigma_{\text{fare}}^2$, and $\sigma_{\text{tf}}$ the corresponding variances and covariance. The mean VoT is then given by $\mathbb{E}[VoT]=-\exp\!\left((\mu_{\text{time}}-\mu_{\text{fare}})+1/2(\sigma_{\text{time}}^2+\sigma_{\text{fare}}^2-2\sigma_{\text{tf}})
\right)$.} The corresponding confidence intervals are obtained via a block bootstrap with 500 replications.

Table \ref{tab:results_application} reports the estimated parameters and implied mean VoT estimates for the parametric benchmark, FKRB, and ENet estimators. For the nonparametric fixed-grid estimators, we report the results for \(R=289\), which provides the best fit to the data in terms of the log-likelihood among the considered grid specifications. The corresponding results for \(R=25\) and \(R=81\) are reported in Table \ref{tab:application_extended} in Appendix \ref{app:tables}.\footnote{The results show that the estimated means and standard deviations are relatively robust across grid specifications for both FKRB and ENet, while the absolute mean VoT decreases as $R$ increases, with more pronounced differences for FKRB than for ENet.}

\begin{table}[h]
\centering
\caption{Estimated RC moments and VoT across estimators.}
\label{tab:results_application} 
\small
\setlength{\tabcolsep}{7pt}
\renewcommand{\arraystretch}{1.3}
\begin{tabular}{lccccccccc}
\toprule\toprule
& \multicolumn{3}{c}{Log-N} & \multicolumn{3}{c}{FKRB} & \multicolumn{3}{c}{ENet} \\
\cmidrule(l{3pt}r{3pt}){2-4} \cmidrule(l{3pt}r{3pt}){5-7} \cmidrule(l{3pt}r{3pt}){8-10}
& Coef & se & CI$_{0.95}$ & Coef & se & CI$_{0.95}$ & Coef & se & CI$_{0.95}$\\
\midrule
\multicolumn{10}{l}{\textit{mean RCs}}  \\[0.2em]
Fare & -1.31 & 0.15 & {}[-1.60, -1.01] & -1.56 & 0.17 & {}[-1.90, -1.22] & -1.64 & 0.12 & {}[\: -1.98, -1.30]\\
Time & -2.67 & 0.48 & {}[-3.62, -1.73] & -7.81 & 0.90 & {}[-9.56, -6.05] & -8.29 & 0.71 & {}[-10.16, -6.42]\\
Comfort & -0.37 & 0.04 & {}[-0.43, -0.32] & -0.96 & 0.23 & {}[-1.28, -0.65] & -1.18 & 0.21 & {}[\: -1.56, -0.81]\\
Interchanges & -1.05 & 0.03 & {}[-1.12, -0.97] & -2.13 & 0.16 & {}[-2.57, -1.69] & -2.47 & 0.16 & {}[\: -3.02, -1.92]\\
\addlinespace
\multicolumn{10}{l}{\textit{standard deviation RCs}}  \\[0.2em]
Fare & 4.72 & -- & -- & 1.63 & -- & -- & 1.53 & -- & --\\
Time & 2.75 &  -- & -- & 9.55 & -- & -- & 9.07 & -- & --\\
\addlinespace
\multicolumn{10}{l}{\textit{value-of-time}}  \\[0.2em]
Mean & -9.36 & 4.75 & {}[-18.68, -0.05] & -16.25 & 6.21 & {}[-28.43, -4.07] & -25.77 & 4.64 & {}[-39.23, -12.31]\\
\bottomrule\bottomrule
\end{tabular}
\vspace{-0.8em}
\captionsetup{position=bottom, justification=justified, singlelinecheck=false, width=0.995\textwidth,font={footnotesize, stretch=0.9}}
\caption*{\textit{Notes:} The table reports the estimated parameter, standard errors (se), and the $95\%$ confidence-(conservative) intervals for the penalized (ENet) and unpenalized (FKRB) estimators for $R=289$ grid points, and the parametric estimator with a bivariate log-normal assumption (Log-N). We used a multinomial block bootstrap with $500$ iterations and equal weights to construct the confidence intervals. The tuning parameter for the ENet estimator is selected by $5$-fold cross-validation based on the lowest mean squared error. The parametric estimator uses $250$ Halton draws. }
\end{table}

The results reveal substantial differences between the parametric and nonparametric estimates of the RC distribution and the implied mean VoT estimates. The differences are relatively small for the mean fare coefficient, although the parametric specification implies substantially greater dispersion in fare sensitivities among respondents. In contrast, the differences are much more pronounced for the travel-time coefficient: both fixed-grid estimators imply substantially stronger preferences for travel-time reductions and markedly greater heterogeneity across respondents. As a consequence, the implied mean VoT for a one-hour reduction in travel time is considerably larger under the nonparametric specifications. While the parametric benchmark implies a mean VoT of \(9.36\) EUR per hour, the corresponding estimates equal \(16.25\) EUR for FKRB and \(25.77\) EUR for ENet.\footnote{\citet{meijer2006} report a mean VoT estimate of \(9.53\) EUR per hour for a specification in which all four coefficients are assumed to follow a correlated log-normal distribution. This similarity suggests that restricting heterogeneity to fare and travel time has only a minor effect on the implied mean VoT.}
		
The confidence intervals provide further information about the precision of the estimates across estimators. All reported coefficients and mean VoT estimates are statistically different from zero at the \(5\%\) level. The intervals under the nonparametric estimators are generally wider than those under the log-normal specification, reflecting the usual efficiency advantage of parametric estimation. Importantly, despite accounting for regularization bias, the ENet intervals are only moderately wider than the FKRB intervals because regularization reduces sampling variance. The proposed intervals therefore remain informative rather than being excessively conservative. Moreover, the log-normal and ENet point estimates of mean VoT lie outside each other's confidence intervals. Although this does not constitute a formal test of equality, together with the simulation evidence that parametric misspecification can generate persistent bias, it indicates that the VoT estimates are sensitive to the imposed distributional assumptions.

Figure \ref{fig:spike_tail_289} compares the VoT distributions implied by the parametric benchmark and the two nonparametric fixed-grid estimators for $R=289$ grid points to further motivate the difference in the estimated mean VoT between the three estimators. The plot groups the VoT grid points into ten equally spaced bins, where the reported mass corresponds to the sum of probability weights associated with all grid points inside the respective bin. In addition, we zoom in on the left tail of the distribution to examine those more closely and compare the parametric benchmark against the nonparametric estimators. Although all three estimators allocate considerable mass near zero, the estimated log-normal distribution decays much more rapidly as the VoT increases, whereas FKRB and ENet retain substantial mass at larger VoT values. Among the nonparametric estimators, FKRB produces a sparser distribution than ENet, as is particularly visible in the left-tail zoom. These thicker tails explain the differences in the estimated mean VoT reported in Table \ref{tab:results_application}. Figure \ref{fig:hist.wtp} illustrates that the results are robust across different numbers of grid points.

\begin{figure}[!t]
\centering
\caption{Estimated VoT distributions and their left-tails.}
\includegraphics[width=0.9\linewidth]{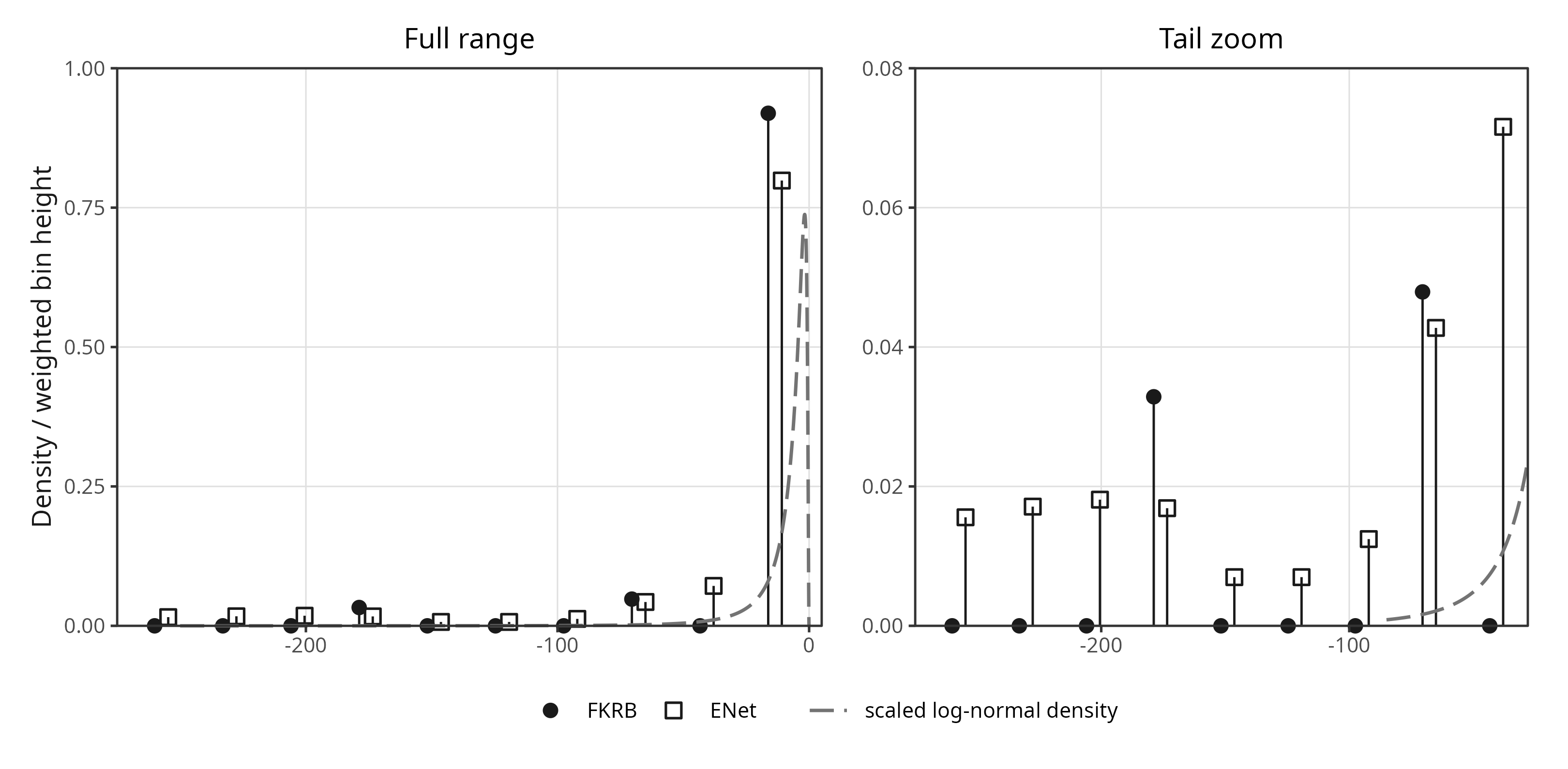}
\label{fig:spike_tail_289}
\captionsetup{justification=justified, singlelinecheck=false, width=0.9\textwidth, font={footnotesize, stretch=0.9}}
\vspace{-0.1em}
\caption*{\footnotesize \textit{Notes:} The distribution is shown for the parametric log-normal distribution and the nonparametric FKRB and ENet estimators for $R=289$ grid points. The spikes are created by 10 equally sized bins across the grid for the VoT. We overlay the spikes by the curve of the parametric log-normal distribution. Finally, we include a zoom for the left tail of the VoT distribution to study tail behavior in more detail.}
\end{figure}

\begin{figure}[!t]
\centering
\caption{Estimated joint distributions of RCs on travel time and fare.}
\includegraphics[width=\linewidth]{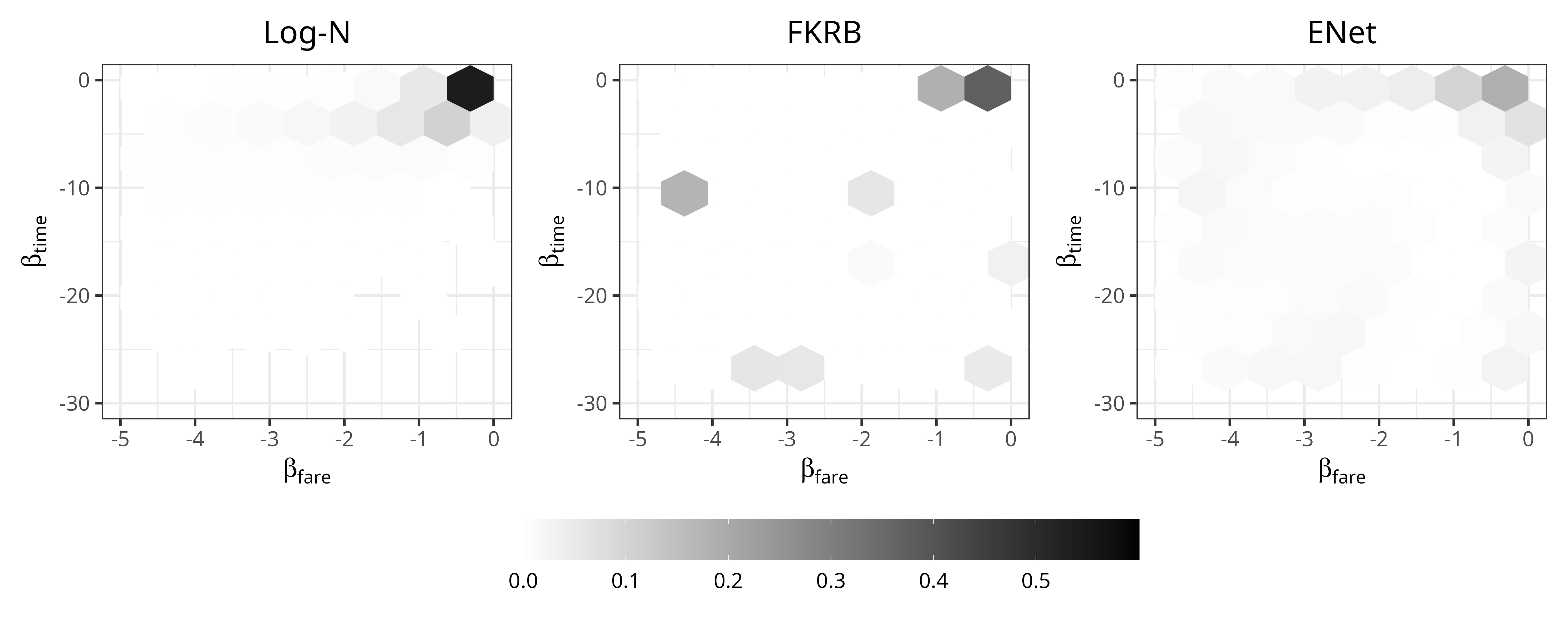}
\captionsetup{justification=justified, singlelinecheck=false, width=\textwidth, font={footnotesize, stretch=0.9}}
\vspace{-0.1em}
\caption*{\footnotesize \textit{Notes:} The heatmap for the parametric model with log-normal fare and time coefficients is created using $100 \ 000$ draws from the estimated log-normal distribution. For the non-parametric approaches, we used the specification with $R=289$ grid points, whereby each hexagon corresponds to the sum of weights falling into the two-dimensional bin. We used $8$ bins per dimension. }
\label{fig:heatmap.application}
\end{figure}

To better understand the sources of the differences in the estimated VoT distributions, Figure \ref{fig:heatmap.application} displays the underlying joint distributions of the fare and travel-time coefficients. Similarly to the VoT distribution in Figure \ref{fig:spike_tail_289}, the heatmap is constructed using eight equally spaced bins per dimension. We compare the parametric log-normal distribution with FKRB and ENet for $R=289$ grid points.

Under the parametric specification, the coefficients are more strongly positively correlated, so small travel-time coefficients tend to be accompanied by small fare coefficients, leading to a large share of travelers with small VoT values. In contrast, FKRB and ENet recover a non-negligible share of travelers with small fare and large travel-time coefficients, and hence larger VoT values in absolute terms. Consistent with this pattern, the estimated correlations are $0.560$, $0.413$, and $0.291$ for Log-N, FKRB, and ENet, respectively (see Table \ref{tab:correlation.application} in Appendix \ref{app:tables}). These differences explain the thicker VoT tails and larger mean VoT estimates reported in Table \ref{tab:results_application}. Figure \ref{fig:heatmap_all} shows that the qualitative findings are robust across different numbers of grid points.

		
\section{Conclusion}\label{sec:conclusion}

Nonparametric estimators of RC models are increasingly used to avoid restrictive distributional assumptions about unobserved preference heterogeneity. Among these methods, the fixed-grid estimator of \citet{bajari2007} and \citet{fox2011} is particularly attractive because it is computationally tractable and easy to implement. Accurately approximating the RC distribution requires a sufficiently dense grid, which, however, generates severe multicollinearity and unstable estimates of the probability weights. The penalized fixed-grid estimator of \citet{heiss2022} addresses this problem through $\ell_2$-regularization, but the resulting gains in stability come at the cost of regularization bias.

This paper develops bias-aware inference for economically relevant functionals based on the penalized fixed-grid estimator. We derive the asymptotic distribution of the penalized plug-in estimator and a feasible upper bound on the resulting regularization bias. Combining these components yields confidence intervals that remain valid when regularization bias is non-negligible. The procedure applies to scalar functionals commonly reported in empirical applications, including mean RCs, elasticities, willingness-to-pay measures, and predicted choice probabilities.

The Monte Carlo simulations demonstrate the importance of addressing both regularization and sieve approximation. When the grid is sufficiently dense, the proposed intervals achieve coverage close to the nominal level while remaining informative in finite samples. In contrast, a coarse grid generates persistent approximation bias and substantial undercoverage, even as the sample size increases. The simulations therefore show that regularization is important for making estimation on dense grids sufficiently stable, while the proposed bias adjustment is needed to account for the bias introduced by that regularization. In addition, our travel-mode application shows that the nonparametric estimators imply substantially thicker tails of the VoT distribution and considerably larger mean VoT estimates than the parametric log-normal specification. At the same time, the bias-aware ENet intervals are only moderately wider or even narrower than the FKRB intervals, indicating that accounting for regularization bias does not result in excessively conservative inference.

Several extensions provide promising directions for future research. First, the simulation results highlight the importance of the grid domain and resolution. Developing data-driven methods that jointly select the location and number of grid points would be valuable. And second, the proposed intervals account for regularization bias but require the sieve approximation bias to be sufficiently small for inference on the true population functional. An interesting extension would be to construct confidence intervals that explicitly account for sieve approximation bias, particularly in high-dimensional applications where the exponential growth of the grid limits a dense resolution in each dimension.

		
		
\bibliography{ref.bib} 
\newpage

		
\begin{appendices}
\appendixnumbering
\section{Additional tables \& figures}\label{app:tables}

\subsection{Additional tables \& figures for Section \ref{sec:monte_carlo}}\label{sec:tables.mc}

\begin{figure}[h]
\centering
\caption{Uniform grids on $[-8,-0.1]\times[0.1,8]$ with contours of the two-component bivariate normal mixture.}
\label{fig:small_large_grid}
\begin{subfigure}[t]{0.49\textwidth}
\centering
\includegraphics[width=0.8\linewidth]{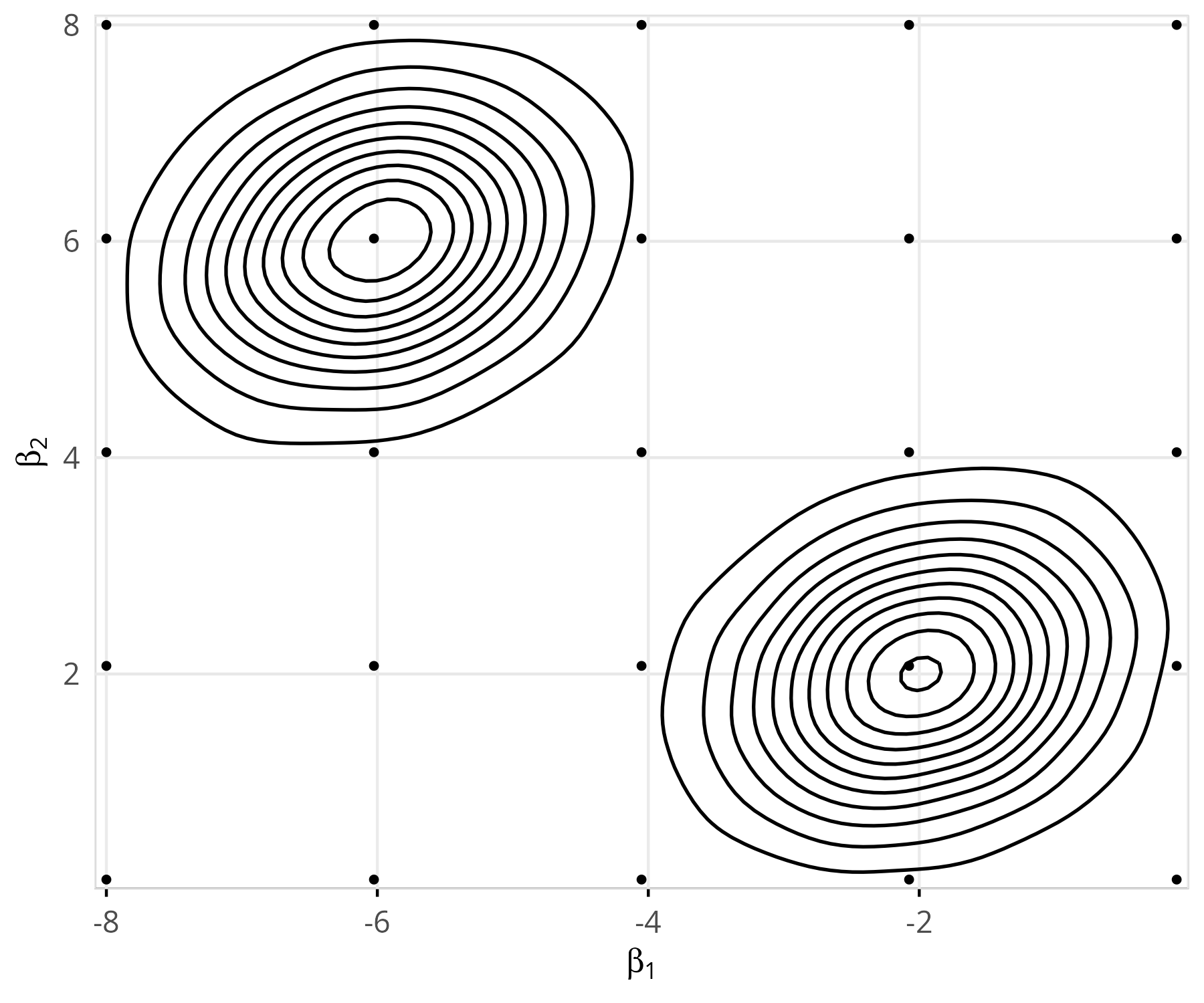}
\caption{$R = 25$}
\end{subfigure}
\begin{subfigure}[t]{0.49\textwidth}
\centering
\includegraphics[width=0.8\linewidth]{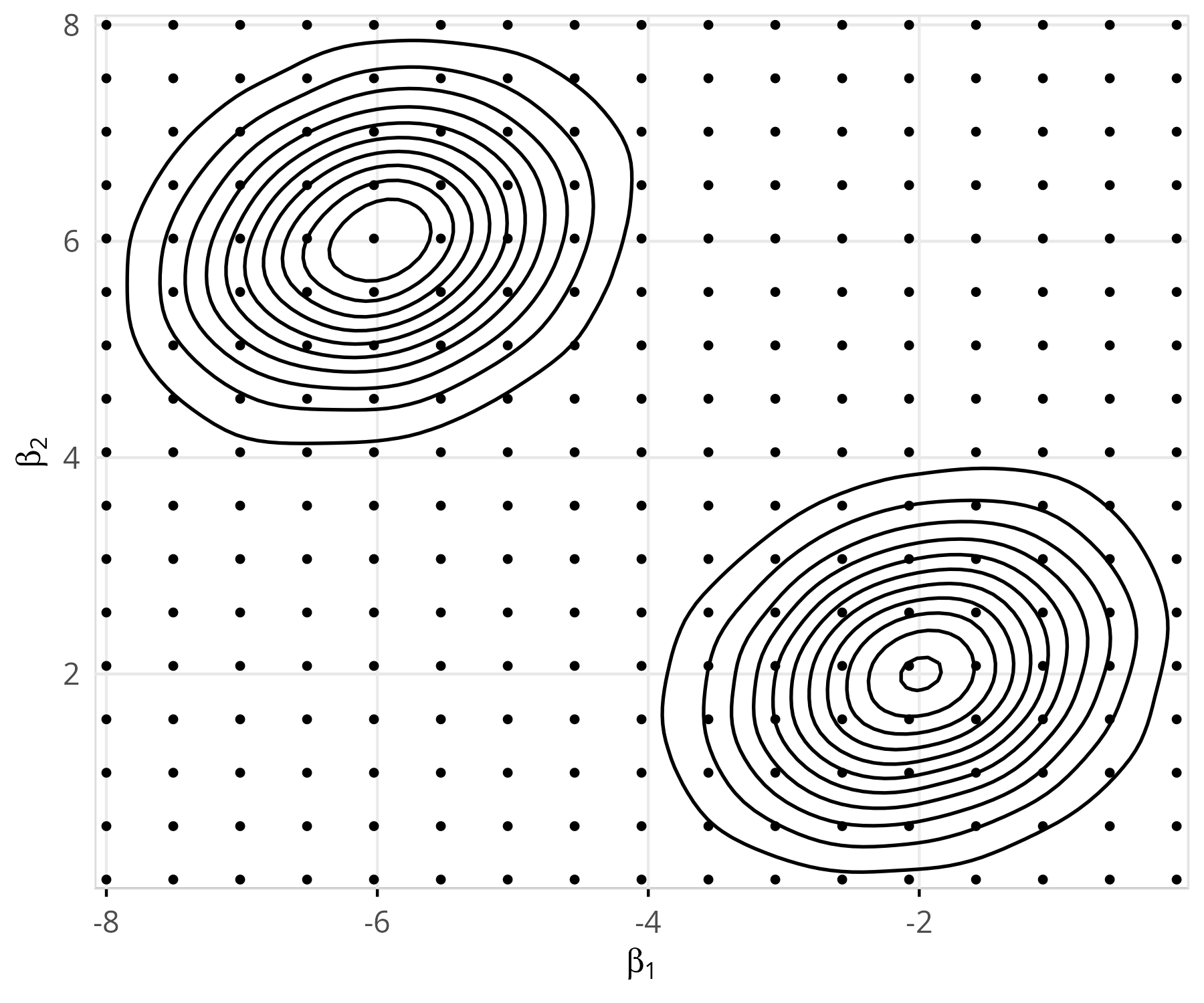}
\caption{$R = 289$}
\end{subfigure}
\end{figure}

\begin{figure}[H]
\centering
\caption{Monte Carlo Simulated distributions of the standardized plug-in estimator of $\tau_{\beta_1,0}$.}
\label{fig:kde_beta1}
\includegraphics[width=0.9\linewidth]{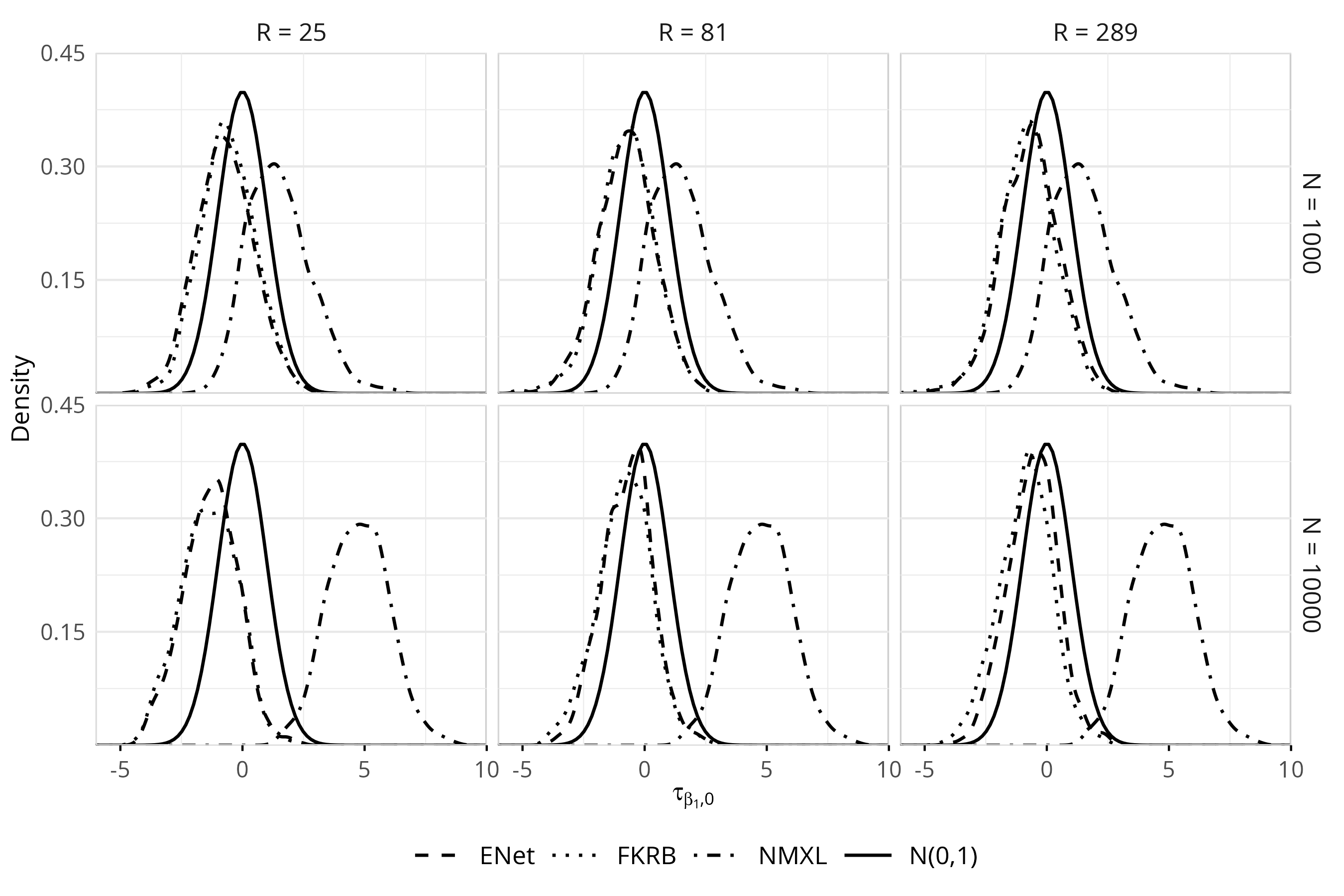}
\vspace{0.1em}
\captionsetup{justification=justified, singlelinecheck=false, width=0.9\textwidth, font={footnotesize, stretch=0.9}}
\caption*{\textit{Notes:} The figure displays kernel density estimates of the standardized plug-in estimates of $\tau_{\beta_1,0}$ using the $1000$ Monte Carlo replications. The solid line shows the standard normal density. Panels correspond to different sample sizes ($N$) and numbers of grid points ($R$), using the grid domain $[-8,-0.1]\times[0.1,8]$.}
\end{figure}

\subsubsection{Simulation study: sensitivity to the grid domain}\label{app:tables_simulation}

\begin{figure}[H]
\centering
\caption{Uniform grids on $[-10,-0.1]\times[0.1,10]$ with contours of the two-component bivariate normal mixture.}
				\label{fig:large_grid_appendix}
				\begin{subfigure}[t]{0.495\textwidth}
					\centering
					\includegraphics[width=0.8\linewidth]{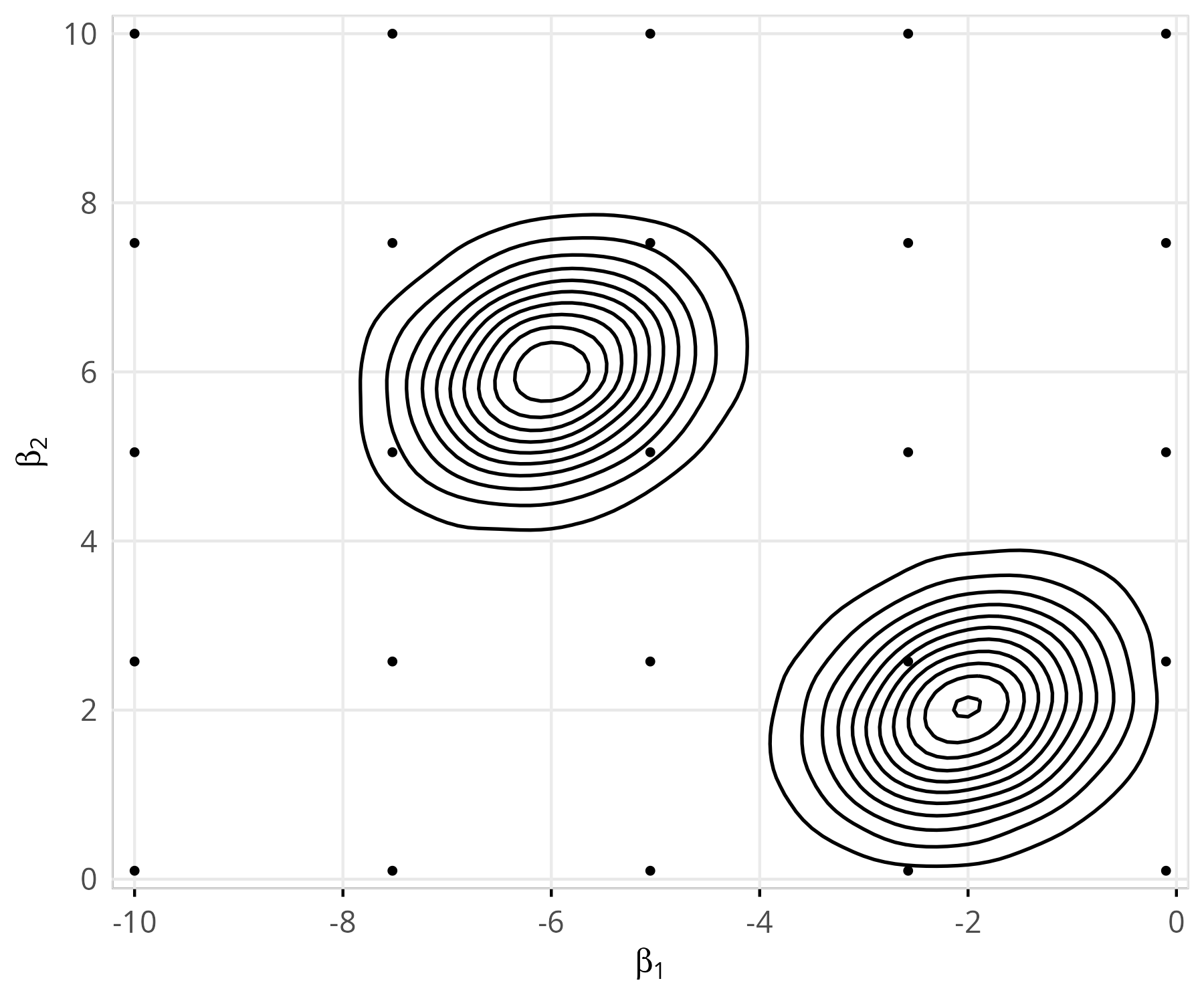}
					\caption{$R = 25$}
				\end{subfigure}
				\begin{subfigure}[t]{0.495\textwidth}
					\centering
					\includegraphics[width=0.8\linewidth]{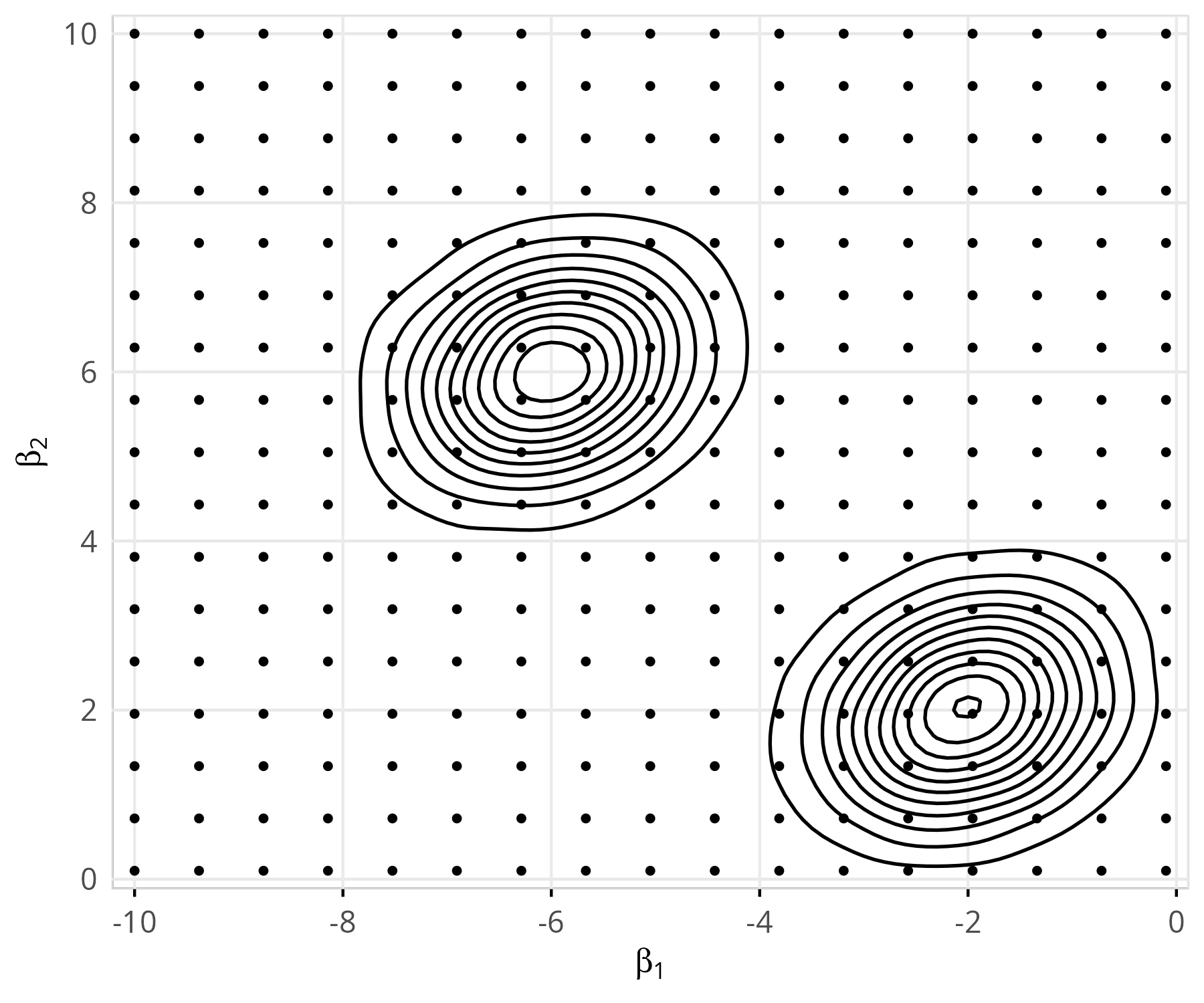}
					\caption{$R = 289$}
				\end{subfigure}
			\end{figure}

To assess the sensitivity of the nonparametric estimators to the choice of the grid domain, we repeat the Monte Carlo experiment using an enlarged grid domain given by \([-10,0]\times[0,10]\). Relative to the baseline specification, this larger domain allocates a substantial share of grid points to regions where the true mixture distribution has little probability mass. Figure \ref{fig:large_grid_appendix} illustrates the resulting grids for \(R=25\) and \(R=289\) together with contour lines of the true RC distribution. All remaining aspects of the simulation design, including the sample sizes, bootstrap procedure, and tuning-parameter selection, are identical to those used in the baseline specification discussed in Section \ref{sec:monte_carlo}. Table \ref{tab:mc_results_2} reports the corresponding Monte Carlo results.

\begin{table}[H]
\centering
\caption{Average bias and coverage of 95\% confidence interval under extended grid.}
\label{tab:mc_results_2}
\centering
\small
\setlength{\tabcolsep}{10pt}
\renewcommand{\arraystretch}{1.05}
\begin{tabular}[t]{rrccccccccc}
\toprule\toprule
\multicolumn{2}{c}{ } & \multicolumn{9}{c}{Mean RC ($\tau_{\beta_1,0}$)} \\
\cmidrule(l{3pt}r{3pt}){3-11}
& & \multicolumn{3}{c}{$\widehat{\mathrm{Bias}}$} & \multicolumn{3}{c}{$\widehat{\mathrm{Cov}}$} & \multicolumn{3}{c}{$s.e.(\hat{\tau})$}  \\
\cmidrule(l{3pt}r{3pt}){3-5} \cmidrule(l{3pt}r{3pt}){6-8} \cmidrule(l{3pt}r{3pt}){9-11}
N & R & NMXL & FKRB & ENet & NMXL & FKRB & ENet & NMXL & FKRB & ENet\\
\midrule
1000 & 25 & 0.191 & 0.116 & 0.116 & 0.336 & 0.922 & 0.926 & 0.333 & 0.584 & 0.538\\
1000 & 49 & - & 0.139 & 0.127 & - & 0.905 & 0.930 & - & 0.621 & 0.537\\
1000 & 81 & - & 0.143 & 0.129 & - & 0.902 & 0.936 & - & 0.647 & 0.535\\
1000 & 289 & - & 0.149 & 0.125 & - & 0.893 & 0.953 & - & 0.660 & 0.519\\
\addlinespace
10000 & 25 & 0.182 & 0.050 & 0.050 & 0.004 & 0.924 & 0.936 & 0.143 & 0.241 & 0.240\\
10000 & 49 & - & 0.068 & 0.067 & - & 0.902 & 0.912 & - & 0.293 & 0.272\\
10000 & 81 & - & 0.070 & 0.065 & - & 0.890 & 0.923 & - & 0.298 & 0.268\\
10000 & 289 & - & 0.076 & 0.064 & - & 0.875 & 0.943 & - & 0.313 & 0.260\\
\addlinespace\midrule
& & \multicolumn{9}{c}{Mean Own-Elasticity ($\tau_{\eta,0}$)} \\
\cmidrule(l{3pt}r{3pt}){3-11}
& & \multicolumn{3}{c}{$\widehat{\mathrm{Bias}}$} & \multicolumn{3}{c}{$\widehat{\mathrm{Cov}}$} & \multicolumn{3}{c}{$s.e.(\hat{\tau})$}   \\
\cmidrule(l{3pt}r{3pt}){3-5} \cmidrule(l{3pt}r{3pt}){6-8} \cmidrule(l{3pt}r{3pt}){9-11}
N & R & NMXL & FKRB & ENet & NMXL & FKRB & ENet & NMXL & FKRB & ENet \\
\midrule
1000 & 25 & 2.127 & 1.096 & 1.097 & 0.476 & 0.920 & 0.926 & 30.146 & 1.113 & 1.098\\
1000 & 49 & - & 0.821 & 0.827 &- & 0.966 & 0.979 & - & 1.024 & 1.007\\
1000 & 81 & - & 0.711 & 0.702 & - & 0.968 & 0.987 & - & 0.966 & 0.941\\
1000 & 289 & - & 0.597 & 0.559 & - & 0.967 & 0.993 & - & 0.896 & 0.835\\
\addlinespace
10000 & 25 & 3.691 & 0.791 & 0.789 & 0.441 & 0.264 & 0.289 & 4.551 & 0.388 & 0.387\\
10000 & 49 & - & 0.560 & 0.552 & - & 0.824 & 0.857 & -& 0.467 & 0.459\\
10000 & 81 & -& 0.436 & 0.433 & - & 0.907 & 0.941 &-& 0.467 & 0.457\\
10000 & 289 & - & 0.352 & 0.326 & - & 0.934 & 0.970 & - & 0.452 & 0.431\\
\bottomrule\bottomrule
\end{tabular}
\vspace{-0.8em}
\captionsetup{position=bottom, justification=justified, singlelinecheck=false, width=0.98\textwidth,font={footnotesize, stretch=0.9}}
\caption*{\textit{Notes:} Grid points for the nonparametric fixed-grid estimators of \citet{fox2011} (FKRB) and \citet{heiss2022} (ENet) are generated using a uniform grid on $[-10,-0.1]\times[0.1,10]$. Confidence intervals are constructed using a multinomial block bootstrap with $500$ iterations and equal weights. The tuning parameter for the ENet estimator is selected by $5$-fold cross-validation based on the lowest mean squared error. The parametric estimator with normal RCs (NMXL) is estimated by simulated maximum likelihood using $250$ Halton draws. Confidence intervals for the mean own-elasticity are constructed using a nonparametric block bootstrap with 250 iterations. All are calculated based on 1000 Monte Carlo replicates.}
			\end{table}


\subsection{Additional tables \& figures for Section \ref{sec:application}}       			
\begin{figure}[H]
\centering
\caption{Estimated distributions of VoT.}
\includegraphics[width=0.9\linewidth]{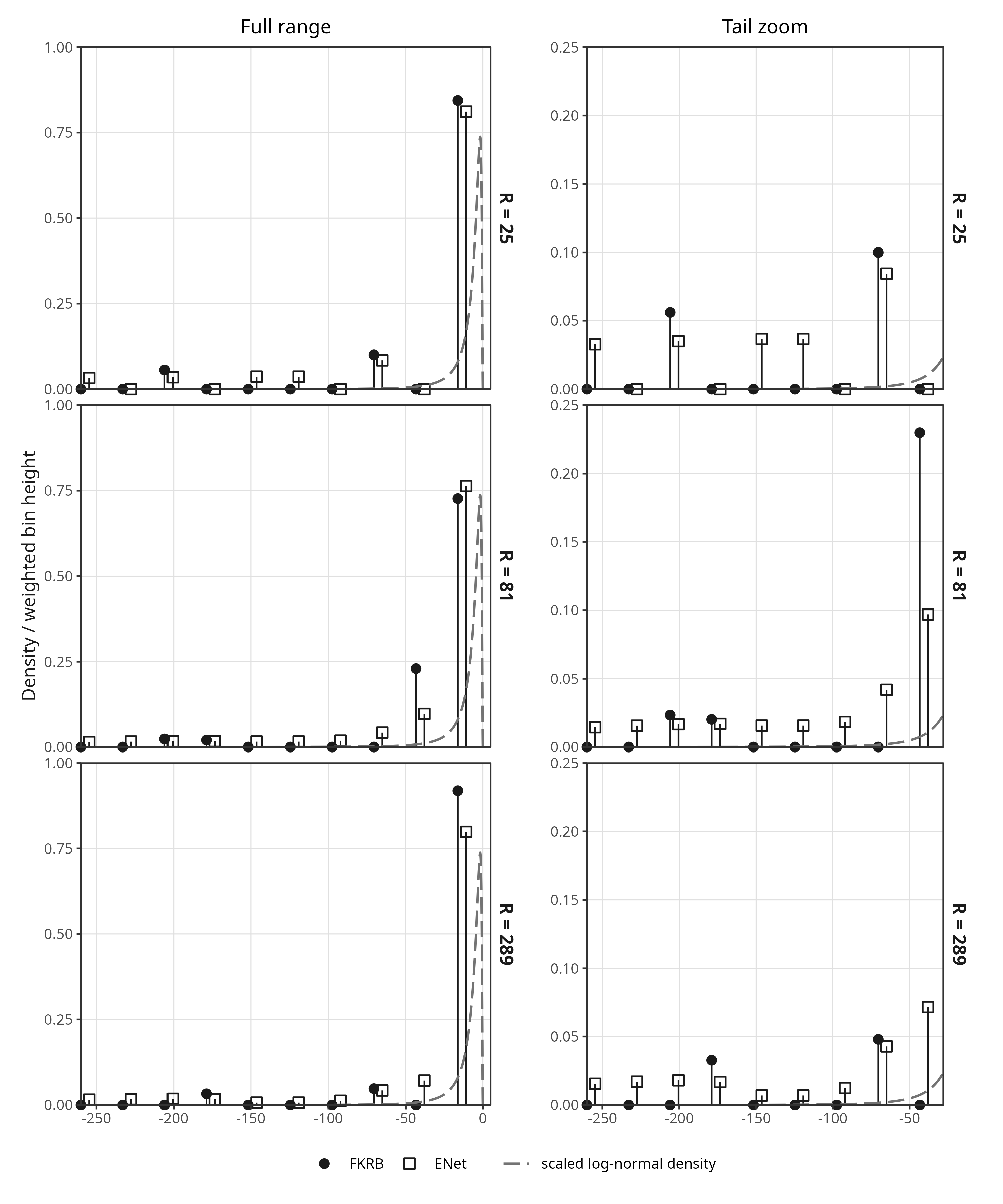}
\captionsetup{justification=justified, singlelinecheck=false, width=0.9\textwidth, font={footnotesize, stretch=0.9}}
\vspace{-0.1em}
\caption*{\footnotesize \textit{Notes:} The Figure shows the histogram for the FKRB and ENet estimator for $R=25,81,289$ grid points and the parametric log-normal distribution overlaid in dashed line. }
\label{fig:hist.wtp}
\end{figure}

\begin{figure}
\centering
\caption{Estimated joint distributions of random coefficients on travel time and fare.}
\includegraphics[width=0.8\linewidth]{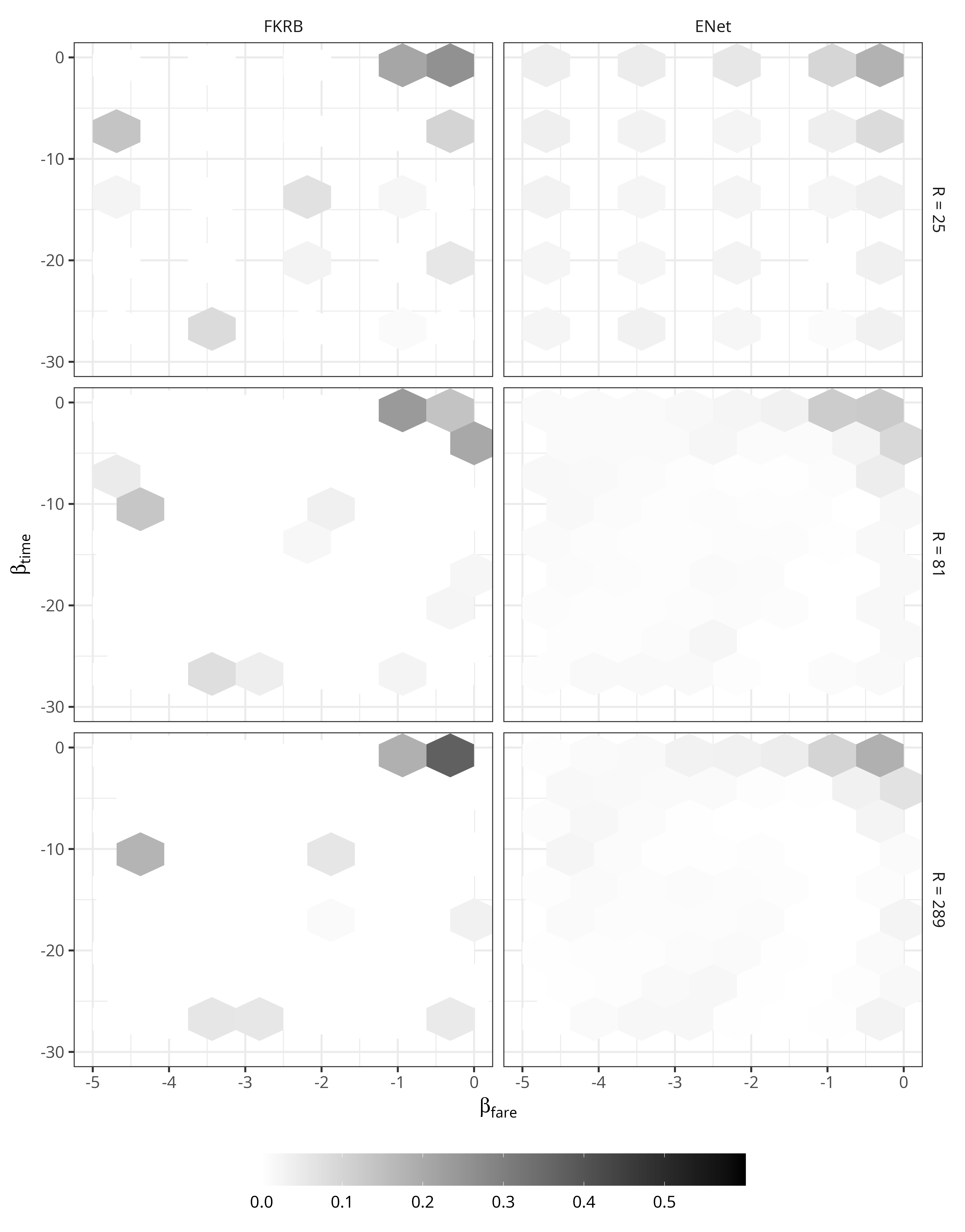}
\label{fig:heatmap_all}
\captionsetup{justification=justified, singlelinecheck=false, width=0.8\textwidth, font={footnotesize, stretch=0.9}}
\vspace{-0.1em}
\caption*{\footnotesize \textit{Notes:} The heatmap for the parametric model with log-normal fare and time coefficients is created using $100 \ 000$ draws from the estimated log-normal distribution. For the non-parametric approaches, we used the specification with $R=25,81,289$ grid points, whereby each hexagon corresponds to the sum of weights falling into the two-dimensional bin. We used $8$ bins per dimension.}
\end{figure}

\begin{table}[H]
\caption{Parameter estimates and VoT for different grid sizes.}
\label{tab:application_extended}
\centering
\small
\setlength{\tabcolsep}{5.5pt}
\renewcommand{\arraystretch}{1.3}
\begin{tabular}[t]{lccccccccc}
\toprule\toprule
        
					\multicolumn{10}{c}{{FKRB}} \\
                    \midrule
				\multicolumn{1}{c}{ } & \multicolumn{3}{c}{R=25} & \multicolumn{3}{c}{R=81} & \multicolumn{3}{c}{R=289} \\
\cmidrule(l{3pt}r{3pt}){2-4} \cmidrule(l{3pt}r{3pt}){5-7} \cmidrule(l{3pt}r{3pt}){8-10}
Parameter & Coef & se & CI$_{0.95}$ & Coef & se & CI$_{0.95}$  & Coef & se & CI$_{0.95}$ \\
\midrule
				\multicolumn{10}{l}{\textit{mean}}  \\[0.2em]
Fare & -1.54 & 0.17 & {}[-1.88, -1.2] & -1.55 & 0.17 & {}[-1.88, -1.22] & -1.56 & 0.17 & {}[-1.9, -1.22]\\
Time & -7.44 & 0.91 & {}[-9.22, -5.67] & -7.59 & 0.92 & {}[-9.39, -5.79] & -7.81 & 0.90 & {}[-9.56, -6.05]\\
Comfort & -0.99 & 0.22 & {}[-1.31, -0.67] & -0.95 & 0.23 & {}[-1.25, -0.65] & -0.96 & 0.23 & {}[-1.28, -0.65]\\
Interchanges & -2.17 & 0.17 & {}[-2.6, -1.74] & -2.11 & 0.15 & {}[-2.56, -1.66] & -2.13 & 0.16 & {}[-2.57, -1.69]\\
				\addlinespace
				\multicolumn{10}{l}{\textit{standard deviation}}  \\[0.2em]
				Fare & 1.59 & -- & -- & 1.64 & -- & -- & 1.63 & -- & --\\
				Time & 8.99 & -- & -- & 9.29 & -- & -- & 9.55 & -- & --\\
				\addlinespace
				\multicolumn{10}{l}{\textit{value-of-time}}  \\[0.2em]
				wtp& -20.30 & 6.79 & {}[-33.6, -7] & -17.87 & 6.89 & {}[-31.38, -4.35] & -16.25 & 6.21 & {}[-28.43, -4.07]\\
                \midrule
                
                LL & \multicolumn{3}{c}{-1674.81} & \multicolumn{3}{c}{-1672.62}  & \multicolumn{3}{c}{-1672.04}  \\
				\midrule\\[-0.5em]
					\multicolumn{10}{c}{{ENet}} \\
					\midrule
					\multicolumn{1}{c}{ } & \multicolumn{3}{c}{R=25} & \multicolumn{3}{c}{R=81} & \multicolumn{3}{c}{R=289} \\
					\cmidrule(l{3pt}r{3pt}){2-4} \cmidrule(l{3pt}r{3pt}){5-7} \cmidrule(l{3pt}r{3pt}){8-10}
					Parameter & Coef & se & CI$_{0.95}$  & Coef & se & CI$_{0.95}$  & Coef & se & CI$_{0.95}$ \\
					\midrule
					\multicolumn{10}{l}{\textit{mean}}\\
					Fare & -1.77 & 0.07 & {}[-1.96, -1.58] & -1.64 & 0.10 & {}[-1.91, -1.37] & -1.64 & 0.12 & {}[-1.98, -1.3]\\
Time & -8.83 & 0.57 & {}[-10.16, -7.49] & -8.15 & 0.69 & {}[-10.16, -6.14] & -8.29 & 0.71 & {}[-10.16, -6.42]\\
Comfort & -1.31 & 0.18 & {}[-1.73, -0.9] & -1.17 & 0.19 & {}[-1.53, -0.8] & -1.18 & 0.21 & {}[-1.56, -0.81]\\
Interchanges & -2.67 & 0.16 & {}[-3.2, -2.14] & -2.45 & 0.14 & {}[-2.9, -1.99] & -2.47 & 0.16 & {}[-3.02, -1.92]\\
					\addlinespace
					\multicolumn{10}{l}{\textit{standard deviation}}\\
					Fare & 1.60 & -- & -- & 1.55 & -- & -- & 1.53 & -- & --\\
					Time  & 9.55 & -- & -- & 9.05 & -- & -- & 9.07 & -- & --\\
					\addlinespace
					\multicolumn{10}{l}{\textit{value-of-time}} \\
					wtp  & -29.07 & 4.58 & {}[-41.19, -16.95] & -26.15 & 4.86 & {}[-39.05, -13.25] & -25.77 & 4.64 & {}[-39.23, -12.31]\\
                    \midrule
                    LL & \multicolumn{3}{c}{-1683.03} & \multicolumn{3}{c}{-1678.69}  & \multicolumn{3}{c}{-1679.09}  \\
					\bottomrule\bottomrule
				\end{tabular}
                \vspace{-0.8em}
\captionsetup{position=bottom, justification=justified, singlelinecheck=false, width=1\textwidth,font={footnotesize, stretch=0.9}}
\caption*{\textit{Notes:} The table reports the estimated parameter, standard errors (se), and the $95\%$ confidence-(conservative) intervals for the penalized (ENet) and unpenalized (FKRB) estimators for $R=25,81,289$ grid points. Confidence intervals are constructed using a multinomial block bootstrap with $500$ iterations and equal weights. The tuning parameter for the ENet estimator is selected by $5$-fold cross-validation based on the lowest mean squared error.}
\end{table}

\begin{table}[H]
\centering
\caption{Estimated standard deviations and time--fare correlations for the mean RC's.}
\centering
\small
\setlength{\tabcolsep}{10pt}
\renewcommand{\arraystretch}{1.3}
\begin{tabular}[t]{lccccccc}
\toprule\toprule
\multicolumn{1}{c}{ } & \multicolumn{1}{c}{Log-N} & \multicolumn{3}{c}{FKRB} & \multicolumn{3}{c}{ENet} \\
\cmidrule(l{3pt}r{3pt}){2-2} \cmidrule(l{3pt}r{3pt}){3-5} \cmidrule(l{3pt}r{3pt}){6-8}
Parameter &   & R=25 & R=81 & R=289 & R=25 & R=81 & R=289\\
\midrule
\multicolumn{8}{l}{\textit{standard deviation}} \\
Fare & 4.723 & 1.595 & 1.641 & 1.634 & 1.600 & 1.547 & 1.534\\
Time & 2.745 & 8.991 & 9.287 & 9.547 & 9.551 & 9.052 & 9.075\\
\addlinespace
\multicolumn{8}{l}{\textit{correlation}} \\
Fare-Time & 0.560 & 0.412 & 0.472 & 0.413 & 0.245 & 0.305 & 0.294\\
\bottomrule\bottomrule
\end{tabular}
				\label{tab:correlation.application}
                \vspace{-0.8em}
\captionsetup{position=bottom, justification=justified, singlelinecheck=false, width=0.75\textwidth,font={footnotesize, stretch=0.9}}
\caption*{\textit{Notes:} The table presents the estimated standard deviation and correlation of the corresponding RCs, calculated from the estimated RC distributions of the three approaches.

}
\end{table}

\section{Auxiliary results: proofs and verification of assumptions}\label{app:proofs}

\paragraph{Additional notation.} We introduce additional notations used in the section before we formally describe intermediate results and the associated assumptions.
For $\alpha =(\delta_\alpha, F_{\theta_\alpha}) \in \mathcal{A}_N$, let $\gamma_{\alpha} = (\vdelta_{\alpha}, \theta_{\alpha}) \in \mathcal{C} \times \mathbb{R}^{R_N}$.
Using this notation, we can rewrite $P(X,\alpha)$ as ${P}_{\gamma}(X,\gamma_{\alpha})$, and define ${\ell_{N,\gamma}^{\mu}}(\cdot),{Q_{N,\gamma}^{\mu}}(\cdot)=\mathbb{E}\! [{\ell_{N,\gamma}^{\mu}}(z,\cdot)]$ accordingly. In addition, let $T_{\mathcal{A}_N}({\alpha_N^\mu})$ denote the feasible tangent cone in $\mathcal{A}_N$ at ${\alpha_N^\mu}$, 
$$T_{\mathcal{A}_N}({\alpha_N^\mu}) = \{v_{\alpha}: \textit{~there exists a sequence of $\alpha_t\in \mathcal{A}_N$ such that~} v_{\alpha}=\lim_{t\downarrow 0} \frac{\alpha_t -\alpha^{\mu}_N}{t} \},$$ 
and by construction we see that $\alpha - \alpha^{\mu}_N \in T_{\mathcal{A}_N}({\alpha_N^\mu})$.
For $v=(\delta_{v}, \mathbf{1}_{N}(\cdot)'\theta_{v}) \in T_{\mathcal{A}_N}({\alpha_N^\mu})$,  let $h_{v} = (\delta_{v}, \theta_{v}) \in \mathcal{C} \times \mathbb{R}^{R_N}$,
and let
$$T_{\mathcal{A}_N,h}({\alpha}_{N}^\mu) \equiv \left\{ h_{v}  \in \mathcal{C} \times \mathbb{R}^{R_N} : v \in  T_{\mathcal{A}_N}({\alpha_N^\mu}) \right\}.$$  
The simplex constraints imply that
\begin{align*}
T_{\mathcal{A}_N,h}({\alpha_N^\mu}) = \left\{ (\delta_{v}, \theta_{v}) \in \mathcal{C} \times \mathbb{R}^{R_N} : \iota^{\prime} \theta_{v}=0, \theta_{v, r} \geq 0 \text { whenever } {\theta}_{\alpha^{\mu}_N, r}=0\right\},
\end{align*}
where $\theta_{v, r}$ and $\theta_{\alpha^{\mu}_N, r}$ are the $r$th entry of $\theta_{v}$ and $\theta_{\alpha^{\mu}_N}$ respectively, and $\iota$ is a $R_N \times 1$ vector with all entries equal to 1. For $v\in T_{\mathcal{A}_N}(\alpha^{\mu}_N)$, denote
\begin{equation*} \label{eq:inner.product 3}
\begin{aligned}& D_+{Q_{N}^{\mu}}(\alpha^{\mu}_N)[v] 
					=\lim _{t \downarrow 0}\frac{{Q_{N}^{\mu}}( \alpha^{\mu}_N+tv)-{Q_{N}^{\mu}}( \alpha^{\mu}_N)}{t},\\
					& D^2_+{Q_{N}^{\mu}}(\alpha^{\mu}_N)[v,v] =
					\lim _{t \downarrow 0}\frac{D_+{Q_{N}^{\mu}}(\alpha^{\mu}_N+tv)[v]-D_+{Q_{N}^{\mu}}(\alpha^{\mu}_N)[v]}{t}.
\end{aligned}
\end{equation*}
Equivalently, the above can also be represented via $h_{v}$ for $v\in T_{\mathcal{A}_N,\gamma}(\alpha_{\mu_N, N})$,
\begin{equation*}  
\begin{aligned}& D_+{Q_{N,\gamma}^{\mu}}(\gamma_{\alpha^{\mu}_N})[h] 
					=\lim _{t \downarrow 0}\frac{{Q_{N,\gamma}^{\mu}}( \gamma_{\alpha^{\mu}_N}+th)-{Q_{N,\gamma}^{\mu}}( \gamma_{\alpha^{\mu}_N})}{t},\\
					& D^2_+{Q_{N,\gamma}^{\mu}}(\gamma_{\alpha^{\mu}_N})[h,h] =
					\lim _{t \downarrow 0}\frac{D_+{Q_{N}^{\mu}}(\gamma_{\alpha^{\mu}_N}+th)[h]-D_+{Q_{N}^{\mu}}(\gamma_{\alpha^{\mu}_N})[h]}{t}.
\end{aligned}
\end{equation*}
Consider a class of real-valued measurable functions $\mathcal{G}$, and an $\epsilon$-bracket $[\mathrm{l}, \mathrm{u}]$ consisting of two functions $\mathrm{l}, \mathrm{u} \in \mathrm{L}_2({P})$ such that $\mathrm{l}(z) \leq f(z) \leq \mathrm{u}(z)$ for some $f \in \mathcal{G}$, and the $\mathrm{L}_2({P})$ distance satisfies $\|u-\mathrm{l}\|_{\mathrm{L}_2({P})} \equiv (\mathbb{E}(\mathrm{l}(z)-\mathrm{u}(z))^{2})^{1/2} < \epsilon$. The bracketing number $\mathbb{N}_{[]}(\epsilon, \mathcal{G},\|\cdot \|_{\mathrm{L}_2({P})})$ is the minimum number of such brackets required to cover any $f\in \mathcal{G}$ under the norm $\|\cdot \|_{\mathrm{L}_2({P})}$. We also use the letters $ c$ and $ C$ (with or without subscripts) to denote generic positive numbers, which may take different values depending on the context. To characterize the bootstrap, we adopt the discussions in Section 5 of \cite{chen2012} and consider $(\mathcal{Z}^{\infty} \times \Omega, \sigma(z^{\infty}) \times \sigma(\Omega), P_{\mathcal{Z}^{\infty} \times \Omega})$ with $\Omega=\left\{\omega_{n, N}: n=1, \ldots, N; N=1, \ldots\right\}$ being the space of weights defined as a triangle array with elements in $\mathbb{R}_+$, $\mathcal{Z}^{\infty}$ being the space of samples, $\sigma(z^{\infty}) \times \sigma(\Omega) $ and $ P_{\mathcal{Z}^{\infty} \times \Omega}$ being the corresponding $\sigma$-algebra and the joint probability over $\mathcal{Z}^{\infty} \times \Omega$. We use $P(\cdot)$ as the marginal distribution with respect to $z^{\infty}$ (so it is coherent with the notations used in the previous discussions), and use $P^*(\cdot)$ as an abbreviation of the bootstrap conditional distribution given the sample, i.e., $P^*(\cdot) \equiv P_{\mathcal{Z}^{\infty} \times \Omega |\mathcal{Z}^{\infty}}(\cdot \; \rvert \{z_n\}_{n=1}^N),$ and we say $X$ is $X= o_{p^*}(1)$ wpa1 ($P$ ), if, for any $\epsilon>0, P(P^*(\left|X\right|>\epsilon )>\epsilon) \rightarrow 0$ as $n \rightarrow \infty$ (note that $P^*(\left|X\right|>\epsilon )$ is a measurable function of $\{z_n\}_{n=1}^N$), and $X=O_{p^*}(1)$ wpa1 ($P$ ), if, for any $\epsilon>0,$ there exists $M$ such that $ P(P^*(\left|X\right|>M )>\epsilon) \rightarrow 0$ as $n \rightarrow \infty$. 
In addition, for two random variables $A_N,B_N$ measurable with the associated $\sigma$-algebra, let $\mathcal{L}^*(A_N),\mathcal{L}(B_N)$ be the conditional law of $A_N$, i.e., $P^*(A_N\leq \cdot)$, and the law of $B_N$, , i.e., $P(B_N\leq \cdot)$, respectively, and we denote $\left|\mathcal{L}^*(A_N)-\mathcal{L}(B_N)\right|=o_{P}(1)$ if, for any $\delta>0$, there exists a $N(\delta)$ such that
$$
P(\sup _{f:\mathbb{R}\mapsto \mathbb{R}, s.t., |f(z_1)-f(z_2)|\leq |z_1-z_2|, \|f\|_{\mathrm{L}^\infty}\leq 1}\left|E_{P^*}\left[f(A_N)\right]-E[f(B_N)]\right| \leq \delta) \geq 1-\delta \quad \text { for all } N \geq N(\delta),
$$
where $E_{P^*}(\cdot)$ is calculated based on the bootstrap conditional distribution.

\subsection{Proofs of results in the main context}\label{appen;proof main context}

\begin{lemma}[Convergence rate]\label{lem: consistency}
Given i.i.d. $z_i$, and suppose that the conclusions of Lemmas \ref{lem:appendix local variance}-\ref{lem:append,Entropy} hold, then $\|{\widehat{\alpha}_N^\mu}- \alpha^{\mu}_N\| =O_p(\xi_N)$ with $\xi_N =\mathrm{d}_N$ with $\mathrm{d}_N$ specified in Lemma \ref{lem:append,Entropy}  
\end{lemma}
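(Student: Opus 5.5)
The argument is the standard rate theorem for sieve M-estimators (Chen's Theorem 3.2 / van der Vaart–Wellner Theorem 3.2.5), applied to the penalized criterion on the finite-dimensional sieve $\calA_N$. It is centered at the penalized pseudo-true value $\alpha_N^\mu$, which is the unique minimizer of $Q_N^\mu$. We take as given the conclusions of Lemmas \ref{lem:appendix local variance}--\ref{lem:append,Entropy}. Concretely, we use three of them:
\begin{itemize}
\item a local quadratic lower bound, $Q_N^\mu(\alpha)-Q_N^\mu(\alpha_N^\mu)\gtrsim\|\alpha-\alpha_N^\mu\|^2$ in a neighborhood;
\item a local variance (modulus) bound, $E\bigl[\{\ell_N^\mu(z,\alpha)-\ell_N^\mu(z,\alpha_N^\mu)\}^2\bigr]\lesssim\|\alpha-\alpha_N^\mu\|^2$;
\item a bracketing-entropy bound for the local class $\{\ell_N^\mu(\cdot,\alpha)-\ell_N^\mu(\cdot,\alpha_N^\mu):\|\alpha-\alpha_N^\mu\|\le\delta\}$, whose entropy integral $J_{[]}(\delta)$ defines the rate $\mathrm{d}_N$ through $\sqrt N\,\mathrm{d}_N^2\gtrsim J_{[]}(\mathrm{d}_N)$.
\end{itemize}
The penalty term $\frac{\mu_N}{2}\|\theta_\alpha\|_e^2$ is deterministic, so it cancels in $\widehat Q_N^\mu-Q_N^\mu$. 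The empirical process therefore involves only $\ell_N^0$, while the curvature is provided by the penalized population criterion.

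\textbf{Step 1 (consistency).} We first show $\|\widehat\alpha_N^\mu-\alpha_N^\mu\|=o_p(1)$. By the basic inequality, $\widehat Q_N^\mu(\widehat\alpha_N^\mu)\le\widehat Q_N^\mu(\alpha_N^\mu)$. This gives
\[
Q_N^\mu(\widehat\alpha_N^\mu)-Q_N^\mu(\alpha_N^\mu)\le 2\sup_{\alpha\in\calA_N}\bigl|\bar E_N\{\ell_N^0(z,\alpha)\}\bigr|,
\]
and the right side is $o_p(1)$ by the entropy bound, since $\calA_N$ is compact and the bracketing integral is finite. Uniqueness of the minimizer, together with the quadratic lower bound (well-separatedness), then yields consistency.

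\textbf{Step 2 (peeling).} Partition the parameter space into shells $S_j=\{2^{j-1}r_N^{-1}<\|\alpha-\alpha_N^\mu\|\le 2^j r_N^{-1}\}$ with $r_N=\mathrm{d}_N^{-1}$. If $\widehat\alpha_N^\mu\in S_j$, the basic inequality together with the curvature bound forces
\[
\sup_{S_j}\bigl|\bar E_N\{\ell_N^0(z,\alpha)-\ell_N^0(z,\alpha_N^\mu)\}\bigr|\gtrsim 2^{2j-2}r_N^{-2}.
\]
By the maximal inequality with bracketing (van der Vaart–Wellner Lemma 3.4.2), using the variance bound,
\[
E\sup_{\|\alpha-\alpha_N^\mu\|\le\delta}\bigl|\bar E_N\{\cdot\}\bigr|\lesssim N^{-1/2}J_{[]}(\delta)\Bigl(1+\frac{J_{[]}(\delta)}{\delta^2\sqrt N}\Bigr).
\]
Markov's inequality then bounds the probability of shell $j$ by a constant times $N^{-1/2}J_{[]}(2^j\mathrm{d}_N)/(2^{2j}\mathrm{d}_N^2)$. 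Because $\delta\mapsto J_{[]}(\delta)/\delta^{\kappa}$ is decreasing for some $\kappa<2$, this is at most a constant times $2^{j(\kappa-2)}$. Summing over $j\ge M$ gives a quantity that tends to $0$ as $M\to\infty$ uniformly in $N$. Hence $\|\widehat\alpha_N^\mu-\alpha_N^\mu\|=O_p(\mathrm{d}_N)$.

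\textbf{Main obstacle.} The hardest part is making the curvature and variance bounds hold uniformly over a sieve whose dimension $R_N$ grows and whose design is nearly singular. The $\ell_2$ penalty guarantees curvature of at least $\mu_N$ in the $\theta$ directions, but this constant degenerates as $\mu_N\to0$. The resolution depends on which norm is used. If the norm is the strong norm $\|\cdot\|_s$, then $\mathrm{d}_N$ must absorb a factor $\mu_N^{-1/2}$. If instead we work with the pseudo-norm $\|\cdot\|_\mu$, the curvature constant is $1$ by construction, and we only need the second-order Taylor remainder of $Q_N^\mu$ to be uniformly small relative to $\|\cdot\|_\mu^2$. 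I would work in $\|\cdot\|_\mu$. I would also check that the simplex boundary causes no difficulty: the basic-inequality argument is valid on all of $\calA_N$, and the minimum at $\alpha_N^\mu$ makes the one-sided first derivative nonnegative on the tangent cone, so the quadratic lower bound still holds.
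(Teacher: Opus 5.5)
Your proposal is correct and is essentially the paper's argument. The paper invokes Theorem~3.1 of \citet{chen1998} and checks its conditions A.1--A.4 (i.i.d.\ data, local variance, local envelope, bracketing entropy) via Lemmas~\ref{lem:appendix local variance}--\ref{lem:append,Entropy}, while you write out the peeling and maximal-inequality argument behind that theorem. The differences are cosmetic: the paper's lemmas are stated on criterion-gap sets $\{Q_N^\mu(\alpha)-Q_N^\mu(\alpha_N^\mu)\le\varepsilon^2\}$, so the curvature concern in your ``main obstacle'' is absorbed into the choice of distance (Lemma~\ref{lem:lower loss function difference} links it to $\|\delta_v\|_e^2+\|\theta_v\|_{W_\mu}^2$), and the envelope bound of Lemma~\ref{lem:append,local bound} plays the role that uniform boundedness of the loss differences plays in your van der Vaart--Wellner maximal inequality.
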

\begin{proof}[Proof of Lemma \ref{lem: consistency}]
This proof essentially follows the proof of Theorem 3.1 in \cite{chen1998} (or Theorem 3.1 in \cite{chen2007large}). We only need to verify if the conditions A.1-A.4 for Theorem 3.1 from \cite{chen1998} hold. The condition A.1 is directly implied by i.i.d. $z_i$, while the remaining conditions A.2-A.4 are indicated by the conclusions of Lemmas \ref{lem:appendix local variance}-\ref{lem:append,Entropy}.	
\end{proof} 
\begin{remark}
Results from \cite{chen1998} indicate that we can further relax the dependence conditions in Lemma \ref{lem: consistency}.
\end{remark} 
		
\begin{assumption}[Local properties of the functional]\label{assum:average_functional}
(i) $v\mapsto D\tau(\alpha_N^\mu)[v]$ is a linear functional from $\calV_N^\mu$ to $\mathbb{R}$.
(ii)
\[
\sup_{\alpha\in\calN_N^\mu}
\frac{
\left|
\tau(\alpha)-\tau(\alpha_N^\mu)-D\tau(\alpha_N^\mu)[\alpha-\alpha_N^\mu]
\right|}
{\|v_N^\tau\|_\mu}
=o(N^{-1/2}).
\]
\end{assumption}
		 
\begin{assumption}[Local properties of the criterion]\label{assum:Local Behavior of Criterion v2}
(i) $D\ell_N^\mu(z,\alpha_N^\mu)[v]$ is linear in all $v\in\calV_N^\mu$;
(ii) Suppose that for some positive sequence $e_N=o(\min\{N^{-1/2},\rho_N(u_N^\tau)\})$ with $u_N^\tau\equiv\frac{v_N^\tau}{\|v_N^\tau\|_{sd,\mu}}$:
\[
\sup_{\alpha\in\calN_N^\mu}
\bar{E}_N\!\left\{
\ell_N^\mu(z,\alpha\pm e_Nu_N^\tau) -\ell_N^\mu(z,\alpha) -D\ell_N^\mu(z,\alpha_N^\mu)[\pm e_Nu_N^\tau]\right\} =O_p(e_N^2);
\]
and
\[
\sup_{\alpha\in\calN_N^\mu}
\left|
E[\ell_N^\mu(z,\alpha)-\ell_N^\mu(z,\alpha\pm e_Nu_N^\tau)]-\frac{\|\alpha\pm e_Nu_N^\tau-\alpha_N^\mu\|_\mu^2-\|\alpha-\alpha_N^\mu\|_\mu^2}{2}\right|=O(e_N^2).
\]
\end{assumption}
		
\begin{assumption}[CLT]\label{assum:clt v2}
\[
\frac{1}{\sigma_N \sqrt N} \sum_{i=1}^N D\ell_N^\mu(z_i,\alpha_N^\mu)[v_N^\tau] \rightarrow_d N(0,1).
\]
\end{assumption}

\begin{proof}[Proof of Theorem \ref{theo:3.1 v3 0}] \label{Proof of Theorem ref{theo:3.1 v3 0}}
Assumption \ref{assum:norm ratio} implies that $u_N^\tau=\frac{v_N^\tau}{\|v_N^\tau\|_{sd,\mu}} =O(1)$. Under Assumption \ref{assum:ri}, we have that for some $\varepsilon=o(\min\{N^{-1/2},\rho_N(u_N^\tau)\})$,   $\widehat{\alpha}_N^\mu\pm|\varepsilon|u_N^\tau\in\calN_N^\mu$, and given Assumption \ref{assum:norm ratio}, the proof of Theorem 3.1 from \cite{chen2014sieve} implies that 
\begin{align}
&  \left|\sqrt{N}\left\langle{\widehat{\alpha}_N^\mu} -\alpha^{\mu}_N, u_N^\tau\right\rangle-\sqrt{N} \bar{E}_N\left\{D{\ell_{N}^{\mu}} (z, \alpha^{\mu}_N )[u_N^\tau]\right\}\right|=o_p(1), \label{eq:theo1,1} \\
& \frac{\tau ({\widehat{\alpha}_N^\mu} )-\tau ({\alpha_N^\mu})}{\|v^\tau_N \|_{s d,\mu}} -  \langle{\widehat{\alpha}_N^\mu} -\alpha^{\mu}_N, u_N^\tau \rangle=o_p(N^{-1/2}). \label{eq:theo1,2}
\end{align}
Hence, we can conclude as the asymptotic distribution of $ \langle{\widehat{\alpha}_N^\mu} -\alpha^{\mu}_N, u_N^\tau \rangle$ is indicated by Assumption \ref{assum:clt v2}.
\end{proof}
			
\begin{proof}[Proof of Corollary \ref{theo:3.1 v3}]
\label{Proof of Theorem ref{theo:3.1 v3}} 
Assumption  \ref{assum:average_functional}.(ii) implies that 
$${ |\tau(\alpha^0_N)-\tau (\alpha^{\mu}_N)- D\tau(\alpha^{\mu}_N) [\alpha^0_N-\alpha^{\mu}_N ] |}/{\left|v^\tau_N \right|\mu}=o(N^{-1/2}).$$
Moreover, Assumption \ref{assum:average_functional}.(i), together with the definition of $v^\tau_N$, implies that 
$$|D\tau(\alpha^{\mu}_N) [\alpha^0_N-\alpha^{\mu}_N ]| \leq |v^\tau_N |\mu,|\alpha^0_N-\alpha^{\mu}N|\mu.$$ 
These arguments directly yield the first statement in this corollary, and the remaining claim follows directly from the asymptotic normality established in Theorem \ref{theo:3.1 v3 0}.
\end{proof} 
\begin{lemma}\label{lem:empirical sieve Riesz representer closed solution}
Let $\mathbf{1}_{\widehat S_N}(\cdot)$ stack the indicator functions $\{\mathbf{1}(\beta_r\le\cdot)\}_{r\in\widehat S_N}$, let $\gamma=(\delta',\theta')'\in\mathbb{R}^{d_{\vdelta}}\times\mathbb{R}^{|\widehat S_N|}$, and define
\begin{align*}
\widehat{{D}}_{N}^\mu&= \left.\frac{\partial \tau(\widehat{\alpha}_N^\mu+(\delta,\theta'\mathbf{1}_{\widehat S_N}(\cdot)))}{\partial \gamma}\right|_{\gamma=0},\\
\widehat{{H}}_{N}^\mu &=\frac{1}{N}\sum_{i=1}^N\left. \frac{\partial^2 \ell_{N}^\mu(z_i,\widehat{\alpha}_N^\mu+(\delta,\theta'\mathbf{1}_{\widehat S_N}(\cdot)))}{\partial \gamma \partial \gamma'}\right|_{\gamma=0}.
\end{align*}
Then the empirical sieve Riesz representer in Equation \eqref{eq:empirical sieve riesz representer equation system} is
\[
\widehat v_N^\tau
=
(\widehat\delta^\tau,\widehat\theta^{\tau'}\mathbf{1}_{\widehat S_N}(\cdot)),
\qquad
(\widehat\delta^{\tau'},\widehat\theta^{\tau'})'
=
(\widehat{{H}}_{N}^\mu)^{-1}\widehat{{D}}_{N}^\mu.
\]
\end{lemma}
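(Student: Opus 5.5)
The plan is to reduce the defining system \eqref{eq:empirical sieve riesz representer equation system} to a finite-dimensional linear system in coordinates, exactly as the population representer was obtained from $(H_N^\mu)^{-1}D_N^\mu$.

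\textbf{Step 1: coordinates.} First I identify $\widehat{\calV}_N^\mu$ with a coordinate space. Every element of $\calA_N(\widehat S_N)-\{\widehat\alpha_N^\mu\}$ has the form $(\delta,\theta'\mathbf{1}_{\widehat S_N}(\cdot))$ with $\gamma=(\delta',\theta')'\in\mathbb{R}^{d_\delta}\times\mathbb{R}^{|\widehat S_N|}$. Its closed linear span is finite dimensional, so closure is automatic, and the map $\gamma\mapsto v_\gamma\equiv(\delta,\theta'\mathbf{1}_{\widehat S_N}(\cdot))$ parametrizes it linearly. The simplex restriction $\iota'\theta=0$ only restricts which $\gamma$ arise. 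I would treat the span as all of $\mathbb{R}^{d_\delta+|\widehat S_N|}$, matching the paper's convention for $H_N^\mu$ and $D_N^\mu$. If one insists on $\iota'\theta=0$, the same argument applies with an orthonormal basis of that subspace, and the formula holds in the reduced coordinates.

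\textbf{Step 2: bilinear form and derivative in coordinates.} By the chain rule,
\[
D^2\ell_N^\mu(z_i,\widehat\alpha_N^\mu)[v_{\gamma_1},v_{\gamma_2}]=\gamma_1'\,\partial_{\gamma\gamma'}\ell_N^\mu(z_i,\widehat\alpha_N^\mu+v_\gamma)|_{\gamma=0}\,\gamma_2 .
\]
This gives $\langle v_{\gamma_1},v_{\gamma_2}\rangle_N=\gamma_1'\widehat H_N^\mu\gamma_2$. Similarly, $D\tau(\widehat\alpha_N^\mu)[v_\gamma]=\widehat D_N^{\mu\prime}\gamma$, using linearity of the directional derivative, the empirical analogue of Assumption~\ref{assum:average_functional}(i).

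\textbf{Step 3: solve the representer system.} The second condition in \eqref{eq:empirical sieve riesz representer equation system} reads $\widehat D_N^{\mu\prime}\gamma=\gamma'\widehat H_N^\mu\widehat\gamma^\tau$ for all $\gamma$. Hence $\widehat H_N^\mu\widehat\gamma^\tau=\widehat D_N^\mu$, so $\widehat\gamma^\tau=(\widehat H_N^\mu)^{-1}\widehat D_N^\mu$.

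For the supremum condition, Cauchy–Schwarz in the $\widehat H_N^\mu$ inner product gives
\[
(\widehat D_N^{\mu\prime}\gamma)^2\le(\gamma'\widehat H_N^\mu\gamma)\,(\widehat D_N^{\mu\prime}(\widehat H_N^\mu)^{-1}\widehat D_N^\mu),
\]
with equality at $\gamma=\widehat\gamma^\tau$. The supremum therefore equals $\widehat D_N^{\mu\prime}(\widehat H_N^\mu)^{-1}\widehat D_N^\mu=D\tau(\widehat\alpha_N^\mu)[\widehat v_N^\tau]$. Uniqueness follows from positive definiteness.

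\textbf{Main obstacle.} The delicate point is that $\langle\cdot,\cdot\rangle_N$ is a genuine inner product, that is, that $\widehat H_N^\mu$ is positive definite and invertible.

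For the $\theta$-block I would use the penalty. Its Hessian contribution is $\mu_N I$, while the weighted least-squares part is $N^{-1}\sum_i$ of $\partial P'\Sigma^{-1}\partial P$ minus residual-curvature terms. The residual terms have mean zero at the population target. At the estimate, I would argue they are dominated with probability approaching one, or assume invertibility as part of the regularity conditions, as \citet{chen2014sieve} do for the empirical Riesz representer.

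For the $\delta$-block, when $\mu_N=0$ or for the fixed coefficients, I would invoke the local curvature implicit in Assumption~\ref{assum:ri} and Assumption~\ref{assum:norm ratio}. These give positive definiteness of $H_N^\mu$, which transfers to $\widehat H_N^\mu$ via the same uniform convergence underlying Lemma~\ref{lem:sieve variance estimator}.
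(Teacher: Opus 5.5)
Your proposal is correct and is the paper's own argument with the details filled in. The paper's proof likewise rewrites both conditions in \eqref{eq:empirical sieve riesz representer equation system} in the coordinates $\gamma_v$, so that $\langle v_1,v_2\rangle_N=\gamma_{v_1}'\widehat H_N^\mu\gamma_{v_2}$ and $D\tau(\widehat\alpha_N^\mu)[v]=\gamma_v'\widehat D_N^\mu$, and reads off $(\widehat H_N^\mu)^{-1}\widehat D_N^\mu$. Like you, it tacitly treats $\widehat H_N^\mu$ as positive definite and the coordinate space as all of $\mathbb{R}^{d_\delta+|\widehat S_N|}$; your linear solve and Cauchy--Schwarz step are what the paper leaves implicit.

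One caution: the unconstrained convention is doing real work. If the restriction $\iota'\theta=0$ that directions in $\widehat{\calV}_N^\mu$ inherit from the simplex is imposed, the representer becomes $\Pi(\Pi'\widehat H_N^\mu\Pi)^{-1}\Pi'\widehat D_N^\mu$ for a basis matrix $\Pi$ of that subspace. This generally differs from $(\widehat H_N^\mu)^{-1}\widehat D_N^\mu$. So your reduced-coordinates remark recovers the same pattern only with the reduced Hessian $\Pi'\widehat H_N^\mu\Pi$ and gradient $\Pi'\widehat D_N^\mu$, not with the lemma's $\widehat H_N^\mu$ and $\widehat D_N^\mu$.
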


\begin{proof}[Proof of Lemma \ref{lem:empirical sieve Riesz representer closed solution}]
\label{Proof of lem:empirical sieve Riesz representer closed solution}
For each $v \in \widehat \calV_N^\mu, v \neq 0$, let $\gamma_{v} = (\delta_{v}, \theta_{v}) $ such that $v=(\delta_{v}, \mathbf{1}_{\widehat S_N}(\cdot)' \theta_{v})$, and the result directly follows from the fact that the empirical sieve Riesz representer specified in Equation(s)
\eqref{eq:empirical sieve riesz representer equation system} can be rewritten as: 
\begin{equation}\label{eq:empirical sieve riesz representer equation system 2}
\begin{aligned}
& D\tau({\widehat{\alpha}_N^\mu}) [\widehat v_N^\tau]=\sup _{v \in \widehat \calV_N^\mu, v \neq 0} \frac{ \gamma_{v}' 	\widehat{{D}}_{N}^\mu 	(\widehat{{D}}_{N}^{\mu})'  \gamma_{v} }{\gamma_{v}' \widehat{{H}}_{N}^\mu \gamma_{v} } ,\\
& D\tau({\widehat{\alpha}_N^\mu})[v]= \gamma_{v}' \widehat{{H}}_{N}^\mu \gamma_{{\widehat{v}_N^{\mu}}}, \quad \text { for any } v \in \widehat \calV_N^\mu.
\end{aligned}
\end{equation}
\end{proof} 

\begin{assumption}[Smoothness]\label{assum:riesz representer estimator I}
For a positive sequence $(\epsilon_N^*)=o(1)$:\newline
			(i)
			\[
			\sup_{\substack{\alpha\in\calN_N^\mu\\ v_1,v_2\in\calV_N^\mu,\ \|v_1\|_\mu=\|v_2\|_\mu=1}}
			\left|
			E\{D^2\ell_N^\mu(z,\alpha)[v_1,v_2]
			-D^2\ell_N^\mu(z,\alpha_N^\mu)[v_1,v_2]\}
			\right|
			=O(\epsilon_N^*);
			\]
			(ii)
			\[
			\sup_{\substack{\alpha\in\calN_N^\mu\\ v_1,v_2\in\calV_N^\mu,\ \|v_1\|_\mu=\|v_2\|_\mu=1}}
			\bar{E}_N\{D^2\ell_N^\mu(z,\alpha)[v_1,v_2]\}
			=O_p(\epsilon_N^*);
			\]
			(iii)
			\[
			\sup_{\substack{\alpha\in\calN_N^\mu\\ v\in\calV_N^\mu,\ \|v\|_\mu=1}}
			\left|D\tau(\alpha)[v]-D\tau(\alpha_N^\mu)[v]\right|
			=O(\epsilon_N^*);
			\]
			(iv)
			\[
			\sup_{\substack{v\in\calV_N^\mu,\ \|v\|_\mu=1}}
			\left|
			\frac1N\sum_{i=1}^N
			\left\{
			\bigl(D\ell_N^\mu(z_i,\widehat \alpha_N^\mu)[v]\bigr)^2
			-E\bigl[\bigl(D\ell_N^\mu(z,\alpha_N^\mu)[v]\bigr)^2\bigr]
			\right\}
			\right|
			=O_p(\epsilon_N^*).
			\]
			(v) 
							\[
			\sup_{\substack{\alpha\in\calN_N^\mu\cup\{\alpha_N^0\}\\ v\in\calV_N^\mu,\ \|v\|_\mu=1}}
			\left|
			E\!\left[
			D\ell_N^\mu(z,\alpha)[v]
			-D\ell_N^\mu(z,\alpha_N^\mu)[v]
			-D^2\ell_N^\mu(z,\alpha_N^\mu)[v,\alpha-\alpha_N^\mu]
			\right]
			\right|
			=O(\epsilon_N^*\xi_N).
			\] 
		\end{assumption}

\begin{proof}[Proof of Lemma \ref{lem:sieve variance estimator}]
\label{proof of Lemma {lem:sieve variance estimator}} Given  Assumption \ref{assum:ri}.(ii), without losing generality, we can safely work with  $\mathcal{V}_{N}^\mu$ in the proof. The proof of Lemma \ref{lem:sieve variance estimator}.(i) essentially follows the proof of Lemma 5.1 in \cite{chen2014sieve}.  Note that Assumptions \ref{assum:riesz representer estimator I}.(i)-(ii) imply that,
\begin{align}
\left|\frac{\langle\widehat{v}_N^\mu, v\rangle_N-\langle v_N^\tau, v\rangle_\mu}{ \|\widehat{v}_N^\tau\|_\mu \|v\|_\mu}\right|=O_p(\epsilon_N^*). \label{eq:appen lem sieve variance estimator 1}
\end{align}     
Assumption \ref{assum:riesz representer estimator I}.(iii) and (\ref{eq:appen lem sieve variance estimator 1}) jointly imply that 
\begin{align*}
O(\epsilon_N^*) & =  \left|D\tau(\widehat \alpha_N^\mu)[v]-D\tau(\alpha_N^\mu)[v]\right|\\
& = \sup_{\substack{v\in\calV_N^\mu,\ \|v\|_\mu=1}} \left|\frac{\langle\widehat{v}_N^\tau -v_N^\tau, v\rangle_\mu}{ \|v\|_\mu}+\frac{\langle\widehat{v}_N^\tau, v\rangle_N-\langle_\mu\widehat v_N^\tau, v\rangle}{ \|\widehat{v}_N^\tau\|_\mu\|v\|_\mu} \|\widehat{v}_N^\tau\|_\mu \right|\\
& =\sup_{\substack{v \in \mathcal{V}_{N}^\mu, \;\|v\|=1}} \left|\frac{\langle\widehat{v}_N^\tau -v_N^\tau, v\rangle_\mu}{ \|v\|_\mu}\right|+ O_p(\epsilon_N^* \|\widehat{v}_N^\tau\|_\mu),
\end{align*}
and therefore, we have that
\begin{align}
\sup_{\substack{v \in \mathcal{V}_{N}^\mu, \;\|v\|=1}} \left|\frac{\langle\widehat{v}_N^\tau -v_N^\tau, v\rangle_\mu}{ \|v\|_\mu}\right|=_p(\epsilon_N^* \|\widehat{v}_N^\tau\|_\mu ). \label{eq:appen lem sieve variance estimator 2}
\end{align}
Based on the above (\ref{eq:appen lem sieve variance estimator 2}) and (B.13) in \cite{chen2014sieve}, we can deduce Lemma \ref{lem:sieve variance estimator}(i). 
~\\
				
Next, we discuss Lemma \ref{lem:sieve variance estimator}.(ii). Note that 
\begin{align*}
\left|\frac{\|\widehat{v}_{N}^\tau\|_{sd,\mu,N}}{\|v^\tau_N \|_{sd,\mu}}-1\right| & \leq \left|\frac{\|\widehat{v}_{N}^\tau\|_{sd,\mu,N}-\|\widehat{v}_{N}^\tau\|_{sd,\mu}}{\|v^\tau_N \|_{sd,\mu}} \right|  + \frac{\|\widehat{v}_{N}^\tau - {v}_{N}^\tau\|_{sd,\mu}}{\|v^\tau_N \|_{sd,\mu}} \\
& \leq \left|\frac{\|\widehat{v}_{N}^\tau\|_{sd,\mu,N}-\|\widehat{v}_{N}^\tau\|_{sd,\mu}}{\|v^\tau_N \|_{sd,\mu}} \right|  + o_p(1)\\
& = \left|\frac{\|\widehat{v}_{N}^\tau\|_{sd,\mu,N}-\|\widehat{v}_{N}^\tau\|_{sd,\mu}}{\|\widehat{v}_{N}^\tau\|_{sd,\mu}} \right| O_p(1) + o_p(1)
\end{align*}
where the second inequality and the third equality are due to the result in Lemma \ref{lem:sieve variance estimator}.(i)  and Assumption \ref{assum:norm ratio}. Therefore, let $\tilde{w}= \widehat{v}_{N}^\tau/\|\widehat{v}_{N}^\tau\|_\mu$. Assumptions \ref{assum:ri} and \ref{assum:norm ratio} then imply that we only need to establish
\begin{align*}
\left| \|\tilde{w}\|_{sd,\mu,N}-\|\tilde{w} \|_{sd,\mu}
\right|=O_p(\epsilon_N^*),
\end{align*}
which is indicated by Assumption \ref{assum:riesz representer estimator I}.(iv).
\end{proof} 
			
\begin{proof}[Proof of Lemma \ref{lem:bias bound}]
\label{proof of {lem:bias bound}} Let  ${\alpha^0_N}=(\delta_{\alpha_N^0}, $ $ \theta_{\alpha_N^0}' \mathbf{1}_{S_N}(\cdot))$ and $\alpha^{\mu}_N$$=(\delta_{\alpha_N^\mu},$$\theta_{\alpha_N^\mu}' \mathbf{1}_{S_N}(\cdot))$ with $\delta_{\alpha_N^0},\delta_{\alpha_N^\mu}\in \mathcal{C}$ and $\theta_{\alpha_N^0}, \theta_{\alpha_N^\mu} \in (0,1]^{|S_N|}$. In addition, let ${v}= {\alpha^0_N}-\alpha^{\mu}_N=(\delta_{\alpha_N^0}-\delta_{\alpha_N^\mu}, ( \theta_{\alpha_N^0}-\theta_{\alpha_N^\mu})' \mathbf{1}_{S_N}(\cdot))$ and $\bar{v}=({\delta}_{\bar v}, {\theta}_{\bar{v}}'\mathbf{1}_{S_N}(\cdot))={v}/{\|v\|}.$
Assumption \ref{assum:ri}(i) and \ref{assum:riesz representer estimator I}.(v) imply that 
\begin{align}
& \langle \bar{v}, {\alpha^0_N}-\alpha^{\mu}_N \rangle =   
{E}\left[D^2\ell_{N}^\mu(z,{\alpha^\mu_N})[\bar{v},{\alpha^0_N} - \alpha^{\mu}_N] \right]\notag \\
\qquad\qquad\qquad & \leq E\Bigl[(D\ell_{N}^\mu(z,{\alpha^0_N})[\bar{v}]) - (D\ell_{N}^\mu(z,\alpha^{\mu}_N)[\bar{v}])  \Bigr] + O(\epsilon_N^* \xi_N). 
\label{eq:lemma4-1}
\end{align}  
Note that since $\bar{v}$ is in the feasible direction under Assumption \ref{assum:ri}, so that by the definition of ${\alpha^0_N}$ and $\alpha^{\mu}_N$ we have that 
\begin{align*}
E\Bigl[(D\ell_{N}^\mu(z,{\alpha^0_N})[\bar{v}]) - (D\ell_{N}^\mu(z,\alpha^{\mu}_N)[\bar{v}])  \Bigr] = \mu_N \theta_{\alpha_N^0}' {\theta}_{\bar{v}}\leq \mu_N \|\theta_{\alpha_N^0}\|_e \|{\theta}_{\bar{v}}\|_e.
\end{align*} 
Note that $\mu_N\|{\theta}_{\bar{v}}\|_e^2 \leq \|\bar{v}\|_\mu^2=1$, which then leads to the conclusion noting that $ \|\theta_{\alpha_N^0}\|_e \leq 1$.  
\end{proof}

\begin{corollary}\label{lem:append ci 1}
Suppose that  $\epsilon_N^*\xi_N=o(N^{-1/2})$, Assumption \ref{assum:norm ratio}, Corollary \ref{theo:3.1 v3} and Lemmas \ref{lem:sieve variance estimator}--\ref{lem:bias bound} hold,  then
\[
\liminf_{N\to\infty}P\left(\tau(\alpha_N^0) \in\left[\tau(\widehat{\alpha}_N^\mu)\pm (N^{-1/2}\Phi^{-1}(1-q/2)\widehat\sigma_N+(\sqrt{\mu_N} +\epsilon_N^*\xi_N)(1+N^{-1/2}\xi_N^{-1}) \|\widehat v_N^\tau\|_\mu)\right]\right)\ge 1-q. 
\]
\end{corollary}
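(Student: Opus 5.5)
The plan is to replace the infeasible half-length $c_N(q)$ of Corollary~\ref{theo:3.1 v3} by the feasible half-length in the statement. We then check that, with probability approaching one, the feasible half-length dominates the infeasible one up to terms that are $o_p(N^{-1/2}\sigma_N)$. Write $\widehat c$ for the feasible half-length and $z=\Phi^{-1}(1-q/2)$. We decompose
\[
\tau(\alpha_N^0)-\tau(\widehat\alpha_N^\mu)=\{\tau(\alpha_N^\mu)-\tau(\widehat\alpha_N^\mu)\}+\{\tau(\alpha_N^0)-\tau(\alpha_N^\mu)\}.
\]
The two pieces are a sampling term and a bias term. The goal is to show the event $|\tau(\alpha_N^0)-\tau(\widehat\alpha_N^\mu)|\le \widehat c$ has limiting probability at least $1-q$.

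\textbf{Step 1 (bias term).} Corollary~\ref{theo:3.1 v3} and Lemma~\ref{lem:bias bound} give
\[
|\tau(\alpha_N^0)-\tau(\alpha_N^\mu)|\le \|v_N^\tau\|_\mu\bigl(\sqrt{\mu_N}+C\epsilon_N^*\xi_N+o(N^{-1/2})\bigr).
\]
By Lemma~\ref{lem:sieve variance estimator}(i), $\|v_N^\tau\|_\mu=\|\widehat v_N^\tau\|_\mu(1+O_p(\epsilon_N^*))$. The factor $(1+N^{-1/2}\xi_N^{-1})$ in the feasible bound is meant to absorb the $o(N^{-1/2})$ remainder. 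Since $\xi_N\to 0$, we have $(\sqrt{\mu_N}+\epsilon_N^*\xi_N)N^{-1/2}\xi_N^{-1}\ge \epsilon_N^* N^{-1/2}$. This is not automatically larger than an $o(N^{-1/2})$ term, so I will not try to absorb the remainder there. Instead I will carry the remainder, $o(N^{-1/2})\|v_N^\tau\|_\mu$, over to the sampling part. By Assumption~\ref{assum:norm ratio}, $\|v_N^\tau\|_\mu\asymp\|v_N^\tau\|_{sd,\mu}$, and I need this to be $O(\sigma_N)$; the next step shows why.

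\textbf{Step 2 (sampling term and variance estimator).} Theorem~\ref{theo:3.1 v3 0} gives
\[
\sqrt N\{\tau(\widehat\alpha_N^\mu)-\tau(\alpha_N^\mu)\}/\sigma_N\to_d\mathcal N(0,1).
\]
Lemma~\ref{lem:sieve variance estimator}(ii) gives $\|\widehat v_N^\tau\|_{sd,\mu,N}/\|v_N^\tau\|_{sd,\mu}\to_p1$. To pass to $\widehat\sigma_N/\sigma_N\to_p1$ I must control the penalty pieces $\mu_N\|\theta^\tau\|_e^2$ and $\mu_N\|\widehat\theta^\tau\|_e^2$.

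The key observation is that the sampling part may be enlarged without loss. The statement only asks for conservativeness, and both the variance inflation and the bias allowance only widen the interval. It therefore suffices that, with probability approaching one,
\[
\widehat c\ge z N^{-1/2}\sigma_N(1-o_p(1))+\|v_N^\tau\|_\mu(\sqrt{\mu_N}+C\epsilon_N^*\xi_N)+o(N^{-1/2})\|v_N^\tau\|_\mu .
\]
Two facts deliver this. First, $\epsilon_N^*\xi_N=o(N^{-1/2})$, so the $C\epsilon_N^*\xi_N$ slack is negligible. Second, the remaining remainders are $o(N^{-1/2})\|v_N^\tau\|_{sd,\mu}$.

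\textbf{Step 3 (conclusion).} By Slutsky, $\bigl(\tau(\alpha_N^0)-\tau(\widehat\alpha_N^\mu)\bigr)\sqrt N/\sigma_N$ equals a standard normal limit plus a shift. The shift is bounded in absolute value by the bias allowance scaled by $\sqrt N/\sigma_N$. An interval symmetric around the estimate, with half-length equal to $z\sigma_N/\sqrt N$ plus an upper bound on the absolute shift, has asymptotic coverage at least $\Phi(z)-\Phi(-z)=1-q$. The standard step here is $P(|Z+b|\le z+|b|)\ge P(|Z|\le z)$.

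\textbf{Main obstacle.} The delicate point is replacing $\sigma_N$ by $\widehat\sigma_N$ when $\mu_N\|\theta^\tau\|_e^2$ is not negligible relative to $\sigma_N^2$. Lemma~\ref{lem:sieve variance estimator}(ii) controls only the sum $\widehat\sigma_N^2+\mu_N\|\widehat\theta^\tau\|_e^2$. I would first try to bound $\mu_N\|\widehat\theta^\tau\|_e^2-\mu_N\|\theta^\tau\|_e^2$ directly using Lemma~\ref{lem:sieve variance estimator}(i), since $\mu_N\|\theta_v\|_e^2\le\|v\|_\mu^2$. This gives
\[
\sqrt{\mu_N}\,\bigl\|\widehat\theta^\tau-\theta^\tau\bigr\|_e\le\|\widehat v_N^\tau-v_N^\tau\|_\mu=O_p(\epsilon_N^*)\|v_N^\tau\|_\mu .
\]
It yields $|\widehat\sigma_N^2-\sigma_N^2|=O_p(\epsilon_N^*)\|v_N^\tau\|_{sd,\mu}^2$. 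Provided $\sigma_N$ is not of smaller order than $\|v_N^\tau\|_{sd,\mu}$ (a nondegeneracy I would impose or read off from Assumption~\ref{assum:norm ratio}), this gives $\widehat\sigma_N/\sigma_N\to_p1$. The same nondegeneracy is needed to absorb the $o(N^{-1/2})\|v_N^\tau\|_\mu$ remainder in Step 1.
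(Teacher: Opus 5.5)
Your route is essentially the paper's. The paper also takes the coverage of the infeasible interval in Corollary~\ref{theo:3.1 v3} as given. That corollary is a hypothesis of the statement, so you need not absorb the $o(N^{-1/2})\|v_N^\tau\|_\mu$ remainder yourself. The paper then reduces the claim to showing that the feasible half-length dominates $c_N(q)$ up to $o_p(N^{-1/2}\sigma_N)$, and it controls $\widehat\sigma_N$ exactly as you do, through Lemma~\ref{lem:sieve variance estimator}(ii) and $\mu_N\bigl|\|\widehat\theta^\tau\|_e^2-\|\theta^\tau\|_e^2\bigr|\lesssim\|\widehat v_N^\tau-v_N^\tau\|_\mu(\|\widehat v_N^\tau\|_\mu+\|v_N^\tau\|_\mu)$.

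One step is missing from your Step~2, and it is the only place the factor $(1+N^{-1/2}\xi_N^{-1})$ is used. That factor does not absorb the $o(N^{-1/2})$ remainder; it absorbs the random ratio in the leading bias term. You must show that $\sqrt{\mu_N}\|v_N^\tau\|_\mu\le\sqrt{\mu_N}(1+N^{-1/2}\xi_N^{-1})\|\widehat v_N^\tau\|_\mu$ with probability approaching one. The discrepancy $\sqrt{\mu_N}\,O_p(\epsilon_N^*)\|v_N^\tau\|_\mu$ cannot be pushed into the sampling term, because $\sqrt{\mu_N}\epsilon_N^*=o(N^{-1/2})$ is not assumed. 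The paper closes this gap as follows. Since $\|v_N^\tau\|_\mu=(1+O_p(\epsilon_N^*))\|\widehat v_N^\tau\|_\mu$, and $\epsilon_N^*\xi_N=o(N^{-1/2})$ means $\epsilon_N^*=o(N^{-1/2}\xi_N^{-1})$, it follows that $1+O_p(\epsilon_N^*)\le 1+N^{-1/2}\xi_N^{-1}$ with probability approaching one. Neither of your two listed ``facts'' covers this.

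Second, the nondegeneracy $\sigma_N\gtrsim\|v_N^\tau\|_{sd,\mu}$ cannot be read off from Assumption~\ref{assum:norm ratio}. Both $\|\cdot\|_\mu$ and $\|\cdot\|_{sd,\mu}$ contain the same deterministic piece $\mu_N\|\theta_v\|_e^2$. The assumption is therefore compatible with $\sigma_N^2\ll\mu_N\|\theta^\tau\|_e^2$ whenever $v_N^\tau$ loads on directions where both the unpenalized curvature and the score variance are negligible relative to $\mu_N$. This is precisely the lower-spectral-band case that Assumption~\ref{assume:regular functional II} excludes in Lemma~\ref{lem:bound1}, which is where the paper derives a bound of this kind. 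So you must impose it as an extra hypothesis.

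This is not a weakness relative to the paper. Two of its steps are negligible relative to $N^{-1/2}\sigma_N$ only under such a condition: its bound $\widehat\sigma_N-\sigma_N=O_p(\sqrt{\epsilon_N^*}\|v_N^\tau\|_\mu)$, and its replacement of the unspecified constant in the $O(\epsilon_N^*\xi_N)$ of Lemma~\ref{lem:bias bound} by $1$. The paper's domination display, which allows slack $o_p(1)$, should read $o_p(N^{-1/2}\sigma_N)$. Once it does, the tacit use of this condition becomes visible, and your proposal makes it explicit.
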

\begin{proof}[Proof of Corollary \ref{lem:append ci 1}]\label{Proof of Lem {lem:append ci 1}}
To conclude, Corollary \ref{theo:3.1 v3} indicates that  we only need to show
\begin{align}
&  N^{-1/2}\Phi^{-1}(1-q/2)\widehat\sigma_N+(\sqrt{\mu_N} +\epsilon_N^*\xi_N)(1+N^{-1/2}\xi_N^{-1})\|\widehat v_N^\tau\|_\mu \nonumber \\ & \quad\quad \geq  N^{-1/2}\Phi^{-1}(1-q/2)\sigma_N+(\|\alpha_N^0-\alpha_N^\mu\|_\mu) \|v_N^\tau\|_\mu  +o_p(1). \label{eq:coro b3 1}	
\end{align}
Note that Lemma \ref{lem:sieve variance estimator}.(i) implies that 
\begin{align}
\left\|\widehat{v}_N^\tau-v_N^\tau\right\|_\mu=O_p\left(\epsilon_N^*\left\|v_N^\tau\right\|_\mu\right), \left\|\widehat{v}_N^\tau\right\|_\mu=O_p\left(\left\|v_N^\tau\right\|_\mu\right). \label{eq: coro b3 0}
\end{align}
Let $v_N^\tau=(\delta^\tau,\theta^{\tau'}\mathbf{1}_{S_N}(\cdot))$, then
\begin{align}
\mu_N\left|\left\|\hat{\theta}^\tau\right\|_e^2-\left\|\theta^\tau\right\|_e^2\right| & \leq \sqrt{\mu_N}\left\|\hat{\theta}^\tau-\theta^\tau\right\|_e \sqrt{\mu_N}\left(\left\|\hat{\theta}^\tau\right\|_e+\left\|\theta^\tau\right\|_e\right)\nonumber \\
& \lesssim\left\|\widehat{v}_N^\tau-v_N^\tau\right\|_\mu\left(\left\|\widehat{v}_N^\tau\right\|_\mu+\left\|v_N^\tau\right\|_\mu\right)=O_p\left(\epsilon_N^*\left\|v_N^\tau\right\|_\mu^2\right), \nonumber
\end{align}
where the second inequality is due to Assumption \ref{assum:norm ratio} and the last equality is due to (\ref{eq: coro b3 0}). Hence, we have that 
\begin{align}
\widehat{\sigma}_N^2-\sigma_N^2=\|\widehat{v}_N^\tau \|_{s d, \mu, N}^2-\left\|v_N^\tau\right\|_{s d, \mu}^2-\mu_N\left(\left\|\hat{\theta}_\tau\right\|_e^2-\left\|\theta_\tau\right\|_e^2\right) 	=O_p\left(\epsilon_N^*\left\|v_N^\tau\right\|_\mu^2\right). \label{eq: coro b3 3}
\end{align}
Equation (\ref{eq: coro b3 3}) combined with the fact that $|\sqrt{a}-\sqrt{b}| \leq \sqrt{|a-b|}$ leads to 
\begin{align}
\widehat{\sigma}_N-\sigma_N=O_p\left(\sqrt{\epsilon_N^*}\left\|v_N^\tau\right\|_\mu\right). \label{eq: coro b3 4}
\end{align}
In addition, (\ref{eq: coro b3 0}) also implies that 
\begin{align}
\left\|{v}_N^\tau\right\|_\mu \leq  \|\widehat v_N^\tau\|_\mu + \left\|\widehat{v}_N^\tau-v_N^\tau\right\|_\mu = (1+O_p\left(\epsilon_N^*\right))\left\|\widehat{v}_N^\tau\right\|_\mu. \label{eq: coro b3 5}
\end{align}
Given $\epsilon_N^*\xi_N=o(N^{-1/2})$, we know $\epsilon_N^*=o(N^{-1/2}\xi_N^{-1})$, and hence Lemma \ref{lem:bias bound} and (\ref{eq: coro b3 4})-(\ref{eq: coro b3 5}) lead to (\ref{eq:coro b3 1}). The interval in the main context further replaces the term $\epsilon_N^*\xi_N$ with $N^{-1/2}$.
\end{proof}

\begin{proof}[Proof of Lemma \ref{lem:bound1}]\label{Proof of Lemma ref{lem:bound1}}
Assumption \ref{assum:average_functional} implies that $|\tau(\alpha_N^0)-\tau(\alpha_N^\mu)| \leq  \|\alpha_N^0-\alpha_N^\mu\|_\mu\|v_N^\tau\|_\mu   + \|v_N^\tau\|_\mu o(N^{-1/2})$, and Lemma \ref{lem:bias bound}, Assumption \ref{assum:regular functional} and $\epsilon_N^*\xi_N=o(N^{-1/2})$ then further indicate that 
\begin{align*}
|\tau(\alpha_N^0)-\tau(\alpha_N^\mu)| & \leq  (\sqrt{\mu_N}+O(\epsilon_N^*\xi_N))\|v_N^\tau\|_\mu+ \|v_N^\tau\|_\mu o(N^{-1/2})\\
& = \sqrt{\mu_N}\|v_N^\tau\|_\mu   + o(N^{-1/2}).
\end{align*}
Recall that $\gamma_N^\mu =((\delta_N^{\mu})',(\theta_N^{\mu})')'=(H_N^\mu)^{-1}D_N^\mu$ and $\left\|v_N^\tau\right\|_\mu^2=\left(\gamma_N^\tau\right)^{\prime} H_N^\mu \gamma_N^\tau. $ Once we show  $2\sigma_N^2 \geq \|v_N^\tau\|_{\mu}^2-O(N^{-1})$  provided that $\|v_N^\tau\|_{sd,\mu}^2=\|v_N^\tau\|_{\mu}^2-O(N^{-1})$ and $\mu_N=o(1)$ we can then conclude. 
					
Let $f_{N, i}=h_{N, i}^{\prime} D_N^\mu, i=1,\cdots, d_\delta+|S_N|$, and then we have $\gamma_N^\tau=\sum_{i=1}^{d_\delta+|S_N|} \frac{f_{N, i}}{w_{N, i}} h_{N, i}$ and
\begin{align}
\left\|v_N^\tau\right\|_\mu^2=\sum_{i=1}^{d_\delta+|S_N|} \frac{f_{N, i}^2}{w_{N, i}}. \label{eq: lem4 1}	
\end{align}
Let $W_N=\text{diag}(0_{d_\delta}, I_{|S_N|}),$ and thus 
$$\mu_N \|\theta_N^{\mu}\|_e^2= \mu_N   (\gamma_N^{\mu})' W_N \gamma_N^\mu\leq \mu_N    {(\gamma_N^{\mu})'}\gamma_N^\mu= \mu_N  {\sum_{i=1}^{d_\delta+|S_N|}} \frac{f_{N, i}^2}{w_{N, i}^2}=A_1+A_2,$$ where $A_1= \mu_N \sum_{i\in \mathcal{A}_N} \frac{f_{N, i}^2}{w_{N, i}^2}, A_2=\mu_N \sum_{i\notin \mathcal{A}_N} \frac{f_{N, i}^2}{w_{N, i}^2}, \mathcal{A}_N=\left\{i: |w_{N, i}-\mu_N|<(1+\eta) \mu_N\right\}.$ Using $w_{N, i} \geq \mu_N$ and Assumption \ref{assume:regular functional II},
\begin{align}
A_1= \mu_N \sum_{i \in \mathcal{A}_N} \frac{f_{N, i}^2}{w_{N, i}^2} & \leq \frac{1}{\mu_N} \sum_{i \in \mathcal{A}_N} f_{N, i}^2  =O(N^{-1}). \label{eq: lem4 A1}
\end{align}
For $i \notin \mathcal{A}_N$, we know that $w_{N, i} \geq(2+\eta) \mu_N$ and thus
\begin{align}
A_2=\mu_N \sum_{i \notin  {\mathcal{A}_N}} \frac{f_{N, i}^2}{w_{N, i}^2} & \leq \frac{1}{2+\eta} \sum_{i \notin {\mathcal{A}_N}} \frac{f_{N, i}^2}{w_{N, i}} \leq \frac{1}{2+\eta}\left\|v_N^\tau\right\|_\mu^2. \label{eq: lem4 A2}
\end{align} 
(\ref{eq: lem4 1}) - (\ref{eq: lem4 A2}) thus imply that $\mu_N \|\theta_N^{\mu}\|_e^2 \leq \frac{1}{2+\eta}\left\|v_N^\tau\right\|_\mu^2 +O(N^{-1})$, and this combined with $\|v_N^\tau\|_{sd,\mu}^2=\|v_N^\tau\|_{\mu}^2-O(N^{-1})$ leads to   $2\sigma_N^2 \geq \|v_N^\tau\|_{\mu}^2-O(N^{-1})$.
\end{proof}
				
\begin{proof}[Proof of Corollary \ref{coro}]
\label{proof of Corollary {coro}}
Theorems \ref{theo:3.1 v3} and \ref{theo:bootstrap} imply that the conditional distribution of  $\sqrt{N}(\tau({\widehat{\alpha}_N^{\mu,B}})-\tau({\widehat{\alpha}_N^\mu}))$ is approximated by $\sigma_N N(0,1)$, and to conclude we only need to discuss the bound for the regularization bias term, which is indicated by Lemma \ref{lem:bound1}.
\end{proof}

\subsection{Intermediate results}

In this section, we verify some results for models with a more specific structure, though the conclusions may hold under more relaxed conditions. 
			
\subsubsection{Verification of Assumption \ref{assum:ri}}\label{appen:verification of assum 1}

We use the model analyzed in \cite{heiss2022} and justify Assumption \ref{assum:ri} in such a setting. We start from a small modification of their Equation (11) to allow for a combination of fixed and random coefficients. We consider the following linear approximate probability model
\begin{equation}\label{eq:linear_prob 2}
y_{i,j} = \sum_{r=1}^{R_N} X_{i,j}(\beta_r, \delta^0)\theta^0_r+\epsilon_{i,j},
\end{equation}
where $X_{i,j}(\beta_r, \delta^0)$ denotes the conditional choice probability which depends on $(x_i, \beta_r, \delta^0)$, $\epsilon_{i,j}$ denotes the error of the probability model, and $\alpha^0_N =(\delta^0, \mathbf{1}_{N}(\cdot)'\theta^0)$. In addition, let $\widehat{\alpha}_N^\mu = (\widehat\delta^\mu_N, \mathbf{1}_{N}(\cdot)'\widehat\theta_N^\mu)$. To match their leave-one-grid-out approach, we assume there is an additional $R_N+1$ grid point that is left out, and the non-negative constraint for the corresponding weight is strongly active, which is a reasonable assumption that can be easily justified by the possibility of excluding a point that is located far outside the presumed true support. Let $X(\delta)$ refer to the $NJ\times R_N$ regressor matrix in Equation (\ref{eq:linear_prob 2}) evaluated at $\delta$ and each corresponding $\beta_r$ in $\mathcal{B}_N$, and $X_i(\delta)$ to the corresponding $J\times R_N$ regressor matrix for observation unit $i$. To keep the notation simple, we omit the dependence on $\beta_r$'s (or $\mathcal{B}_N$) in the following. Moreover, let $X_{A}(\delta)$ be the regressor submatrix corresponding to the grids in the index set $S$, and we write $S=S_N, S^C=S_N^C$. For a given penalty $\mu_N \geq 0$, define
$$
A_S(\mu_N, \delta)\equiv \frac{1}{N J} X_S(\delta)^{\prime} X_S(\delta)+\mu_N I_s, \quad \xi_S\left(\mu_N, \delta\right)\equiv\lambda_{\min}\left\{A_S(\mu_N, \delta)\right\},
$$
where 
$I_s$ is the identity matrix and $\lambda_{\min}(A)$ denotes the minimum eigenvalue of $A$. Define 
$$
q_S(\mu_N, \delta)=A_S(\mu_N, \delta)^{-1}\left\{\frac{1}{N J} X_S(\delta)^{\prime} X_S(\delta) \theta_S^{0}-\nu_N \iota_S\right\}, \quad \rho_N\left(\mu_N, \delta\right)=\min _{r \in S}\left|q_{S, r}(\mu_N, \delta)\right|,
$$
where $\iota_S$ is a vector of $|S|$ ones, $\nu_N>0$ denotes the Lagrange multiplier for the simplex constraint in the constrained optimization, and $q_{S, r}$ refers to the $r$th entry of $q_{S, r}$. In addition, let the $|S|\times 1$ vector $\theta_S^{0}$ stack $\theta_r^{0}$ for $r\in S$, the $NJ\times 1$ vector $\epsilon$ stack $\epsilon_{i,j}$,  $U_N(\delta)=A_S(\mu_N, \delta)^{-1} \frac{1}{N J} X_S(\delta)^{\prime} \epsilon$, and 
$$V_N(\delta)=\frac{1}{N J} X_{S^c}(\delta)^{\prime} X_S(\delta) A_S(\mu_N, \delta)^{-1}\left\{\nu_N \iota_S+\mu_N \theta_S^{0}-\frac{1}{N J} X_S(\delta)^{\prime} \epsilon \right\}+\frac{1}{N J} X_{S^c}(\delta)^{\prime} \epsilon.
$$
						
\begin{assumption}\label{assum:verification of assum 1} We make the following assumptions for the model specified in Equation (\ref{eq:linear_prob 2}). 
\begin{enumerate}
\item[(i)]  (a) $\left(\epsilon_i=\left(\epsilon_{i, 1}, \ldots, \epsilon_{i, J}\right)\right)_{i=1}^N$ are independent.
(b) $\epsilon_{i, j}$ is sub-Gaussian: $\mathbb{E}\left[\exp \left(t \epsilon_{i, j}\right)\right] \leq \exp \left(\frac{\sigma^2 t^2}{2}\right) \quad(\forall t \in \mathbb{R})$ for $\sigma>0$.
(c) $\left(X_i(\delta^0)\right)_{i=1}^N$ are i.i.d. with a density bounded from above, and each entry of $X_i(\delta^0)$'s is within the range $[-1,1]$.
(d) $\mathbb{E}\left[\epsilon_i \mid x_1, \cdots, x_N\right]=0$.
\item[(ii)] There exist nonnegative deterministic sequences $\rho_{N,1}, \rho_{N,2}, \rho_{N,3}, \rho_N^*$ such that wpa1, $\|U_N(\widehat{\delta}^\mu_N)-U_N(\delta^{0})\|_{\infty} \leq \rho_{N,1},  \rho_N(\mu_N, \widehat{\delta}_N) \geq \rho_N(\mu_N, \delta^0)-\rho_{N,2},  \|V_N(\widehat{\delta}_N)-V_N(\delta_N^{\star})\|_{\infty} \leq \rho_{N,3}$, $\rho_N(\mu_N, \delta^0)\geq \rho_N^*$.
\item[(iv)] The following rates hold such that: (a) for $\nu_N>0$, there exists a positive constant $\eta>0$ (independent of $N$) such that 
$$
\max _{r \in S^C} e_r' \left[\frac{1}{N J} (X_{S^C}(\delta^0))'  X_S(\delta^0)(\frac{1}{N J} (X_S(\delta^0))'  X_S(\delta^0)+\mu I_S)^{-1}\left(\iota_S+\frac{\mu_N}{\nu_N} \theta_S^0\right)\right] \leq 1-\eta
$$
where $e_r$ denotes the standard basis unit vector with $r$th entry equal to one; (b) for some $c_\rho\geq 1,$ the followings hold such that, $\rho_N^* > c_\rho \rho_{N,1}+ \rho_{N,2},$ $\rho_{N,3}<\eta \nu_N$, 
$$
\lim _{N \rightarrow \infty} 2 |S| J \exp \left(-\frac{N \xi_S(\mu_N, \delta^0)^2 (\rho_N^*-  c_\rho \rho_{N,1}- \rho_{N,2})^2}{2 |S| c_\rho^2}\right)=0,
$$
and
$$\lim _{N \rightarrow \infty} 2 R_N J \exp \left(-N \left((\eta \nu_N-\rho_{N,3})\frac{\xi_S(\mu_N,\delta^0)}{|S| \sqrt{|S|}+\xi_S(\mu_N,\delta^0)}\right)^2 / 2\right)=0.
$$ 
\end{enumerate}
\end{assumption}
~\\
When there is no fixed coefficients or fixed coeeficients are known, $\rho_{N,1}, \rho_{N,2}, \rho_{N,3}$ can be set to be zeros, then in case where $c_\rho=1$ and $\rho_N^*=\rho_N(\mu_N, \delta^0)$, Assumption \ref{assum:verification of assum 1} exactly conincides with Assumption 1, rate conditions NEIC and RCDG considered in  \cite{heiss2022}.
\begin{lemma}\label{lem:verification of Assumption ri}
Suppose that Assumption \ref{assum:verification of assum 1} holds with $c_\rho>1$, for every $v=(\delta_v,\theta_v'\mathbf{1}_{S_N}(\cdot)) \in\calN_N^\mu$, for all $\varepsilon\leq (1-c_\rho^{-1})( \rho_N^{\star}-\rho_{N,2}-c_\rho \rho_{N,1})/\|\theta_v\|_\infty $,
	$\widehat{\alpha}_N^\mu\pm|\varepsilon|v\in\calN_N^\mu\text{, wpa1.}$
\end{lemma}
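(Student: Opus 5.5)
The plan is a primal--dual witness argument in the spirit of the proof of Theorem~1 in \citet{heiss2022}, carried out at the estimated fixed coefficients $\widehat\delta^\mu_N$ rather than at $\delta^0$. First I show that, with probability approaching one, $\widehat\alpha^\mu_N$ has support exactly $S=S_N$ and that its smallest active weight is bounded away from zero by an explicit margin. Once that holds, perturbing $\widehat\theta^\mu_N$ along $\theta_v$ stays in the relative interior of the face $\calA_N(S_N)$ of the simplex. It then remains to check the criterion-gap requirement defining $\calN_N^\mu$ in \eqref{eq:xi_N neighborhood}.

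\emph{Step 1 (restricted oracle and lower bound on active weights).} Fix $\delta=\widehat\delta^\mu_N$ and solve the penalized problem restricted to $\calA_N(S)$ with multiplier $\nu_N$ for $\iota_S'\theta=1$. The KKT conditions give the restricted minimizer in closed form,
\[
\widetilde\theta_S=q_S(\mu_N,\widehat\delta^\mu_N)+U_N(\widehat\delta^\mu_N).
\]
The rate condition involving $\xi_S(\mu_N,\delta^0)$ and the exponent $N\xi_S^2(\rho_N^*-c_\rho\rho_{N,1}-\rho_{N,2})^2/(2|S|c_\rho^2)$ is used as follows. Apply a sub-Gaussian union bound over the $|S|J$ coordinates of $X_S(\delta^0)'\epsilon/(NJ)$, using Assumption~\ref{assum:verification of assum 1}(i) and $\|A_S^{-1}\|\le\xi_S^{-1}$. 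This yields, with probability approaching one,
\[
\|U_N(\delta^0)\|_\infty\le(\rho_N^*-c_\rho\rho_{N,1}-\rho_{N,2})/c_\rho.
\]
Combine this with part (ii): $\|U_N(\widehat\delta^\mu_N)-U_N(\delta^0)\|_\infty\le\rho_{N,1}$ and $\rho_N(\mu_N,\widehat\delta^\mu_N)\ge\rho_N^*-\rho_{N,2}$. Then
\[
\min_{r\in S}\widetilde\theta_{S,r}\ \ge\ \rho_N^*-\rho_{N,2}-\rho_{N,1}-\frac{\rho_N^*-c_\rho\rho_{N,1}-\rho_{N,2}}{c_\rho}
\ \ge\ (1-c_\rho^{-1})(\rho_N^*-\rho_{N,2}-c_\rho\rho_{N,1})>0 .
\]
The last inequality uses $c_\rho>1$ and $\rho_N^*>c_\rho\rho_{N,1}+\rho_{N,2}$.

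\emph{Step 2 (strict dual feasibility off $S$).} For $r\in S^c$, the gradient of the restricted solution in coordinate $r$, rescaled by $\nu_N$, equals the irrepresentable term of part (iv)(a) plus $V_N(\widehat\delta^\mu_N)/\nu_N$-type noise. Part (iv)(a) bounds the deterministic part by $1-\eta$. Part (ii) gives $\|V_N(\widehat\delta^\mu_N)-V_N(\delta^0)\|_\infty\le\rho_{N,3}<\eta\nu_N$. The second exponential rate condition, again via a sub-Gaussian union bound over $R_NJ$ terms, controls the stochastic part of $V_N(\delta^0)$. Together these give strict complementary slackness on $S^c$ with probability approaching one. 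The leave-one-grid-out point handles the normalization of $\nu_N$.

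By convexity in $\theta$ for fixed $\delta$, $\widetilde\theta$ (padded with zeros) is then the penalized minimizer, so $\widehat S_N=S$. Running the same argument at the population level, with $\epsilon$ replaced by its conditional mean zero, identifies $S_N=S$ for $\alpha_N^\mu$, which is consistent with Assumption~\ref{assum:ri}(i).

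\emph{Step 3 (feasibility of the perturbation).} Take $v$ with $\theta_v$ supported on $S_N$ and $\iota'\theta_v=0$; these are the direction components of elements of $\calN_N^\mu-\{\alpha_N^\mu\}$. For $|\varepsilon|$ below the stated threshold,
\[
\widehat\theta_{r}\pm\varepsilon\theta_{v,r}\ \ge\ (1-c_\rho^{-1})(\rho_N^*-\rho_{N,2}-c_\rho\rho_{N,1})-|\varepsilon|\|\theta_v\|_\infty\ \ge\ 0,
\]
so the perturbed weights stay on the face and sum to one. The $\delta$-component stays in $\calC$ for small $\varepsilon$ if $\delta$ is interior.

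Finally, for the gap condition, Lemma~\ref{lem: consistency} gives $\|\widehat\alpha^\mu_N-\alpha^\mu_N\|=O_p(\xi_N)$. A local quadratic bound on $Q_N^\mu$ near its minimizer (the curvature behind $\|\cdot\|_\mu$) then yields
\[
Q_N^\mu(\widehat\alpha^\mu_N\pm\varepsilon v)-Q_N^\mu(\alpha^\mu_N)=O_p(\xi_N^2+\varepsilon^2)\le\xi_N\log N
\]
with probability approaching one, once $\varepsilon$ is of smaller order. I expect the main obstacle to be Step~2 with estimated $\delta$. The irrepresentable and concentration bounds are stated at $\delta^0$, while the KKT system is evaluated at $\widehat\delta^\mu_N$, and the multiplier $\nu_N$ itself depends on the data. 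I would first handle this by conditioning on the event in part (ii) and treating $\nu_N$ as fixed through the leave-one-out normalization.
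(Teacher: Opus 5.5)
Your argument is essentially the paper's. It establishes the primal--dual witness events $\{\|U_N(\widehat\delta^\mu_N)\|_\infty\le c_\rho^{-1}\rho_N(\mu_N,\widehat\delta^\mu_N)\}$ and $\{\max_{r\in S^c}V_{N,r}(\widehat\delta^\mu_N)\le\nu_N\}$ of \citet{heiss2022} by transferring to $\delta^0$ through part (ii) and applying the two exponential rate conditions, and then reads the $\varepsilon$-threshold off the lower bound $\rho_N(\mu_N,\widehat\delta^\mu_N)-\|U_N(\widehat\delta^\mu_N)\|_\infty$ on the active weights (the paper argues through the complementary events, which is equivalent). Your extra check of the criterion gap $Q_N^\mu(\widehat\alpha_N^\mu\pm\varepsilon v)-Q_N^\mu(\alpha_N^\mu)\le\xi_N\log N$ is not addressed in the paper's proof; as you note, it only goes through for $\varepsilon$ of smaller order, which suffices for Assumption~\ref{assum:ri}(ii) but is weaker than the lemma's literal ``for all $\varepsilon$ up to the threshold''.
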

\begin{proof}[Proof of Lemma \ref{lem:verification of Assumption ri}]
	We first derive a similar sign consistency result as in Theorem 1 from \cite{heiss2022}, the proof of which implies that we only need to show 
	\begin{align}\label{eq:lemma verification 1}
	\lim\limits_{N}	P(M_{\widehat\delta^\mu_N}(U)  \cap M_{\widehat\delta^\mu_N}(V)) =1,
	\end{align}
where $M_{\delta}(U)=\{\|U_N(\delta)\|_{\infty} \leq c_\rho^{-1} \rho_N(\mu_N,\delta)\}$ and $M_\delta(V)=\left\{\max _{r \in S^c} V_{N, r}(\delta) \leq \nu_N\right\}.$

Assumptions \ref{assum:verification of assum 1}.(i) and (iv).(a) imply
$
E\left[V_N\left(\delta^{0}\right) \mid X_1, \ldots, X_N\right] \leq\left(1-\eta_N\right) \nu_N \iota_{S^c},
$
and if $M_{\hat{\delta}_N}^c(V)$ occurs, then
$\max _{r \in S^c} V_{N, r}\left(\delta_N^{\star}\right)>\nu_N-\rho_{N,3}.$  The proof of Lemma 4 from the appendix of \cite{heiss2022} gives
\begin{align}
P((M_{\widehat\delta^\mu_N}(V))^c)	& \leq  {P}\left(\max _{r \in S^c}\left\{V_{N, r}\left(\delta_N^{\star}\right)-{E}\left[V_{N, r}\left(\delta_N^{\star}\right) \mid X_1, \ldots, X_N\right]\right\}>\eta \nu_N-\rho_{N,3}\right) \nonumber \\
	&  \leq  2 R_N J \exp \left(-N \left((\eta \nu_N-\rho_{N,3})\frac{\xi_S(\mu_N,\delta^0)}{|S| \sqrt{|S|}+\xi_S(\mu_N,\delta^0)}\right)^2 / 2\right),\label{eq:lemma verification 1-1}
\end{align}
which goes to zero under Assumption \ref{assum:verification of assum 1}.(iv).(b). If $(M_{\widehat{\delta}_N^\mu}(U))^c$ occurs, then
$
\|U_N(\widehat{\delta}_N^\mu)\|_{\infty} \geq c_\rho^{-1}\rho_N(\mu_N, \widehat{\delta}_N^\mu) \geq c_\rho^{-1}( \rho_N^{\star}-\rho_{N,2}),
$
and thus via Assumption \ref{assum:verification of assum 1}.(ii).
$$
\|U_N(\delta^{0})\|_{\infty} \geq\|U_N(\widehat{\delta}_N^\mu)\|_{\infty}-\|U_N(\widehat{\delta}_N^\mu)-U_N(\delta^{0})\|_{\infty} \geq c_\rho^{-1}( \rho_N^{\star}-\rho_{N,2}-c_\rho \rho_{N,1}).
$$
Thus, the proof of Lemma 5 from the appendix of \cite{heiss2022} gives
\begin{align}
	P((M_{\widehat\delta^\mu_N}(U))^c)	& \leq  {P}\left(\|U_N(\delta_N^{\star})\|_{\infty} \geq c_\rho^{-1}( \rho_N^{\star}-\rho_{N,2}-c_\rho \rho_{N,1})\right) \nonumber \\
	&  \leq 2 |S| J \exp \left(-\frac{N \xi_S(\mu_N, \delta^0)^2 (\rho_N^*(\mu_N)-  c_\rho \rho_{N,1}- \rho_{N,2})^2}{2 |S| c_\rho^2}\right),\label{eq:lemma verification 1-2}
\end{align} 
which goes to zero under Assumption \ref{assum:verification of assum 1}.(iv).(b). Equations (\ref{eq:lemma verification 1-1}) and (\ref{eq:lemma verification 1-2}) together indicates the sign consistency that $\widehat{\alpha}_N^\mu$ and $\alpha^0_N$ agree on which grid weights are zero wpa1.

The remaining result directly follows from Equation (\ref{eq:lemma verification 1-2}) once note that the event $M_{\widehat\delta^\mu_N}(U)$ indicates (see the proof of Lemma 3 in the appendix of \cite{heiss2022}) that $\|\widehat\theta_N^\mu\|_\infty > \rho_N(\mu_N, \widehat{\delta}_N^\mu) - \|U_N(\widehat{\delta}_N^\mu)\|_{\infty} \geq (1-c_\rho^{-1})( \rho_N^{\star}-\rho_{N,2}-c_\rho \rho_{N,1}).$

\end{proof} 
Lemma \ref{lem:verification of Assumption ri} verifies the statements in Assumption \ref{assum:ri}. It gives one set of sufficient conditions under which \(\widehat\alpha_N^\mu\) and \(\alpha_N^0\) agree on which grid weights are zero, which in turn verifies Assumption \ref{assum:ri}.(ii). The sign agreement between \(\alpha_N^\mu\) and \(\alpha_N^0\) can be verified by an analogous population argument, so we omit the details. In addition, the proof of Lemma 
\ref{lem:verification of Assumption ri} shows that too many grid points lead to a relatively small value of $
\rho_N(v)$, while the smoothness assumption in Assumption 
\ref{assum:Local Behavior of Criterion v2} also relates to $
\rho_N(v)$ and requires stronger smoothness conditions as $
\rho_N(v)$ decreases. Together, these assumptions implicitly restrict the number of grid points from increasing too quickly with respect to $N$, while our simulations show that in finite samples, our results hold with relatively large $R_N$.

\subsubsection{Intermediate results related to Lemma \ref{lem: consistency}}

\begin{assumption}\label{assum:append a specific model specification}
The following conditions hold:

\begin{enumerate}
\item[(i)] For $\alpha \in \mathcal{A}_N$, $g(x,\delta_{\alpha},\beta_r) \in[0,1]^J$ and ${P}(x_i,{\alpha}) =P_{\gamma}(x,\gamma_{\alpha})\equiv G(x,\delta_{\alpha})^\top \theta_{\alpha}\equiv \sum_{r=1}^{R_N} g(x,\delta_{\alpha},\beta_r)\theta_r.$ 
\item[(ii)] For almost all $z=(y,x)\in\mathcal{Z}$ with binary $y$, $g(x,\delta_{\alpha},\beta_r)$ and ${Q_{N,\gamma}^{\mu}}(\cdot)$ are twice differentiable with respect to $\gamma=(\delta_{\alpha}', \theta_{{\alpha}}')'$. Moreover, there exists a constant $L_g<\infty$ such that, for all $\delta_1,\delta_2$, 
\[
\left|g(x,\delta_1,\beta_r)-g(x,\delta_2,\beta_r)\right|\le L_g\|\delta_1-\delta_2\|_e .
\]
\item[(iii)] The expectation operator can be interchanged with the first and second derivatives of ${\ell_{N,\gamma}^{\mu}}(\cdot)$ with respect to $\gamma$. That is,
\[
\frac{\partial}{\partial \gamma}\mathbb{E}\!\left[{\ell_{N,\gamma}^{\mu}}(z,\gamma)\right]=\mathbb{E}\!\left[\frac{\partial}{\partial \gamma}{\ell_{N,\gamma}^{\mu}}(z,\gamma)\right],
\]
and
\[
\frac{\partial^2}{\partial \gamma \partial \gamma^\top}\mathbb{E}\!\left[{\ell_{N,\gamma}^{\mu}}(z,\gamma)\right]=\mathbb{E}\!\left[\frac{\partial^2}{\partial \gamma \partial \gamma^\top}{\ell_{N,\gamma}^{\mu}}(z,\gamma)\right].
\]
\item[(iv)] Let $$\|\theta_{v}\|_{W_{\mu}}^2\equiv\theta_{v}^\top W_{\mu}\theta_{v}$$ with $ W_{\mu} =\operatorname{E}(G(x,\delta_{\alpha^{\mu}_N})\Sigma(x) G(x,\delta_{\alpha^{\mu}_N})^\top)$.  
For $v\in T_{\mathcal{A}_N}(\alpha^{\mu}_N)$, whose associated $h_{v} = (\delta_{v},\theta_{v}) \in T_{\mathcal{A}_N,h} ({\alpha_N^\mu}),$ satisfying $\|\delta_{v}\|_e+\|\theta_{v}\|_{W_{\mu}}+\mu_N\|\theta_{v}\|_e=1,$ there exists a constant $1\geq \overline{\rho}>0$ such that for all $t\in[0,\overline{\rho})$, the following expansion holds
\[
\begin{aligned}
&{Q_{N}^{\mu}}({\alpha_N^\mu}+tv)-{Q_{N}^{\mu}}({\alpha_N^\mu}) ={Q_{N,\gamma}^{\mu}}(\gamma_{{\alpha_N^\mu}}+th_{v})-{Q_{N,\gamma}^{\mu}}(\gamma_{{\alpha_N^\mu}})\\
&\quad \qquad \qquad = tD_+{Q_{N,\gamma}^{\mu}} (\gamma_{\alpha^{\mu}_N})[h_{v}] + \frac{1}{2}t^2 D_+^2{Q_{N,\gamma}^{\mu}} (\gamma_{\alpha^{\mu}_N}) [h_{v},h_{v}] + R_t[h_{v}],
\end{aligned}
\]
where  \[\sup_{\substack{{\alpha_N^\mu}+tv \in \mathcal{C} \times \Delta_{R_N} 	1
}}\frac{|R_t[h_{v}]|}{t^2} \rightarrow 0 \text{~~ as~~} t\downarrow 0.
\]
In addition, there exist constants $\overline{\kappa},\underline{\kappa}>0$ such that
\[
\begin{aligned}
& \inf_{\substack{t\geq \bar{\rho},\;\gamma_{{\alpha_N^\mu}}+th_{v}\in \mathcal{C} \times \Delta_{R_N}
}}
[{Q_{N,\gamma}^{\mu}}(\gamma_{{\alpha_N^\mu}}+th_{v})-{Q_{N,\gamma}^{\mu}}(\gamma_{{\alpha_N^\mu}})]\ge \overline{\kappa},
\end{aligned}
\]
and $D_+^2{Q_{N,\gamma}^{\mu}}(\gamma_{\alpha^{\mu}_N})[h_{v},h_{v}]\ge\underline{\kappa}.$ 
\end{enumerate}
\end{assumption}
Assumption \ref{assum:append a specific model specification} can be shown to hold upon the specification considered in \cite{heiss2022} and directly verified for the case where there are no fixed coefficients $\delta_{\alpha}$ in $\alpha$. Note that Assumption \ref{assum:append a specific model specification}. (iv) does not rule out the singular design matrices as $\|\cdot \|_{W_{\mu}}$ depends upon $W_{\mu}$, and we do not restrict the rank conditions upon $G(x,\delta_{\alpha_{\mu}^N})$.

\begin{lemma}\label{lem:lower loss function difference}
Under Assumption \ref{assum:append a specific model specification}, there exists $\underline{c},\overline{\varepsilon}>0$ such that if ${Q_{N}^{\mu}}({\alpha_N^\mu}+ v)-{Q_{N}^{\mu}}({\alpha_N^\mu})\leq \overline{\varepsilon}^2$, then $ {Q_{N}^{\mu}}({\alpha_N^\mu}+ v)-{Q_{N}^{\mu}}({\alpha_N^\mu}) \geq \underline{c} (\|\delta_{v}\|_e^2 +\|\theta_{v}\|_{W_{\mu}}^2)$. 
\end{lemma}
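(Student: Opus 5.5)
The plan is to argue through the normalized directions appearing in Assumption~\ref{assum:append a specific model specification}(iv), splitting by how far along a ray we travel. Fix $v$ with $\alpha_N^\mu+v\in\calA_N$. Write the associated $h_v=(\delta_v,\theta_v)\in T_{\calA_N,h}(\alpha_N^\mu)$. Set $t\equiv\|\delta_v\|_e+\|\theta_v\|_{W_\mu}+\mu_N\|\theta_v\|_e$ and $\bar h=h_v/t$, so that $\bar h$ has unit normalization and $v=t\bar v$.

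Since $\alpha_N^\mu$ minimizes $Q_N^\mu$ over the convex set $\calA_N$ and $\bar v$ is a feasible direction, we have $D_+Q_{N,\gamma}^\mu(\gamma_{\alpha_N^\mu})[\bar h]\ge 0$. The argument then splits into two cases.

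\emph{Far case.} If $t\ge\bar\rho$, Assumption~\ref{assum:append a specific model specification}(iv) gives $Q_N^\mu(\alpha_N^\mu+v)-Q_N^\mu(\alpha_N^\mu)\ge\bar\kappa$. Choosing $\bar\varepsilon^2<\bar\kappa$ makes the hypothesis of the lemma incompatible with this case, so it is vacuous.

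\emph{Near case.} If $t<\bar\rho$, use the second-order expansion along the ray:
\[
Q_N^\mu(\alpha_N^\mu+t\bar v)-Q_N^\mu(\alpha_N^\mu)\ge \tfrac12 t^2 D_+^2Q_{N,\gamma}^\mu(\gamma_{\alpha_N^\mu})[\bar h,\bar h]+R_t[\bar h]\ge \tfrac12\underline\kappa t^2-|R_t[\bar h]|.
\]
The remainder condition holds uniformly over feasible normalized directions. Hence there is $t_0\in(0,\bar\rho)$ with $|R_t|\le\underline\kappa t^2/4$ for $t<t_0$, which gives a lower bound of $\underline\kappa t^2/4$ on that range.

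The main obstacle is the intermediate range $t_0\le t<\bar\rho$, where neither the expansion nor the far-field bound applies directly. I would handle it by convexity along the ray. On the simplex, $Q_N^\mu$ is a convex quadratic in $\theta$ for fixed $\delta$, but $\delta$ enters nonlinearly, so convexity along the ray may fail. If it does, I would instead show that the function $\phi(s)=Q_{N,\gamma}^\mu(\gamma_{\alpha_N^\mu}+s\bar h)-Q_{N,\gamma}^\mu(\gamma_{\alpha_N^\mu})$ is continuous and positive on $[t_0,\bar\rho]$, uniformly over the compact set of normalized directions. Positivity follows from uniqueness of the penalized minimizer, which is assumed for $\mu_N>0$. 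By compactness, $\phi$ is then bounded below by some $\kappa_1>0$ there. Shrinking $\bar\varepsilon^2$ below $\min\{\bar\kappa,\kappa_1\}$ forces $t<t_0$ whenever the criterion gap is at most $\bar\varepsilon^2$.

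In the near case we then obtain
\[
Q_N^\mu(\alpha_N^\mu+v)-Q_N^\mu(\alpha_N^\mu)\ge\tfrac14\underline\kappa t^2\ge\tfrac14\underline\kappa\bigl(\|\delta_v\|_e+\|\theta_v\|_{W_\mu}\bigr)^2\ge\tfrac18\underline\kappa\bigl(\|\delta_v\|_e^2+\|\theta_v\|_{W_\mu}^2\bigr),
\]
where the last step uses $(a+b)^2\ge a^2+b^2$. This yields the claim with $\underline c=\underline\kappa/8$. The constants $\bar\kappa$, $\underline\kappa$, $\bar\rho$ are assumed free of $N$, so $\underline c$ and $\bar\varepsilon$ are also free of $N$.
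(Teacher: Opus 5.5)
Your skeleton is the paper's. You normalize $h_v$ by $t_v=\|\delta_v\|_e+\|\theta_v\|_{W_\mu}+\mu_N\|\theta_v\|_e$. You use $D_+Q_{N,\gamma}^\mu(\gamma_{\alpha_N^\mu})[\bar h]\ge 0$ at the constrained minimizer together with the curvature bound $\underline\kappa$ to get $\tfrac14\underline\kappa t^2$ for small $t$. You exclude $t\ge\overline\rho$ through $\overline\kappa$, and you conclude with $t_v^2\ge\|\delta_v\|_e^2+\|\theta_v\|_{W_\mu}^2$. That last inequality already gives $\underline c=\underline\kappa/4$; your $1/8$ is merely loose.

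Where you depart from the paper is the range $t_0\le t<\overline\rho$. The paper sets $\overline\varepsilon^2\le\min\{\overline\kappa,\tfrac14\rho^2\underline\kappa\}$ and simply asserts that a criterion gap below $\overline\varepsilon^2$ forces $t_v\le\rho$. That step is unjustified on $(\rho,\overline\rho)$ unless the criterion is convex along rays. Convexity does hold without fixed coefficients: then $Q_N^\mu$ is an exact convex quadratic in $\theta$, and $\tfrac12\underline\kappa s^2$ bounds the gap for every feasible $s$. You are right to flag this. Your continuity--compactness--uniqueness argument closes it for each fixed $N$ when $\mu_N>0$ and $\mathcal{C}$ is closed. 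Appealing to uniqueness is genuinely needed: Assumption~\ref{assum:append a specific model specification}(iv) by itself does not rule out a second minimizer at some $t\in(0,\overline\rho)$ when $\delta$ enters nonlinearly.

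What does not follow is your closing claim that $\overline\varepsilon$ is free of $N$. Your $\kappa_1$ is the minimum of the criterion gap over a compact subset of $\mathcal{C}\times\Delta_{R_N}$ whose dimension $d_\delta+R_N$ grows with $N$. Nothing in the assumptions bounds it away from zero uniformly. Similarly, $t_0$ is $N$-free only if the remainder in (iv) vanishes at an $N$-uniform rate. So you obtain an $N$-free $\underline c$ but only $\overline\varepsilon=\overline\varepsilon_N$.

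This matters for how the lemma is used downstream. $U_N$ in Lemma~\ref{lem:append,local bound} contains $B/\overline\varepsilon$ and must satisfy $\sup_N E[U_N(z)^c]<\infty$. Lemmas~\ref{lem:appendix local variance} and~\ref{lem:append,Entropy} likewise treat $\overline\varepsilon$ as a fixed constant. Getting a uniform $\overline\varepsilon$ requires extra structure. One option is convexity of $s\mapsto Q_{N,\gamma}^\mu(\gamma_{\alpha_N^\mu}+s\bar h)$ together with an $N$-uniform remainder rate; then $\phi(s)\ge (s/t_0)\phi(t_0)\ge\tfrac14\underline\kappa t_0^2$ for $s\ge t_0$. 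Another is an $N$-uniform bound $|R_t[\bar h]|\le\tfrac14\underline\kappa t^2$ on all of $[0,\overline\rho)$, which is effectively what the paper's one-line step presumes.
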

\begin{proof}[Proof of Lemma \ref{lem:lower loss function difference}]
Note that Assumption \ref{assum:append a specific model specification}. (iv) combined with the fact that  $\alpha^\mu_N$ is the constrained minima imply that there exists $0\leq \rho\leq \overline{\rho}$, such that for $t \leq \rho$, 
\[
{Q_{N,\gamma}^{\mu}}(\gamma_{{\alpha_N^\mu}}+th_{v})- {Q_{N,\gamma}^{\mu}} (\gamma_{{\alpha_N^\mu}})\geq  \frac{1}{4} t^2 \underline{\kappa},
\]
where $h_{v} = (\delta_{v},\theta_{v})$ satisfying $\|\delta_{v}\|_e^2 + \|\theta_{v}\|_{W_{\mu}}^2 + \mu_N\|\theta_{v}\|_e = 1.$
				
Now let $h_{v}=(\delta_{v},\theta_{v})$ be the associated weight of $v$, $t_{v}= \|\delta_{v}\|_e+\|\theta_{v}\|_{W_{\mu}}+\mu_N\|\theta_{v}\|_e$ and $\tilde{h}_{v}=h_{v}/t_{v}$. Thus we can rewrite ${Q_{N}^{\mu}}({\alpha_N^\mu}+ v)-{Q_{N}^{\mu}}({\alpha_N^\mu}) $ as ${Q_{N,\gamma}^{\mu}}(\gamma_{{\alpha_N^\mu}}+t_{v}\tilde{h}_{v})-{Q_{N,\gamma}^{\mu}}(\gamma_{{\alpha_N^\mu}})$.  Let $\overline{\varepsilon}^2 \leq \min\{\overline{\kappa}, \frac{1}{4} {{\rho}}^2 \underline{\kappa}\}$. The aforementioned results would then indicate for any $v$ satisfying ${Q_{N}^{\mu}}({\alpha_N^\mu}+ v)-{Q_{N}^{\mu}}({\alpha_N^\mu}) \leq \overline{\varepsilon}^2$, we have $t_{v} \leq {\rho}$, and thus 
\[
{Q_{N}^{\mu}}({\alpha_N^\mu}+ v)-{Q_{N}^{\mu}}({\alpha_N^\mu}) \geq \frac{1}{4} t_{v}^2 \underline{\kappa}, 
\] 
which then leads to the conclusion as $t_{v}^2 \geq \|\delta_{v}\|_e^2 +\|\theta_{v}\|_{W_{\mu}}^2$.
\end{proof}

\begin{lemma}[Local varaince]\label{lem:appendix local variance}
Suppose that the eigenvalues of weight matrix  $\Sigma(X)$ are uniformly upper and lower bounded by $C,1/C$ for almost all values of $X$ respectively, under Assumption \ref{assum:append a specific model specification}, we have that there exists $C_\ell>0$ and $\varepsilon_0>0$ for all sufficiently small $\varepsilon_0\geq \varepsilon>0$ such that  
\begin{align*}
\sup_{\alpha \in \mathcal{A}_N: {Q_{N}^{\mu}}({\alpha})-{Q_{N}^{\mu}}({\alpha^{\mu}_N}) \leq \varepsilon^2} \operatorname{Var}\left[ {\ell_{N}^{\mu}}(z,\alpha)-{\ell_{N}^{\mu}}(z, \alpha^{\mu}_N) \right] \leq C_\ell \varepsilon^2.
\end{align*}
\end{lemma}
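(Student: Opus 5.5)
We need to prove Lemma (Local variance): for α with Q(α) − Q(α^μ) ≤ ε², Var[ℓ(z,α) − ℓ(z,α^μ)] ≤ C_ℓ ε².

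The plan is to bound the variance by the second moment of the loss difference. We then control that second moment by the squared distance in the $(\|\delta\|_e,\|\theta\|_{W_\mu})$ metric, and finally invoke Lemma~\ref{lem:lower loss function difference} to convert this distance into the criterion gap.

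\textbf{Step 1: reduce to the unpenalized part.} Write $v=\alpha-\alpha_N^\mu$, with weights $h_v=(\delta_v,\theta_v)$. The penalty $\frac{\mu_N}{2}\|\theta\|_e^2$ is nonrandom, so it cancels in the variance. Hence
\[
\operatorname{Var}[\ell_N^\mu(z,\alpha)-\ell_N^\mu(z,\alpha_N^\mu)]=\operatorname{Var}[\ell_N^0(z,\alpha)-\ell_N^0(z,\alpha_N^\mu)]\le E\bigl[(\ell_N^0(z,\alpha)-\ell_N^0(z,\alpha_N^\mu))^2\bigr].
\]

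\textbf{Step 2: bound the loss difference by the probability difference.} Set $\Delta P(x)=P(x,\alpha)-P(x,\alpha_N^\mu)$. A direct computation gives
\[
\ell_N^0(z,\alpha)-\ell_N^0(z,\alpha_N^\mu)=-\rho(z,\alpha_N^\mu)'\Sigma^{-1}\Delta P+\tfrac12\Delta P'\Sigma^{-1}\Delta P .
\]
By Assumption~\ref{assum:append a specific model specification}(i), $y$, $P$ and $g$ all lie in $[0,1]^J$. Since $\Sigma^{-1}$ has bounded eigenvalues, $\rho$ is bounded and both terms are bounded by $C\|\Delta P(x)\|_e$. Therefore the second moment is at most $C\,E\|\Delta P(x)\|_e^2$.

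\textbf{Step 3: bound $E\|\Delta P\|_e^2$ by the metric.} Decompose
\[
\Delta P=G(x,\delta_{\alpha_N^\mu})'\theta_v+\{G(x,\delta_\alpha)-G(x,\delta_{\alpha_N^\mu})\}'\theta_\alpha .
\]
For the first term, the bounded eigenvalues of $\Sigma$ give $E\|G'\theta_v\|_e^2\lesssim \theta_v'W_\mu\theta_v=\|\theta_v\|_{W_\mu}^2$. For the second term, $\theta_\alpha$ lies in the simplex, so the Lipschitz condition in Assumption~\ref{assum:append a specific model specification}(ii) gives $\|\sum_r(g(x,\delta_\alpha,\beta_r)-g(x,\delta_{\alpha_N^\mu},\beta_r))\theta_{\alpha,r}\|_e\le \sqrt J L_g\|\delta_v\|_e$, since the mixture is a convex combination. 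Combining, $E\|\Delta P\|_e^2\le C(\|\delta_v\|_e^2+\|\theta_v\|_{W_\mu}^2)$.

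\textbf{Step 4: apply Lemma~\ref{lem:lower loss function difference}.} Take $\varepsilon_0=\overline\varepsilon$ from that lemma. For $\varepsilon\le\varepsilon_0$, the gap satisfies $Q_N^\mu(\alpha)-Q_N^\mu(\alpha_N^\mu)\le\varepsilon^2\le\overline\varepsilon^2$. The lemma then gives
\[
\|\delta_v\|_e^2+\|\theta_v\|_{W_\mu}^2\le \underline c^{-1}\varepsilon^2 .
\]
Chaining Steps 1–4 yields $C_\ell=C/\underline c$.

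\textbf{Main obstacle.} The delicate point is Step 3: the constants must not depend on $N$ or $R_N$, since the lemma requires a uniform $C_\ell$. The $\theta$-part is handled because $\|\cdot\|_{W_\mu}$ is exactly the design-weighted metric, which sidesteps any singularity of $W_\mu$. The $\delta$-part is handled through the convexity of the simplex weights.

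There is one discrepancy to check. The lemma's $W_\mu$ is defined with $\Sigma(x)$ rather than $\Sigma(x)^{-1}$; since the eigenvalues of $\Sigma$ are bounded above and below, the two versions are equivalent up to constants. We must also confirm that $\alpha$ lies in $\mathcal A_N$, so that $v$ is a feasible direction, as Lemma~\ref{lem:lower loss function difference} requires.
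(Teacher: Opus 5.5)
Your proposal is correct and takes essentially the same route as the paper's proof. Like the paper, you expand the loss difference as a linear-plus-quadratic form in $\Delta P$ (the paper's $\Delta_\alpha$), bound the variance by $E\|\Delta P\|_e^2$ using boundedness of $y$, $P$ and the eigenvalues of $\Sigma$, split $\Delta P$ into a Lipschitz $\delta$-part and a $W_\mu$-weighted $\theta$-part, and invoke Lemma~\ref{lem:lower loss function difference}. Your remarks that the deterministic penalty drops out of the variance, and that $\Sigma$ versus $\Sigma^{-1}$ in $W_\mu$ only affects constants, are slightly more careful than the paper's write-up but do not change the argument.
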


\begin{proof}[Proof of Lemma \ref{lem:appendix local variance}]
Let $h= (h_{\delta}, h_{\theta}) = \gamma_{\alpha}-\gamma_{\alpha^{\mu}_N}$, $\bar{\rho} = Y-G(x,\delta_{\alpha_{\mu}^N})'\theta_{\alpha_{\mu}^N}$ and $\Delta_\alpha =G(x,\delta_{\alpha_{}})'\theta_{\alpha_{}}-G(x,\delta_{\alpha_{\mu}^N})'\theta_{\alpha_{\mu}^N}$. Note that by construction, there exists $C_1$
\begin{align*}
E \|(\Sigma(X))^{1/2}\Delta_\alpha\|_e^2 = & E\| G(x,\delta_{\alpha_{}})'\theta_{\alpha_{}}-G(x,\delta_{\alpha_{\mu}^N})'\theta_{\alpha_{}}+G(x,\delta_{\alpha_{\mu}^N})'(\theta_{\alpha_{}}-\theta_{\alpha_{\mu}^N}) \|_{\Sigma(X)}^2 \\
\leq & C_1 (\|h_{\delta}\|_e^2 + \|h_{\theta}\|_{W_{\mu}}^2 ),     
\end{align*}
where the last inequality is due to the continuity condition of $g(\cdot)$ imposed in  Assumption \ref{assum:append a specific model specification}(ii). Furthermore, 
\begin{align}
&2( {\ell_{N}^{\mu}}(z,\alpha)-{\ell_{N}^{\mu}}(z, \alpha^{\mu}_N)) = 2({\ell_{N,\gamma}^{\mu}}(z,\gamma_{\alpha})-{\ell_{N,\gamma}^{\mu}}(z, \gamma_{\alpha^{\mu}_N}))\nonumber \\
=&(\bar{\rho}-\Delta_\alpha)^{\prime} \Sigma(X)(\bar{\rho}-\Delta_\alpha)-\bar{\rho}^{\prime} \Sigma(X) \bar{\rho}+{\mu_N} (\|{\theta}_{\alpha_{\mu}^N}+h_{\theta} \|_e^2- \|{\theta}_{\alpha_{\mu}^N} \|_e^2) \nonumber \\
= & (\Delta_\alpha-2\bar{\rho})^{\prime} \Sigma(X) \Delta_\alpha  +{\mu_N} (\|{\theta}_{\alpha_{\mu}^N}+h_{\theta} \|_e^2- \|{\theta}_{\alpha_{\mu}^N} \|_e^2). \label{eq:C1}
\end{align}
which then combined with the facts that $\Sigma(X)$ has bounded eigenvalues and boundedness of ${P}_{\gamma}(X,\gamma_{\alpha}) $ and $Y$ (as $Y$ is binary) implies that there exists $C_2 >0$, 
\begin{align*}
\operatorname{Var}\left[ {\ell_{N}^{\mu}}(z,\alpha)-{\ell_{N}^{\mu}}(z, \alpha^{\mu}_N) \right] \leq  C_2 \operatorname{E}\|\Delta_\alpha\|^2_e.    
\end{align*}
Thus, $\operatorname{Var}\left[ {\ell_{N}^{\mu}}(z,\alpha)-{\ell_{N}^{\mu}}(z, \alpha^{\mu}_N) \right] \lesssim (\|h_{\delta}\|_e^2 + \|h_{\theta}\|_{W_{\mu}}^2),$ and Lemma \ref{lem:lower loss function difference} implies the final conclusion as long as ${Q_{N}^{\mu}}({\alpha})-{Q_{N}^{\mu}}({\alpha^{\mu}_N}) \leq \overline{\varepsilon}^2$ with $\overline{\varepsilon}^2$ as specified in Lemma \ref{lem:lower loss function difference}.
\end{proof}
			
\begin{assumption}\label{assum:appendix norm bound}
Using notations from Assumption \ref{assum:append a specific model specification}, let 
\[
\Lambda^\mu_N(X):=\sup _{\substack{\theta_{v}: \iota^{\prime} \theta_{v}=0, \|\theta_{v}\|_{W_{\mu}}>0,\\ 
\theta_{v, r} \geq 0 \text {~whenever~} {\theta}_{\alpha^{\mu}_N,r} =0,} }\frac{\left\|G(x, \delta_{\alpha^{\mu}_N}  )^{\top} \theta_{v}\right\|_e}{\|\theta_{v}\|_{W_{\mu}}},
\]
assume that for some $c>2$, $\sup _N E\left[\left\{1+\Lambda^\mu_N(X)\right\}^c\right]<\infty.$
\end{assumption}
Assumption \ref{assum:appendix norm bound} does not require $W_{\mu}$ to be nonsingular, and it only says that directions that are small in the $\|\cdot\|_{W_{\mu}}$-(semi)norm cannot be arbitrarily large on average.
			
\begin{lemma}[Local envolope]\label{lem:append,local bound} Under Assumptions \ref{assum:append a specific model specification}-\ref{assum:appendix norm bound},  and $\mu_N\leq C <\infty$, for any $\varepsilon>0$, there exist a measurable function $U_N(\cdot)$ such that
$$
\sup_{\alpha \in \mathcal{A}_N: {Q_{N}^{\mu}}({\alpha})-{Q_{N}^{\mu}}({\alpha^{\mu}_N}) \leq \varepsilon^2} \left|{\ell_{N}^{\mu}}(z,\alpha)-{\ell_{N}^{\mu}}(z, \alpha^{\mu}_N)\right| \leq \varepsilon  U_N(z),
$$
with $\sup_N E\left[U_N(z)\right]^c \leq C$ for $c>2$.
\end{lemma}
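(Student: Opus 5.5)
The natural route is to bound the loss difference pathwise using the algebraic identity in \eqref{eq:C1}, already derived in the proof of Lemma~\ref{lem:appendix local variance}. Write $h=(h_\delta,h_\theta)=\gamma_\alpha-\gamma_{\alpha_N^\mu}$, $\bar\rho=y-G(x,\delta_{\alpha_N^\mu})'\theta_{\alpha_N^\mu}$ and $\Delta_\alpha=G(x,\delta_\alpha)'\theta_\alpha-G(x,\delta_{\alpha_N^\mu})'\theta_{\alpha_N^\mu}$. Then
\[
2\{\ell_N^\mu(z,\alpha)-\ell_N^\mu(z,\alpha_N^\mu)\}=(\Delta_\alpha-2\bar\rho)'\Sigma(x)\Delta_\alpha+\mu_N\bigl(\|\theta_{\alpha_N^\mu}+h_\theta\|_e^2-\|\theta_{\alpha_N^\mu}\|_e^2\bigr).
\]
Because $y$ is binary and $g\in[0,1]^J$ on the simplex, both $\bar\rho$ and $\Delta_\alpha-2\bar\rho$ are bounded by a constant. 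The eigenvalues of $\Sigma(x)$ are uniformly bounded. So the first term is bounded by $C\|\Delta_\alpha\|_e$. For the penalty term, $\theta$'s lie in the simplex, so $\bigl|\|\theta_{\alpha_N^\mu}+h_\theta\|_e^2-\|\theta_{\alpha_N^\mu}\|_e^2\bigr|\le 4\|h_\theta\|_e$. With $\mu_N\le C$, this term is at most $C\mu_N\|h_\theta\|_e$.

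The next step is to bound $\|\Delta_\alpha\|_e$ pointwise. I split it as
\[
\Delta_\alpha=\{G(x,\delta_\alpha)-G(x,\delta_{\alpha_N^\mu})\}'\theta_\alpha+G(x,\delta_{\alpha_N^\mu})'h_\theta.
\]
By the Lipschitz condition in Assumption~\ref{assum:append a specific model specification}(ii), and since $\theta_\alpha$ sums to one, the first piece is at most $\sqrt J L_g\|h_\delta\|_e$. The direction $h_\theta$ satisfies $\iota'h_\theta=0$ and is nonnegative off the support of $\theta_{\alpha_N^\mu}$, so it lies in the cone over which $\Lambda_N^\mu(x)$ in Assumption~\ref{assum:appendix norm bound} is a supremum. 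Hence the second piece is at most $\Lambda_N^\mu(x)\|h_\theta\|_{W_\mu}$ (trivially so when $\|h_\theta\|_{W_\mu}=0$ only if the numerator also vanishes; otherwise I handle that case through the penalty term). Altogether,
\[
|\ell_N^\mu(z,\alpha)-\ell_N^\mu(z,\alpha_N^\mu)|\le C\{1+\Lambda_N^\mu(x)\}\bigl(\|h_\delta\|_e+\|h_\theta\|_{W_\mu}+\mu_N\|h_\theta\|_e\bigr).
\]

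The final step converts the right-hand factor into $\varepsilon$ using the criterion restriction. Lemma~\ref{lem:lower loss function difference} gives $\|h_\delta\|_e^2+\|h_\theta\|_{W_\mu}^2\le \underline c^{-1}\varepsilon^2$ whenever $\varepsilon\le\overline\varepsilon$, so $\|h_\delta\|_e+\|h_\theta\|_{W_\mu}\lesssim\varepsilon$. The main obstacle is the penalty piece $\mu_N\|h_\theta\|_e$, which Lemma~\ref{lem:lower loss function difference} does not directly control. I would revisit its proof, where the quadratic lower bound is stated for the normalized quantity $t_v=\|\delta_v\|_e+\|\theta_v\|_{W_\mu}+\mu_N\|\theta_v\|_e$ and yields $Q_N^\mu(\alpha)-Q_N^\mu(\alpha_N^\mu)\ge\tfrac14\underline\kappa t_v^2$. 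This gives $t_v\le 2\varepsilon/\sqrt{\underline\kappa}$ directly, which covers all three pieces at once.

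For $\varepsilon>\overline\varepsilon$, the differences are bounded anyway: $|\ell_N^\mu(z,\alpha)-\ell_N^\mu(z,\alpha_N^\mu)|\le C(1+\mu_N)\le C\varepsilon/\overline\varepsilon$. Set $U_N(z)=C'\{1+\Lambda_N^\mu(x)\}$ with $C'$ absorbing these constants. Then Assumption~\ref{assum:appendix norm bound} gives $\sup_N E[U_N(z)^c]<\infty$ for the stated $c>2$, which completes the argument.
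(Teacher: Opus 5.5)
Your proposal is correct and is essentially the paper's proof: you bound the loss difference pathwise via \eqref{eq:C1} by $C\{1+L_g+\Lambda_N^\mu(X)\}(\|h_\delta\|_e+\|h_\theta\|_{W_\mu}+\mu_N\|h_\theta\|_e)$, convert that factor to $\varepsilon$ through the bound $t_v\lesssim\varepsilon$ taken from the proof of Lemma~\ref{lem:lower loss function difference}, and absorb the case $\varepsilon>\overline\varepsilon$ with the uniform bound $B$. The paper packages this last step as $U_N(z)=\overline C\{1+L_g+\Lambda_N^\mu(X)\}+B/\overline\varepsilon$.

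One small correction concerns your aside on directions with $\|h_\theta\|_{W_\mu}=0$. A penalty-term fallback would cost a factor of order $\mu_N^{-1}$ and so break uniformity in $N$ when $\mu_N\to0$. It is also unnecessary: $h_\theta'W_\mu h_\theta=0$ forces $G(x,\delta_{\alpha_N^\mu})'h_\theta=0$ almost surely, simultaneously over the null space of $W_\mu$, so the $\Lambda_N^\mu$ bound holds a.s., which suffices for an envelope.
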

\begin{proof}[Proof of Lemma \ref{lem:append,local bound}]
First, the boundedness of $Y, P(\cdot) $, $\Sigma(X)$ and the simplex constraint implies that there exists $B>0$ such that,
\begin{align}
\left|{\ell_{N}^{\mu}}(z,\alpha)-{\ell_{N}^{\mu}}(z, \alpha^{\mu}_N)\right|  \leq B. \label{eq:C2}
\end{align} 
Equation (\ref{eq:C1}) implies that $\left|{\ell_{N}^{\mu}}(z,\alpha)-{\ell_{N}^{\mu}}(z, \alpha^{\mu}_N)\right| \leq C(\|\Delta_\alpha\|_e + \mu_N \|h_{\theta} \|_e)$ for some $C>0$, and $$\|\Delta_\alpha\|_e\leq  \left\{L_g+\Lambda^\mu_N(X)\right\}(\left\|h_{\delta}\right\|_e+\left\|h_{\theta}\right\|_{W_{\mu}}).$$ The proof of Lemma \ref{lem:lower loss function difference} implies that if ${Q_{N}^{\mu}}({\alpha_N^\mu}+ v)-{Q_{N}^{\mu}}({\alpha_N^\mu})\leq \varepsilon^2\leq \overline{\varepsilon}^2$, then there exists $\overline{C}>0$ 
\begin{align}
\left|{\ell_{N}^{\mu}}(z,\alpha)-{\ell_{N}^{\mu}}(z, \alpha^{\mu}_N)\right| \leq  \overline{C} \epsilon \left\{1+L_g+\Lambda^\mu_N(X)\right\}. \label{eq:C3}
\end{align}
Let $U_N(z)= \overline{C} \left\{1+L_g+\Lambda^\mu_N(X)\right\} + {B}/{\overline{\varepsilon}}$, then equations (\ref{eq:C2}), (\ref{eq:C3}) together with Assumption \ref{assum:appendix norm bound} leads to the conclusion.
\end{proof}
			
\begin{lemma}[Bracketing number]\label{lem:append,Entropy} 
Under conditions of Lemmas \ref{lem:appendix local variance}-\ref{lem:append,local bound}, let $${\mathcal{G}_{N}^\mu}(\mathrm{d})=\left\{{\ell_{N}^{\mu}}(z, \alpha)-{\ell_{N}^{\mu}}(z, \alpha^{\mu}_N): {Q_{N}^{\mu}}(\alpha, z)-{Q_{N}^{\mu}}(\alpha^{\mu}_N, z)   \leq \mathrm{d}^2, \alpha \in \mathcal{A}_N\right\}.$$ Then there exists $\mathrm{d}_N \in (0, 1)$ for some $c>0$ such that for some $a,b,C>0$
$$
\mathrm{d}_N=\inf \left\{\mathrm{d}>0: \mathrm{d}^{-2} \int_{b \mathrm{d}^2}^{a \mathrm{d}} \left[ \log \mathbb{N}_{[]}(\epsilon, {\mathcal{G}_{N}^{\mu}}  (\mathrm{d}),\|\cdot \|_{\mathrm{L}_2({P})}) \right]^{1/2} d w \leq C N^{1 / 2}\right\}.
$$  
\end{lemma}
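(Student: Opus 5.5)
The plan is to exploit the finite-dimensional parametric structure of $\calA_N$. Every $\alpha\in\calA_N$ is indexed by $\gamma_\alpha=(\delta_\alpha,\theta_\alpha)$ of dimension $K_N\equiv d_\delta+R_N$. So the class ${\mathcal{G}_{N}^\mu}(\mathrm{d})$ is a Lipschitz-in-parameter class over a small parameter set. Its bracketing numbers should then be polynomial in $\mathrm{d}/\epsilon$ with exponent $K_N$, and the entropy integral can be computed directly. The target is $\mathrm{d}_N\asymp\sqrt{K_N/N}$, which lies in $(0,1)$ once $R_N=o(N)$, or after enlarging the constant $C$ in the statement.

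\emph{Step 1 (localization of the index set).} Write $h=\gamma_\alpha-\gamma_{\alpha_N^\mu}$. By Lemma~\ref{lem:lower loss function difference}, for $\mathrm{d}\le\overline{\varepsilon}$ the constraint $Q_N^\mu(\alpha)-Q_N^\mu(\alpha_N^\mu)\le\mathrm{d}^2$ forces $\|h_\delta\|_e^2+\|h_\theta\|_{W_\mu}^2\le\mathrm{d}^2/\underline{c}$. In the proof of that lemma one in fact obtains the bound on the full normalization $t_v=\|h_\delta\|_e+\|h_\theta\|_{W_\mu}+\mu_N\|h_\theta\|_e$ from Assumption~\ref{assum:append a specific model specification}(iv). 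I will use this seminorm, $|h|_\ast\equiv t_v$, so the index set is contained in a $|\cdot|_\ast$-ball of radius $C_0\mathrm{d}$. For $\mathrm{d}>\overline{\varepsilon}$ the class is uniformly bounded by (\ref{eq:C2}), so only a constant factor is lost.

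\emph{Step 2 (Lipschitz property).} For two parameters $\alpha_1,\alpha_2$ in the localized set, I expand as in (\ref{eq:C1}). This gives $|\ell_N^\mu(z,\alpha_1)-\ell_N^\mu(z,\alpha_2)|\le C(\|\Delta_{\alpha_1}-\Delta_{\alpha_2}\|_e+\mu_N\|\theta_1-\theta_2\|_e)$, using boundedness of $Y$, $P$ and $\Sigma$, and $\|\theta\|_e\le1$ on the simplex. The model part splits into a $\delta$-difference, bounded by $L_g\|\delta_1-\delta_2\|_e$, and a $\theta$-difference, bounded by $\Lambda_N^\mu(X)\|\theta_1-\theta_2\|_{W_\mu}$. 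Together these yield $|\ell(z,\alpha_1)-\ell(z,\alpha_2)|\le F_N(z)\,|\gamma_{\alpha_1}-\gamma_{\alpha_2}|_\ast$ with $F_N\lesssim 1+L_g+\Lambda_N^\mu(X)$. Assumption~\ref{assum:appendix norm bound} then gives $\sup_N\|F_N\|_{L_2(P)}<\infty$.

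I expect the main obstacle to be that $\Lambda_N^\mu$ is defined only over tangent-cone directions. These are directions with $\iota'\theta_v=0$ and nonnegative entries off the support of $\theta_{\alpha_N^\mu}$. The difference $\theta_1-\theta_2$ of two simplex points satisfies the first condition but not necessarily the sign condition. My first attempt is to route both points through $\alpha_N^\mu$, writing $\theta_1-\theta_2=(\theta_1-\theta_{\alpha_N^\mu})-(\theta_2-\theta_{\alpha_N^\mu})$, each term a feasible direction, and apply the triangle inequality. This works for brackets centered at a grid of points, each compared with the center. If that fails, I will restrict to the active face $\calA_N(S_N)$ as elsewhere in the paper, where all zero-sum directions are two-sided.

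\emph{Step 3 (counting and integration).} By the standard Lipschitz-parametric bracketing bound (van der Vaart and Wellner, Thm.~2.7.11), $\mathbb{N}_{[]}(2\epsilon\|F_N\|_{L_2},{\mathcal{G}_N^\mu}(\mathrm{d}),L_2)$ is at most the $|\cdot|_\ast$-covering number of a ball of radius $C_0\mathrm{d}$ in a $K_N$-dimensional seminormed space. Passing to the quotient by the null space of the seminorm, this is at most $(3C_0\mathrm{d}/\epsilon)^{K_N}$. Hence $\log\mathbb{N}_{[]}(w,\cdot)\le K_N\log(C_1\mathrm{d}/w)$, and
\[
\mathrm{d}^{-2}\int_{b\mathrm{d}^2}^{a\mathrm{d}}\sqrt{K_N\log(C_1\mathrm{d}/w)}\,dw\;\le\;C_2\sqrt{K_N}\,\mathrm{d}^{-1},
\]
using $\int_0^{a}\sqrt{\log(C_1/u)}\,du<\infty$ after substituting $w=\mathrm{d}u$. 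The left side is therefore finite and nonincreasing in $\mathrm{d}$, so the defining set is nonempty. It contains every $\mathrm{d}\ge C_2\sqrt{K_N}/(CN^{1/2})$, giving $\mathrm{d}_N\lesssim\sqrt{(d_\delta+R_N)/N}<1$ for $N$ large. This is exactly the $\mathrm{d}_N$ needed to feed condition A.4 of \cite{chen1998} in Lemma~\ref{lem: consistency}.
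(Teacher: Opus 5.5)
Your outline is the paper's: index $\mathcal{G}_N^\mu(\mathrm{d})$ by $\gamma_\alpha$, localize with Lemma~\ref{lem:lower loss function difference}, pass from brackets to covers via Theorem~2.7.11 of \cite{van1996weak}, apply the volumetric bound $(C\mathrm{d}/\epsilon)^{d_\delta+R_N}$, and integrate. The gap is Step~2. Theorem~2.7.11 needs a pointwise bound $|f_{\gamma_1}-f_{\gamma_2}|\le F_N\,d(\gamma_1,\gamma_2)$ for all pairs in the index set, with $\|F_N\|_{L_2(P)}$ bounded. Your split does not deliver this for $d=|\cdot|_\ast$, for two reasons.

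First, $W_\mu$ and $\Lambda_N^\mu$ are built from $G(x,\delta_{\alpha_N^\mu})$. In $P(x,\alpha_1)-P(x,\alpha_2)$, however, the $\theta$-difference is multiplied by $G(x,\delta_2)$ (or $G(x,\delta_1)$). Passing to $\delta_{\alpha_N^\mu}$ leaves the cross term $\{G(x,\delta_2)-G(x,\delta_{\alpha_N^\mu})\}'(\theta_1-\theta_2)$, which is bounded only by $L_g\|\delta_2-\delta_{\alpha_N^\mu}\|_e\|\theta_1-\theta_2\|_1$. Since $W_\mu$ may be singular, $|\cdot|_\ast$ controls $\|\theta_1-\theta_2\|_1$ only through $\mu_N\|\theta_1-\theta_2\|_e$. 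The envelope is therefore of order $1+\Lambda_N^\mu(X)+\mathrm{d}\sqrt{R_N}/\mu_N$, which is not uniformly bounded.

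Second, the sign-constraint problem you flag is not resolved. Routing through $\alpha_N^\mu$ gives $|f_{\gamma_1}-f_{\gamma_2}|\le F_N(|\gamma_1-\gamma_{\alpha_N^\mu}|_\ast+|\gamma_2-\gamma_{\alpha_N^\mu}|_\ast)$, which is of order $F_N\mathrm{d}$. That yields brackets of width $\mathrm{d}$, not $\epsilon$. The fallback to $\calA_N(S_N)$ bounds a smaller class, whereas $\mathcal{G}_N^\mu(\mathrm{d})$ is indexed by all of $\calA_N$. The paper passes over the same point: it invokes Theorem~2.7.11 using only the $L_2$ bound relative to $\alpha_N^\mu$ from Lemma~\ref{lem:appendix local variance}.

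A repair that needs none of this is available. Because $g\in[0,1]^J$, $\theta_1$ lies in the simplex and $\mu_N\le C$, you have $|\ell_N^\mu(z,\alpha_1)-\ell_N^\mu(z,\alpha_2)|\le C(\|\delta_1-\delta_2\|_e+\|\theta_1-\theta_2\|_1)$. This holds with a constant envelope, without sign constraints, and without reference to $\delta_{\alpha_N^\mu}$. Localize only $\delta$ and cover the whole simplex in $\ell_1$. Then $\log\mathbb{N}_{[]}(w,\mathcal{G}_N^\mu(\mathrm{d}),L_2)\lesssim(d_\delta+R_N)\log(C/w)$, and the left side of the defining inequality is $\lesssim\mathrm{d}^{-1}\{(d_\delta+R_N)\log(1/\mathrm{d})\}^{1/2}$. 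This gives $\mathrm{d}_N\lesssim\{(d_\delta+R_N)\log N/N\}^{1/2}$, which is the paper's bound.

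Consequences for the rest of your proposal:
\begin{itemize}
\item Your log-free rate $\sqrt{(d_\delta+R_N)/N}$ is not established by your argument.
\item $\mathrm{d}_N<1$ requires a growth condition such as $(d_\delta+R_N)\log N=o(N)$, not an $N$-dependent constant $C$.
\item The statement also asserts $\mathrm{d}_N>0$, which you do not prove. The paper obtains it by noting that for small $\mathrm{d}$ the class contains two functions at $L_2(P)$-distance $\gtrsim\mathrm{d}$. The bracketing entropy is then at least $\log 2$ on an interval of length $\asymp\mathrm{d}$, so the left side is $\gtrsim\mathrm{d}^{-1}$ and the defining inequality fails for $\mathrm{d}\lesssim N^{-1/2}$.
\end{itemize}
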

\begin{proof}[Proof of Lemma \ref{lem:append,Entropy}] 
Without losing generality, we simply consider the additional restriction $\mathrm{d} \leq \min\{\overline{\varepsilon}/a, a/b\}$ with $\overline{\varepsilon}$ specified in Lemma \ref{lem:lower loss function difference}, and denote
\begin{align*}
& \mathcal{A}_{{\mathcal{G}_{N}^\mu}(\mathrm{d})}=\left\{\alpha- \alpha^{\mu}_N: {Q_{N}^{\mu}}(\alpha)-{Q_{N}^{\mu}}(\alpha^{\mu}_N)   \leq \mathrm{d}^2, \alpha \in \mathcal{A}_N\right\},  \\
& \mathcal{A}_{{\mathcal{G}_{N}^\mu}(\mathrm{d}), h}=\left\{h= (h_{\delta}, h_{\theta}) = \gamma_{\alpha}-\gamma_{\alpha^{\mu}_N}: {Q_{N}^{\mu}}(\alpha)-{Q_{N}^{\mu}}(\alpha^{\mu}_N)   \leq \mathrm{d}^2, \alpha \in \mathcal{A}_N\right\},  \\
& \mathcal{A}_{{\mathcal{G}_{N}^\mu}(\mathrm{d}), h, \; \varrho(\cdot)}=\left\{h= (h_{\delta}, h_{\theta}) = \gamma_{\alpha}-\gamma_{\alpha^{\mu}_N}:  \varrho(h) \leq \mathrm{d}, \alpha \in \mathcal{A}_N\right\},
\end{align*}  
where $$[\varrho(h)]^2\equiv \underline{\kappa}/4 (\|h_{\delta}\|_e^2 +\|h_{\theta}\|_{W_{\mu}}^2+\mu_N \|h_{\theta}\|_{e}^2),$$ and $\underline{\kappa}$ is as specified in Lemma \ref{lem:lower loss function difference}. Thus by the proof of Lemma \ref{lem:lower loss function difference} we know that  $ \mathcal{A}_{{\mathcal{G}_{N}^\mu}(\mathrm{d}), h}\subseteq \mathcal{A}_{{\mathcal{G}_{N}^\mu}(\mathrm{d}), h, \; \varrho(\cdot)}$ and the proof of Lemma \ref{lem:appendix local variance} shows that there exists $C_1 >0$ such that 
\[ 
\|{\ell_{N}^{\mu}}(z, \alpha)-{\ell_{N}^{\mu}}(z, \alpha^{\mu}_N) \|_{\mathrm{L}_2({P})}^2
\leq  \frac{4C_1}{\underline{\kappa}} [\varrho(\gamma_{\alpha}-\gamma_{\alpha^{\mu}_N})]^2,
\]
and thus via Theorem 2.7.11 in \cite{van1996weak} we know that 
$$\mathbb{N}_{[]}(\epsilon, {\mathcal{G}_{N}^{\mu}} (\mathrm{d}),\|\cdot \|_{\mathrm{L}_2({P})}) \leq \mathbb{N}_{[]}(\epsilon, \mathcal{A}_{{\mathcal{G}_{N}^\mu}(\mathrm{d}), h, \; \varrho(\cdot)}, 2\sqrt{\frac{C_1}{\underline{\kappa}}}\varrho(\cdot)).
$$ 

In addition, let $\mathcal{A}_{\mathrm{d}, h, \; \varrho(\cdot)}\equiv \left\{h: \; h= (h_{\delta}, h_{\theta}) \in \mathbb{R}^{d_{\delta}} \times [0,1]^{R_N}, \;  \varrho(h) \leq \mathrm{d}\right\}$, then
\begin{align}
\mathbb{N}_{[]}(\epsilon, \mathcal{A}_{{\mathcal{G}_{N}^\mu}(\mathrm{d}), h, \; \varrho(\cdot)}, 2\sqrt{\frac{C_1}{\underline{\kappa}}}\varrho(\cdot) )    \leq \mathbb{N}_{[]}(\epsilon, \mathcal{A}_{\mathrm{d}, h, \; \varrho(\cdot)}, 2\sqrt{\frac{C_1}{\underline{\kappa}}}\varrho(\cdot))  \leq (\frac{3\mathrm{d}}{2\epsilon}  \sqrt{\frac{\underline{\kappa}}{C_1}} )^{d_{\delta} +R_N}, \label{eq:proof entr}
\end{align}
where the first inequality is due to the fact that $\mathcal{A}_{{\mathcal{G}_{N}^\mu}(\mathrm{d}), h, \; \varrho(\cdot)} \subseteq \mathcal{A}_{\mathrm{d}, h, \; \varrho(\cdot)} $ and the second relates to, e.g.,  Exercise 2.1.6 in \cite{van1996weak}. From (\ref{eq:proof entr}) we see that there exists $C>0$
$$\mathrm{d}_N \leq \min\{\overline{\varepsilon}/a, a/b, C\sqrt{\frac{(d_{\delta} +R_N) \log (N)}{N}}\} .
$$
Then once note that for sufficiently small $d \in(0, \bar{d})$, the class ${\mathcal{G}_{N}^\mu}(d)$ contains two functions separated by at least $\min\{c_0,a/2\}\mathrm{d}$ in $\mathrm{L}_2(P)$, and thus $\mathrm{d}^{-2} \int_{b \mathrm{d}^2}^{a \mathrm{d}} \left[ \log \mathbb{N}_{[]}(\epsilon, {\mathcal{G}_{N}^{\mu}}(\mathrm{d}),\|\cdot \|_{\mathrm{L}_2({P})}) \right]^{1/2}d\epsilon >CN^{-1/2}$ for sufficiently small $\mathrm{d}$.
\end{proof}

\subsubsection{Intermediate results related to the bootstrap} 
			
Let $$\ell^{0,B}_N(V_n,{\alpha}) \equiv \omega_{n, N} \rho(z_n, \alpha)^{\prime}  {\Sigma}(X_n)^{-1}	\rho(z_n, \alpha)/2,$$ and define accordingly the corresponding penalized object $\ell^{\mu,B}_{N}(V_n,{\alpha}), D{\ell_{N}^{\mu,B}}$. 
          
\begin{assumption}[Bootstrap]\label{assum:CP15 boot3.ii}  
~\\
(i) For $u\in \mathcal{V}^\mu_{N},$ there exists a deterministic $\rho_N^B(u)> 0$ such that for $\varepsilon=o_p\bigl(\min\{N^{-1/2},\rho_N^B(u)\}\bigr)$ and for all $\delta>0$, $P(P^*({\widehat{\alpha}_N^{\mu,B}} \pm |\varepsilon| u\in\calN_N^\mu)\geq 1-\delta)\rightarrow 1,$ as $N$ increases. ~\\
Let 
$$\bar{E}_N^B\{g(V)\}\equiv \frac{1}{N}\sum_{i=1}^N\left[ g(V_i)-E_{P^*} [ g(V_i)]\right],
$$
and suppose that for some positive sequences $e_N=o\bigl(\min\{N^{-1/2},\rho_N^B(u_N^\tau)\}\bigr),$ the followings hold:~\\
(ii.1)
\begin{align*}
\sup_{\alpha \in \calN_N^\mu}\bar{E}_N^B\Bigl\{& {\ell_{N}^{\mu,B}}(V,\alpha \pm e_N u_N^\tau)-\ell^{\mu,B}_{N}(V,\alpha) \\
&\quad-D{\ell_{N}^{\mu,B}}(V,{\widehat{\alpha}_N^\mu})[\pm e_N u_N^\tau]\Bigr\}=O_{p^*}(e_N^2) \text{~~wpa1~~} (P). 
\end{align*} 
(ii.3)
\begin{align*}
\sup_{\alpha \in \calN_N^\mu}\Biggl|E_{P^*}\Bigl[& {\ell_{N}^{\mu,B}}(V_i,\alpha)-{\ell_{N}^{\mu,B}}(V_i,\alpha \pm e_N u_N^\tau)\Bigr] \\
&\quad -\frac{\|\alpha \pm e_N u_N^\tau-{\widehat{\alpha}_N^\mu}\|^2-\|\alpha-{\widehat{\alpha}_N^\mu}\|^2}{2}\Biggr|=O_{p^*}(e_N^2) \text{~~wpa1~~} (P). 
\end{align*} 
(iii) 
\begin{align*}
\left|\mathcal{L}^*\left\{\sqrt{N} \bar{E}_N^B\{D{\ell_{N}^{\mu,B}}(V,{\widehat{\alpha}_N^\mu})[v_{N}^\tau]\} \right\}-\mathcal{L}\left\{ \sigma_N \mathbb{Z} \} \right\}\right|=o_{P}(1),
\end{align*}
where $\mathbb{Z} \sim N(0,1)$.
\end{assumption} 
			
\begin{theorem}\label{theo:bootstrap}
Let Assumptions \ref{assum:ri}--\ref{assum:clt v2}, \ref{assum:regular functional} and \ref{assum:CP15 boot3.ii} hold, then 
$$
\begin{aligned}
\sup _{q \in \mathbb{R}} \mid & P^*( \sqrt{N}(\tau({\widehat{\alpha}_N^{\mu,B}})-\tau({\widehat{\alpha}_N^\mu})) \leq q )  \\& \qquad \qquad -P(\sqrt{N}(\tau({\widehat{\alpha}_N^\mu})-\tau(\alpha^{\mu}_N)) \leq q) \mid = o_{p^*}(1), \text {wpa1~~} (P).
\end{aligned}
$$
\end{theorem}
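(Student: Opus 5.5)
The plan is to mirror the proof of Theorem~\ref{theo:3.1 v3 0}, now run in the bootstrap world, following Theorem~5.1 of \citet{chen2015sieve}. The bootstrap criterion $\widehat Q_N^{\mu,B}$ is centered at $\widehat\alpha_N^\mu$ instead of $\alpha_N^\mu$. The goal is a bootstrap analogue of \eqref{eq:theo1,1}--\eqref{eq:theo1,2}:
\[
\frac{\sqrt N\{\tau(\widehat\alpha_N^{\mu,B})-\tau(\widehat\alpha_N^\mu)\}}{\|v_N^\tau\|_{sd,\mu}}
=\sqrt N\,\bar E_N^B\{D\ell_N^{\mu,B}(V,\widehat\alpha_N^\mu)[u_N^\tau]\}+o_{p^*}(1)\quad\text{wpa1 }(P).
\]
Combining this with Assumption~\ref{assum:CP15 boot3.ii}(iii) and with Theorem~\ref{theo:3.1 v3 0} shows that the two laws are close in the bounded-Lipschitz metric. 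Because the limit $\sigma_N\mathbb Z$ is continuous, Pólya's theorem then upgrades this to the Kolmogorov distance in the statement.

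\textbf{Step 1: linearize the functional.} The perturbations $\widehat\alpha_N^{\mu,B}\pm\varepsilon u_N^\tau$ must stay in $\calN_N^\mu$ with $P^*$-probability approaching one wpa1; Assumption~\ref{assum:CP15 boot3.ii}(i) gives this. Apply Assumption~\ref{assum:average_functional}(ii) at both $\widehat\alpha_N^{\mu,B}$ and $\widehat\alpha_N^\mu$ and subtract. Linearity of $D\tau(\alpha_N^\mu)$ and the Riesz identity \eqref{eq:Riesz} then give
\[
\tau(\widehat\alpha_N^{\mu,B})-\tau(\widehat\alpha_N^\mu)=\|v_N^\tau\|_{sd,\mu}\langle\widehat\alpha_N^{\mu,B}-\widehat\alpha_N^\mu,u_N^\tau\rangle_\mu+o_{p^*}(\|v_N^\tau\|_\mu N^{-1/2}).
\]
By Assumption~\ref{assum:norm ratio}, the ratio $\|v_N^\tau\|_\mu/\|v_N^\tau\|_{sd,\mu}$ is bounded, so the remainder is negligible after normalization. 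Assumption~\ref{assum:regular functional} keeps $\|v_N^\tau\|_\mu$ bounded, so the nonstudentized scaling is harmless.

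\textbf{Step 2: perturbation argument in the bootstrap world.} This is the core step, following Chen and Pouzo's approach. Since $\widehat\alpha_N^{\mu,B}$ is an approximate minimizer,
\[
0\le \widehat Q_N^{\mu,B}(\widehat\alpha_N^{\mu,B}\pm e_Nu_N^\tau)-\widehat Q_N^{\mu,B}(\widehat\alpha_N^{\mu,B}).
\]
Decompose this difference into three pieces. The first is a centered bootstrap empirical-process term, controlled by (ii.1). The second is a conditional-mean term, expanded by (ii.3) as $\pm e_N\langle\widehat\alpha_N^{\mu,B}-\widehat\alpha_N^\mu,u_N^\tau\rangle+O(e_N^2)$. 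The third is the linear score term $\pm e_N\bar E_N^B\{D\ell_N^{\mu,B}(V,\widehat\alpha_N^\mu)[u_N^\tau]\}$. Using both signs and dividing by $e_N$, with $e_N=o(N^{-1/2})$, sandwiches $\langle\widehat\alpha_N^{\mu,B}-\widehat\alpha_N^\mu,u_N^\tau\rangle$ to within $o_{p^*}(N^{-1/2})$ of the bootstrap score average.

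\textbf{Main obstacle.} I expect the hard part to be reconciling the centering. The inner product in (ii.3) is taken at $\widehat\alpha_N^\mu$, whereas $\langle\cdot,\cdot\rangle_\mu$ is defined at $\alpha_N^\mu$. In addition, $E_{P^*}$ of the bootstrap score at $\widehat\alpha_N^\mu$ equals the sample first-order condition, which vanishes only on the active face. The active-face issue is handled by Assumption~\ref{assum:ri}: interiority on $\calN_N^\mu$ lets me treat the sample first-order condition in direction $u_N^\tau$ as zero, so no spurious drift appears. The mismatch between the two inner products would be absorbed by Assumption~\ref{assum:riesz representer estimator I}(i)--(ii), which gives an $O_p(\epsilon_N^*)$ discrepancy uniformly over $\calN_N^\mu$. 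This contributes $o_p(N^{-1/2})$ because it multiplies an $O_{p^*}(N^{-1/2})$ quantity.
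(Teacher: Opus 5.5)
Your Steps 1--2 are the paper's proof. The paper first derives (\ref{eq:theo2,1})--(\ref{eq:theo2,2}), the bootstrap analogues of (\ref{eq:theo1,1})--(\ref{eq:theo1,2}) centered at $\widehat\alpha_N^\mu$, using Assumption~\ref{assum:CP15 boot3.ii}(i)--(ii) and Assumption~\ref{assum:average_functional}. It then combines them with (\ref{eq:theo1,1})--(\ref{eq:theo1,2}) and Assumptions~\ref{assum:clt v2}, \ref{assum:regular functional} and \ref{assum:CP15 boot3.ii}(iii). Your explicit P\'olya step fills in what the paper leaves as ``lead to the conclusion.'' Because $\sigma_N$ varies with $N$, the passage from the bounded-Lipschitz to the Kolmogorov distance needs $\sigma_N$ bounded away from zero as well as above. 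Assumption~\ref{assum:regular functional} supplies only the upper bound; the paper assumes neither lower bound explicitly.

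You should drop your \emph{main obstacle} paragraph, because it is unnecessary and, as written, incorrect.

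\textbf{Why it is unnecessary.} The unsubscripted norm in Assumption~\ref{assum:CP15 boot3.ii}(ii.3) must be the population norm $\|\cdot\|_\mu$ at $\alpha_N^\mu$, simply recentered at $\widehat\alpha_N^\mu$. That is the same inner product as in (\ref{eq:theo2,2}), and the combination only works if they coincide. The purely quadratic form of (ii.3) also already encodes the sample first-order condition at $\widehat\alpha_N^\mu$, since it contains no linear term at $\alpha=\widehat\alpha_N^\mu$. Your perturbation argument therefore yields $\langle\widehat\alpha_N^{\mu,B}-\widehat\alpha_N^\mu,u_N^\tau\rangle_\mu$ directly, and there is nothing to reconcile.

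\textbf{Why the proposed fix would fail.} Assumption~\ref{assum:riesz representer estimator I}(i)--(ii) bounds the inner-product discrepancy by $O_p(\epsilon_N^*)\,\|\widehat\alpha_N^{\mu,B}-\widehat\alpha_N^\mu\|_\mu\,\|u_N^\tau\|_\mu$. The full-norm distance $\|\widehat\alpha_N^{\mu,B}-\widehat\alpha_N^\mu\|_\mu$ is in general controlled only at the sieve rate $\xi_N$ of Lemma~\ref{lem: consistency}, which exceeds $N^{-1/2}$ when $R_N$ grows. Only its projection on $u_N^\tau$ is of order $N^{-1/2}$. The step would thus require $\epsilon_N^*\xi_N=o(N^{-1/2})$ together with Assumption~\ref{assum:riesz representer estimator I}, and neither is a hypothesis of this theorem.
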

\begin{proof}[Proof of Theorem \ref{theo:bootstrap}]\label{proof of Theorem {theo:bootstrap}}
Without losing generality, given Assumption \ref{assum:CP15 boot3.ii}(i), we may assume that ${\widehat{\alpha}_N^{\mu,B}} \pm |\varepsilon| u\in\calN_N^\mu$ and following similar arguments for Corollary \ref{theo:3.1 v3}, we have that 
\begin{align}
&  \left|\sqrt{N}\left\langle{\widehat{\alpha}_N^{\mu,B}} -{\widehat{\alpha}_N^\mu}, u_N^\tau\right\rangle - \sqrt{N} \bar{E}_N^B\left\{D{\ell_{N}^{\mu,B}}(V, {\widehat{\alpha}_N^\mu})\left[u_N^\tau\right]\right\}\right|=o_{p^*}(1) \text{~~wpa1~~} (P), \label{eq:theo2,1} \\
& \frac{\tau(\widehat{\alpha}^{\mu,B}_{N})-\tau({\widehat{\alpha}_N^\mu})}{\|v^\tau_N \|_{s d,\mu}} - \left\langle{\widehat{\alpha}_N^{\mu,B}} -{\widehat{\alpha}_N^\mu}, u_N^\tau\right\rangle=o_p(N^{-1/2}).\label{eq:theo2,2}
\end{align}
Equations (\ref{eq:theo2,1})(\ref{eq:theo2,2}) combined with  (\ref{eq:theo1,1})(\ref{eq:theo1,2}),  and Assumptions \ref{assum:clt v2},  \ref{assum:regular functional} and \ref{assum:CP15 boot3.ii}(iii) then lead to the conclusion.
\end{proof} 
			
\end{appendices}

		\end{document}